\definecolor{mygreen}{RGB}{20,150,80}
\definecolor{myred}{RGB}{150,0,0}
\definecolor{darkblue}{RGB}{0,0,150}
\renewcommand{\epsilon}{\varepsilon}
\DeclareMathOperator{\E}{\ensuremath{\normalfont \textbf{E}}}
\newcommand{\hiddencomment}[1]{}
\newcommand{\true}[0]{\ensuremath{\textup{TRUE}}}
\newcommand{\false}[0]{\ensuremath{\textup{FALSE}}}
\newcommand{\VO}[0]{\ensuremath{\textup{VO}}}
\newcommand{\EO}[0]{\ensuremath{\textup{EO}}}
\newcommand{\RGMIS}[0]{\ensuremath{\textup{RGMIS}}}
\newcommand{\lowest}[0]{\ensuremath{\textup{LOWEST}}}
\newcommand{\cev}[1]{\reflectbox{\ensuremath{\vec{\reflectbox{\ensuremath{#1}}}}}}
\newcommand{\tspsize}[1]{\ensuremath{\tau}(#1)}
\newcommand{\pathcoversize}[1]{\ensuremath{\rho}(#1)}
\newcommand{\algone}[0]{\Cref{alg:path-cover11}}
\newcommand{\algtwo}[0]{\Cref{alg:path-cover22}}
\DeclareMathOperator{\poly}{poly}
\crefname{result}{Result}{Results}
\crefname{lemma}{Lemma}{Lemmas}
\crefname{theorem}{Theorem}{Theorems}
\crefname{property}{Property}{Properties}
\crefname{claim}{Claim}{Claims}
\crefname{problem}{Problem}{Problems}
\crefname{definition}{Definition}{Definitions}
\crefname{observation}{Observation}{Observations}
\crefname{proposition}{Proposition}{Propositions}
\crefname{assumption}{Assumption}{Assumptions}
\crefname{line}{Line}{Lines}
\crefname{figure}{Figure}{Figures}
\crefname{equation}{}{}
\crefname{section}{Section}{Sections}
\crefname{appendix}{Appendix}{Appendices}
\crefname{openproblem}{Open Problem}{Open Problems}
\crefname{algCounter}{Algorithm}{Algorithms}
\Crefname{algCounter}{Algorithm}{Algorithms}
\newtheorem{result}{Result}
\newtheorem{theorem}{Theorem}[section]
\newtheorem{lemma}[theorem]{Lemma}
\newtheorem{proposition}[theorem]{Proposition}
\newtheorem{problem}[theorem]{Problem}
\newtheorem{definition}[theorem]{Definition}
\newtheorem{claim}[theorem]{Claim}
\newtheorem{observation}[theorem]{Observation}
\definecolor{mylightgray}{RGB}{230,230,230}
\crefname{algCounter}{Algorithm}{Algorithms}
\Crefname{algCounter}{Algorithm}{Algorithms}
\algnewcommand{\IIf}[1]{\State\algorithmicif\ #1\ \algorithmicthen}
\algnewcommand{\EndIIf}{\unskip\ \algorithmicend\ \algorithmicif}
\newenvironment{graytbox}{
\par\addvspace{0.1cm}
\begin{tcolorbox}[width=\textwidth,
                  boxsep=5pt,
                  left=1pt,
                  right=1pt,
                  top=2pt,
                  bottom=2pt,
                  boxrule=0pt,
                  arc=0pt,
                  colback=mylightgray,
                  colframe=black,
                  ]
}{
\end{tcolorbox}
}
\newenvironment{whitetbox}{
\par\addvspace{0.1cm}
\begin{tcolorbox}[width=\textwidth,
                  boxsep=5pt,
                  left=1pt,
                  right=1pt,
                  top=2pt,
                  bottom=2pt,
                  boxrule=1pt,
                  arc=0pt,
                  colframe=black,
                  colback=white
                  ]
}{
\end{tcolorbox}
}
\newenvironment{myproof}{
\vspace{-0.5cm}
\begin{proof}
}{
\end{proof}
}
\newcounter{algCounter}
\renewcommand{\paragraph}{%
  \@startsection{paragraph}{4}%
  {\z@}{10pt}{-1em}%
  {\normalfont\normalsize\bfseries}%
}
\title{Sublinear Algorithms for TSP via Path Covers}
 \author{
 Soheil Behnezhad\\{\em Northeastern University} \and 
 Mohammad Roghani\\{\em Stanford University} \and
 Aviad Rubinstein\\{\em Stanford University} \and
 Amin Saberi\\{\em Stanford University}
 }
\date{}
\begin{document}

\maketitle

\begin{abstract}
We study sublinear time algorithms for the {\em traveling salesman problem} (TSP). First, we focus on the closely related {\em maximum path cover} problem, which asks for a collection of vertex disjoint paths that include the maximum number of edges. We show that for any fixed $\epsilon > 0$, there is an algorithm that $(1/2 - \epsilon)$-approximates the maximum path cover size of an $n$-vertex graph in $\widetilde{O}(n)$ time. This improves upon a $(3/8-\epsilon)$-approximate $\widetilde{O}(n \sqrt{n})$-time algorithm of Chen, Kannan, and Khanna [ICALP'20].
    
    \medskip
    Equipped with our path cover algorithm, we give an $\widetilde{O}(n)$ time algorithm that estimates the cost of $(1,2)$-TSP within a factor of $(1.5+\epsilon)$ which is an improvement over a folklore $(1.75 + \epsilon)$-approximate $\widetilde{O}(n)$-time algorithm, as well as a $(1.625+\epsilon)$-approximate $\widetilde{O}(n\sqrt{n})$-time algorithm of [CHK ICALP'20]. For graphic TSP, we present an $\widetilde{O}(n)$ algorithm that estimates the cost of graphic TSP within a factor of $1.83$ which is an improvement over a $1.92$-approximate $\widetilde{O}(n)$ time algorithm due to [CHK ICALP'20, Behnezhad FOCS'21]. We show that the approximation can be further improved to $1.66$ using $n^{2-\Omega(1)}$ time.
    
    \medskip    
    All of our $\widetilde{O}(n)$ time algorithms are information-theoretically time-optimal up to $\poly\log n$ factors. Additionally, we show that our approximation guarantees for path cover and $(1,2)$-TSP hit a natural barrier: We show better approximations require better sublinear time algorithms for the well-studied maximum matching problem.
\end{abstract}

\clearpage

\setcounter{tocdepth}{2}
\tableofcontents
\thispagestyle{empty}

\clearpage

\setcounter{page}{1}

\section{Introduction}

The {\em traveling salesman problem} (TSP) is a central problem in combinatorial optimization. Given a set $V$ of $n$ vertices and their pairwise distances, it asks for a Hamiltonian cycle of the minimum cost. In this paper, we study {\em sublinear time} algorithms for TSP. The algorithm is given query access to the distance pairs, and the goal is to estimate the solution cost in time sublinear in the input size (which is $\Theta(n^2)$).

TSP is NP-hard to approximate within a polynomial factor for an arbitrary distance function. As such, much of the work in the literature has been on more specific distance functions. Some notable examples include {\em graphic TSP} \cite{GharanSS11,MomkeS11,Mucha12,SeboV14,chen2020} where the distances are the shortest paths over an arbitrary unweighted undirected graph, {\em $(1, 2)$-TSP} \cite{AdamaszekMP18,chen2020,BermanK06,karp1972reducibility,MnichM18} where the distances are 1 or 2, and more generally {\em metric TSP} \cite{KarlinKG21,CzumajS-SIAM09,christofides1976worst,serdyukov1978nekotorykh} where the distances satsify triangle inequality. 

In 2003, \citet*{CzumajS-STOC04,CzumajS-SIAM09} showed that for any fixed $\epsilon > 0$, a $(1+\epsilon)$-approximation of the cost of metric minimum spanning tree (MST) and thus a  $(2+\epsilon)$-approximation of the cost of metric TSP can be found in $\widetilde{O}(n)$ time. Twenty years later, it still remains a major open problem to either break two-approximation in $n^{2-\Omega(1)}$ time or prove a lower bound.\footnote{See e.g. Open Problem 71 on \href{https://sublinear.info/index.php?title=Open_Problems:71}{sublinear.info} \cite{sublinear-info}.}
However, better bounds are known for both graphic TSP and $(1,2)$-TSP. In this paper, we present improved algorithms for these two well-studied variants of TSP. Our main tool to achieve this is an improved algorithm for the closely related {\em maximum path cover} problem which might be of independent interest.

\vspace{-0.2cm}
\paragraph{Maximum Path Cover:}  The maximum path cover in a graph is a collection of vertex disjoint paths with the maximum number of edges in it. The (almost) 1/2-approximate maximum matching size estimator of \citet*{behnezhad2021} immediately implies an (almost) 1/4-approximation for the maximum path cover problem in $\widetilde{O}(n)$ time.\footnote{The application of sublinear time maximum matching algorithms for approximating maximum path cover was first proposed by Gupta and Onak. See \cite{sublinear-info}.} This can be improved to an (almost) $(3/8 = .375)$-approximation using the {\em matching-pair} idea of \citet*{chen2020} in $\widetilde{O}(n \sqrt{n})$-time.\footnote{We note that even though a subsequent result of \citet*{behnezhad2021} improved the running time for maximal matchings and graphic TSP from $O(n\sqrt{n})$ in \cite{chen2020} to $\widetilde{O}(n)$, it is not immediately clear whether the same holds for path cover and $(1,2)$-TSP as they rely on a different notion of a matching pair.} Our first main contribution is an improvement over both of these results:

\begin{graytbox}
\begin{result}[Formally as \Cref{thm: disjoint-paths-algorithm}]\label{res:path-cover}
    For any $\epsilon > 0$, there is a randomized algorithm that w.h.p. $(1/2-\epsilon)$-approximates the size of maximum path cover in $\widetilde{O}(n \cdot \poly(1/\epsilon))$ time.
\end{result}
\end{graytbox}

Besides quantitavely improving prior work both in the running time and the approximation ratio, \cref{res:path-cover} reaches a qualitatively important milestone as well. First, the running time of \Cref{res:path-cover} is information-theoretically optimal up to $\poly\log n$ factors (the lower bound holds for any constant approximation --- see \cref{sec:hardness}). Second, its approximation ratio hits a rather important barrier. We give a non-trivial reduction that shows a $(1/2+\Omega(1))$-approximation in $\widetilde{O}(n)$ time for maximum path cover would imply the same bound for maximum matching in bipartite graphs. Such a result has remained elusive for matching, which is one of the most extensively studied problems in the literature of sublinear time algorithms. See \Cref{sec:lowerbound}.

It is also worth noting that in bounding the running time of our algorithm in \cref{res:path-cover}, we use connections to parallel algorithms. Such a connection was previously only used for matchings \cite{behnezhad2021}.

\setlength\extrarowheight{5pt}
\begin{center}
\begin{table}[ht]
\begin{tabular}{ | >{\centering\arraybackslash} m{7em} |  >{\centering\arraybackslash} m{10em}| >{\centering\arraybackslash} m{7em} | >{\centering\arraybackslash} m{14.3em} |}   
  \hline
  Running Time& Approximation Ratio & Metric & Reference\\ 
  \hline
   \hline
   $\widetilde{O}(n)$ & $1.75+\epsilon$ & (1,2) & Folklore\\
   \hline
   $\widetilde{O}(n\sqrt{n})$ & $1.625+\epsilon$ & (1,2) & \citet*{chen2020}\\
   \hline
   $\widetilde{O}(n)$ & $1.5+\epsilon$ & (1,2) & \textbf{This work} (\cref{res:12-TSP})\\
   \hline \hline
   $\widetilde{O}(n)$ & $1.929$ & Graphic & \citet*{chen2020}\\
   \hline
   $\widetilde{O}(n)$ & $1.834$ & Graphic & \textbf{This work} (\cref{res:graphicTSP})\\
   \hline
   $n^{2-\Omega(1)}$ & $1.667$ & Graphic & \textbf{This work} (\cref{res:graphicTSP2})\\
   \hline \hline
   $\Omega(n^2)$& $1 + \epsilon$ & (1,2) \& Graphic & \citet*{chen2020}\\
   \hline
   $n^{1+\Omega(1)}$ (Conditional) & $1.5 - \epsilon$  & (1,2) \& Graphic & \textbf{This work} (\Cref{cor:matching-via-12tsp})\\
   \hline
\end{tabular}
\captionsetup{justification=centering}
\caption{Comparison of running time and approximation ratio of our TSP algorithms and lower bounds with prior work.}
\label{tbl:comparision}
\end{table}
\end{center}

\vspace{-4em}
\paragraph{$(1,2)$-TSP:} The $(1,2)$-TSP problem has been studied extensively in the classical setting. In his landmark paper,  \citet*{karp1972reducibility} showed that $(1,2)$-TSP is NP-hard. \citet*{PapadimitriouY93} then proved its APX-hardness. Since then there has been a significant amount of work on $(1,2)$-TSP in the classical setting.  The current best known inapproximability bound for $(1, 2)$-TSP is 535/534 \cite{KarpinskiS12}. After a series of works, the best known polynomial time approximation is $8/7$ \cite{BermanK06} which can be implemented in $O(n^3)$ time \cite{AdamaszekMP18}. For sublinear time algorithms, an $\widetilde{O}(n)$-time (almost) $1.75$-approximation is folklore \cite{sublinear-info}. \citet*{chen2020} improved the approximation to (almost) $1.625$ in $\widetilde{O}(n\sqrt{n})$ time.

It is not hard to see that up to a small additive error of 1, $(1,2)$-TSP is equivalent to finding a maximum path cover on the weight-1 edges and then connecting their endpoints via weight-2 edges. A simple calculation shows that any $\alpha$-approximation for the maximum path cover problem leads to a $(2-\alpha)$-approximation for $(1,2)$-TSP. Our path cover algorithm of \Cref{res:path-cover} immediately implies the following result as a corollary:
\begin{graytbox}
\begin{result}\label{res:12-TSP}
    For any $\epsilon > 0$, there is a randomized algorithm that w.h.p. $(1.5+\epsilon)$-approximates the cost of $(1,2)$-TSP in $\widetilde{O}(n \cdot \poly(1/\epsilon))$ time.
\end{result}
\end{graytbox}

Similar to \Cref{res:path-cover}, the running time of \Cref{res:12-TSP} is information-theoretically optimal up to $\poly\log n$ factors, and its approximation ratio hits a natural barrier due to a connection to sublinear time matching that we establish in this work.

\paragraph{Graphic TSP:} The graphic TSP problem is equivalent to finding a tour of the minimium size that visits all the vertices. This is an important instance of TSP that has received a lot of attention over the years. For polynomial time algorithms, a $1.5$-approximation of \citet*{christofides1976worst} (which also works more generally for metric TSP) had remained the best known until a series of works over the last decade improved it to $(1.5 - \epsilon_0)$ \cite{GharanSS11}, $1.461$ \cite{MomkeS11}, $1.444$ \cite{Mucha12}, and finally to 1.4 \cite{SeboV14}. For sublinear time algorithms, \citet*{chen2020} showed that an (almost) $(27/14 \approx 1.928)$-approximation of graphic TSP can be obtained in $\widetilde{O}(n \sqrt{n})$ time. The running time was subsequently improved to $\widetilde{O}(n)$ by \citet*{behnezhad2021}.

We first show that plugging \cref{res:path-cover} into the framework of \cite{chen2020} immediately improves their approximation from $1.928$ to (almost) $1.9$ while keeping the running time $\widetilde{O}(n)$. We then give a more fine tuned algorithm that obtains a much improved approximation ratio of $(11/6 \approx 1.833)$.

\begin{graytbox}
\begin{result}\label{res:graphicTSP}
    For any $\epsilon > 0$, there is a randomized algorithm that w.h.p. $(1+\epsilon)(\frac{11}{6} \approx 1.833)$-approximates the cost of graphic TSP in $\widetilde{O}(n \cdot \poly(1/\epsilon))$ time.
\end{result}
\end{graytbox}

Over the past few years, significant advancements have been made in the development of sublinear matching algorithms. Several recent results \cite{BehnezhadRR23focs, behnezhadroghanirubinstein2022,brrstoc24, BehnezhadRRS-ArXiv22, bhattacharya2023dynamic, sayanSublinear2022} have led to the creation of a $(1, \epsilon n)$-approximation algorithm for maximum matching, with running time of $n^{2-\Omega_{\epsilon}(n)}$. Leveraging these sublinear algorithms, we have devised a slightly subquadratic algorithm that provides a more accurate estimation of the size of graphic TSP.

\begin{graytbox}
\begin{result}\label{res:graphicTSP2}
    For any $\epsilon > 0$, there is a randomized algorithm that w.h.p. $(1+\epsilon)(\frac{5}{3} \approx 1.666)$-approximates the cost of graphic TSP in $n^{2-\Omega_{\epsilon}(1)}$ time.
\end{result}
\end{graytbox}

We contrast our results with prior sublinear TSP algorithms in \Cref{tbl:comparision}.

\paragraph{Further related work:} Finally, we note that in a recent paper, \citet*{ChenMetric-Arxiv22} show that assuming that the metric has a spanning tree supported on weight 1 edges, one can obtain a $(2-\epsilon_0)$-approximation with $\widetilde{O}(n\sqrt{n})$ queries for some small unspecified constant $\epsilon_0 > 0$. While this is a more general metric than graphic TSP and (1,2)-TSP that we study in this paper, we note that the two papers are orthogonal and their results are incomparable. In particular, the techniques developed in this paper are specifically designed to improve the approximation to much below 2, whereas \cite{ChenMetric-Arxiv22} focuses on generalizing the distance function while beating 2.

\section{Technical Overview}\label{sec:techniques}

In this section, we give an overview of our algorithms, especially our sublinear time maximum path cover algorithm of \cref{res:path-cover} which is the key to the other results as well.

Let us start with using matchings to approximate maximum path cover. Consider a graph that has a Hamiltonian path. Here, the optimal maximum path cover has size $n-1$. On the other hand, any maximum matching can have at most $n/2$ edges, which is by a factor 2 smaller than our optimal path cover. On top of this, we only know close to 1/2 approximations for maximum matching if we restrict the running to be close to linear in $n$ \cite{behnezhad2021,BehnezhadRRS-ArXiv22}, thus can only achieve an approximation close to 1/4.

Instead of a single matching, \citet*{chen2020} showed how to estimate the number of edges in a {\em maximal matching pair} in $\widetilde{O}(n\sqrt{n})$ time, where a matching pair is simply two edge disjoint matchings. It is not hard to see that the number of edges in a maximal matching pair is at least half the number of edges in a maximum path cover. The problem, however, is that a maximal matching pair is not a collection of paths! In particular, the two matchings can form cycles of length as small as four. Therefore, one may only be able to use 3/4 fraction of the edges of a matching pair in a path cover. This is precisely why the algorithm of \cite{chen2020} only obtains a $\frac{1}{2} \times \frac{3}{4} = \frac{3}{8}$ approximation for path cover, and a $2-\frac{3}{8} = 1.625$ approximation for $(1,2)$-TSP.  

If we could modify the matching pair algorithm of \cite{chen2020}, and avoid cycles by manually excluding edges whose endpoints are the endpoints of a path in the current matching pair, then we could avoid the 3/4 factor loss discussed above and achieve a 1/2-approximation. Unfortunately, checking whether the endpoints of an edge are endpoints of a path requires knowledge about whether a series of other edges belong to the solution, which seems hard to implement in sublinear time. 

Instead of checking for cycles manually, we introduce the following \Cref{alg:path-cover11} which avoids cycles more naturally. While our final algorithm is a modified variant of \Cref{alg:path-cover11} described below, we start with \Cref{alg:path-cover11} as we believe it provides the right intuition.

\begin{algorithm}[H]
\caption{A new algorithm for path cover.}
\label{alg:path-cover11}

	Initialize $P \gets \emptyset$.
	
	Each vertex $v$ has two {\em ports} that we denote by $v^0$ and $v^1$. Each of these ports throughout the algorithm will be either {\em free} or {\em occupied}. Initially, all ports are free.
	
	Iterate over the edges in some ordering $\pi$. Upon visiting an edge $e=(u, v)$: 
	
	{
	\begin{itemize}[itemsep=-3pt,topsep=0pt,leftmargin=10pt]
	    \item If $v^0$ and $u^0$ are free, add $e$ to $P$, mark $v^0$ and $u^0$ as occupied, and skip to the next edge.
        \item If $v^1$ and $u^0$ are free, add $e$ to $P$, mark $v^1$ and $u^0$ as occupied, and skip to the next edge.
        \item If $v^0$ and $u^1$ are free, add $e$ to $P$, mark $v^0$ and $u^1$ as occupied, and skip to the next edge.
	\end{itemize}
	}
	
    Return $P$.
\end{algorithm}

Two properties of \Cref{alg:path-cover11} are crucial. First, it prioritizes occupying $(u^0, v^0)$ (compared to $(u^1, v^0)$ or $(u^0, v^1)$) which in particular implies that any component in $P$ must have a $(u^0, v^0)$ edge. Second, it never occupies $(u^1, v^1)$ with an edge $(u, v)$.  While it is easy to see that the output of \Cref{alg:path-cover11} has maximum degree 2, and is thus a collection of paths or cycles, the two properties above actually guarantee that it never includes any cycle. See \cref{fig:no-cycle}. We provide the formal proof of this later in \cref{sec:pathcover-algs}. Additionally, we show that the output of \Cref{alg:path-cover11} must be at least half the size of a maximum path cover, as we prove next. Hence, if we manage to estimate the size of the output $P$ of \Cref{alg:path-cover11}, then we have proved \Cref{res:path-cover}.

\begin{figure}[h]
    \centering
    \includegraphics[scale=0.85]{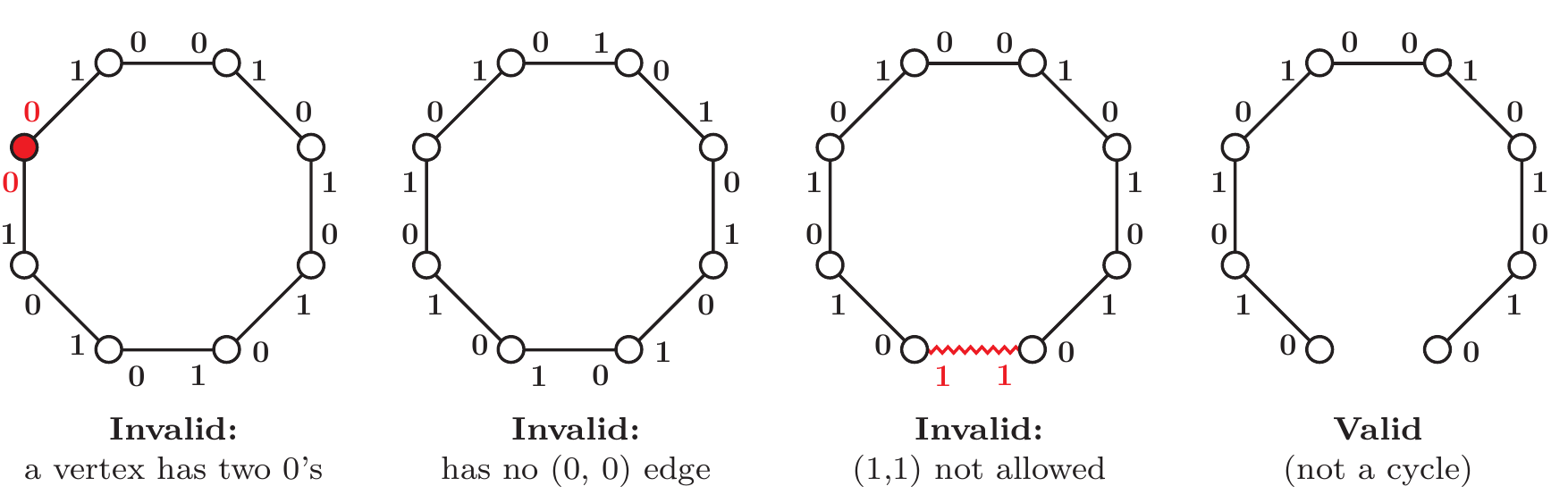}
    \caption{Examples of why the output of \Cref{alg:path-cover11} will not have cycles.}
    \label{fig:no-cycle}
\end{figure}

Our final algorithm is slightly different from \algone{} discussed above. In particular, we slightly relax it---see \algtwo{}---so that it can be solved via a randomized greedy maximal independent set (RGMIS), for which we have a rich toolkit of sublinear time estimators. Existing approaches (particularly the algorithm of \citet*{YoshidaYI09} and its two-step implementation by \citet*{chen2020}) can be employed to estimate the value of this modified \algtwo{} in $\widetilde{O}(n\sqrt{n})$ time. We achieve the improved, and near tight, $\widetilde{O}(n)$ time bound guarantee of \Cref{res:path-cover} by building on the analysis of \citet*{behnezhad2021} for maximal independent set on the line graphs (i.e., maximal matchings). Though we note that several new ideas are needed, because the MIS graph in our case will not be exactly a line graph. We defer more discussions about this to \cref{sec:pathcover-algs,sec:query-process}. 

\paragraph{Implications for TSP:} By having an $\alpha$-approximate maximum path cover algorithm, we immediately obtain a $(2-\alpha)$-approximation for $(1,2)$-TSP. Therefore, the algorithm above immediately proves \Cref{res:12-TSP} that we can (almost) $1.5$-approximate $(1,2)$-TSP in $\widetilde{O}(n)$ time. For our \Cref{res:graphicTSP} for graphic TSP, we first observe that our improved path cover algorithm can be employed to provide a better lower bound for the optimal TSP solution. This improves the 1.92-approximation of \cite{chen2020} as black-box to 1.9-approximation (\cref{sec:graphic-TSP-minor}). However, the final improvement to 1.83 requires more ideas, in particular, on how to better estimate the number of certain {\em bridges} in the graph. See \cref{sec:graphic-TSP-major} for more details about this.


\section{Preliminaries}

\paragraph{Problem Definition:} In the sublinear TSP problem, we have a set $V$ of $n$ vertices and a distance function $d: V \times V \to \mathbb{R}_+$. The algorithm has query access to this distance function. Namely, for any pair $(u, v)$ of the vertices of its choice, the algorithm may query the value of $d(u, v)$. The goal is to design an algorithm that runs in sublinear time in the input size, which is $\Theta(n^2)$ (all the distance pairs), and produces an estimate of the size of the optimal TSP solution. Denoting the optimal TSP value by $\tspsize{V}$, we say an estimate $\widetilde{\tau}(V)$ provides an $\alpha$-approximation for $\alpha \geq 1$ if
$$
    \tspsize{V} \leq \widetilde{\tau}(V) \leq \alpha \cdot \tspsize{V}.
$$
We focus specifically on {\em graph TSP} and {\em $(1,2)$-TSP} problems. In graphic TSP, the distance function $d$ is the shortest path metric on an unweighted undirected graph $G$ that is unknown to the algorithm. Note, however, that the distance queries essentially provide {\em adjacency matrix} access to this graph $G$. In $(1,2)$-TSP, the assumption is that $d(u, v) \in \{1,2\}$ for every pair $u, v$. In the case of $(1,2)$-TSP we may use $G$ to refer to the subgraph induced on the pairs with distance 1.

Defining graph $G$ as above, we use $n$ to denote the number of its vertices, $m$ to denote the number of its edges, $\Delta$ to denote its maximum degree, $\mu(G)$ to denote its maximum matching size, $\nu(G)$ to denote its minimum vertex cover size, and $\bar{d}$ to denote its average degree.

\paragraph{Path Cover Definitions:} Given an unweighted graph $G$, a path cover in $G$ is a collection of vertex disjoint paths in $G$. A maximum path cover is a path cover of $G$ with the maximum number of edges in it (note that we are not counting the number of paths, but rather the total number of edges in them). We use $\pathcoversize{G}$ to denote the size of the maximum path cover in $G$. We say an estimate $\widetilde{\rho}(G)$ for $\pathcoversize{G}$ provides an $(\alpha, \epsilon)$-approximation for $\alpha, \epsilon \in [0, 1]$ if
$$
    \alpha \cdot \pathcoversize{G} - \epsilon n \leq \widetilde{\rho}(G) \leq \pathcoversize{G}.
$$
We may also use $\alpha$-approximation instead of $(\alpha, 0)$-approximation.

\paragraph{Graph Theory Definitions/Tools:} A {\em bridge (cut edge)} in a graph is an edge whose deletion increases the number of connected components. Similarly, a {\em cut vertex} is a vertex whose deletion (along with its edges)  increases the number of connected components. A {\em biconnected graph} is a connected graph with no cut vertex. Also, a {\em biconnected component (block)} of a graph is a maximal biconnected subgraph of the original graph. A non-trivial biconnected component is a block that is not a bridge. We say a graph is {\em 2-edge-connected} if there is no bridge in the graph. A {\em 2-edge-connected component} of a graph is maximal 2-edge-connected subgraph of the original graph. The {\em bridge-block tree} of a graph is a tree obtained by contracting the 2-edge-connected components; note that the edge set of a bridge-block tree correspond to the bridges in the original graph.

We use the following classic theorem of K\"{o}nig \cite{Konig1916} that the size of the minimum vertex cover is equal to the size of maximum matching in bipartite graphs. Namely:

\begin{proposition}[K\"{o}nig’s Theorem]\label{prop:konig}
    In any bipartite graph $G$,  $\mu(G) = \nu(G)$.
\end{proposition}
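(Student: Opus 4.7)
The plan is to prove the two inequalities $\mu(G) \le \nu(G)$ and $\nu(G) \le \mu(G)$ separately. The first direction is immediate: given any vertex cover $C$ and any maximum matching $M$, each edge of $M$ must have an endpoint in $C$, and since the edges of $M$ are pairwise vertex-disjoint, $|C| \ge |M|$. The non-trivial direction requires exhibiting a vertex cover of size exactly $|M|$, which I would do via the classical alternating-paths construction.

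Fix a bipartition $V = A \cup B$ and a maximum matching $M$. Let $U \subseteq A$ be the set of $A$-vertices unmatched by $M$, and let $Z$ be the set of all vertices reachable from $U$ via $M$-alternating paths (starting with a non-matching edge and alternating non-matching/matching thereafter). I would take the candidate vertex cover to be $C := (A \setminus Z) \cup (B \cap Z)$, and then verify two facts: (i) $C$ is a vertex cover, and (ii) $|C| = |M|$.

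For (i), the plan is to show via contradiction that no edge $(a,b)$ with $a \in A \cap Z$ and $b \in B \setminus Z$ can exist, by extending an alternating path from $U$ to $a$ through the edge $(a,b)$. The short case analysis splits on whether $a \in U$ and on whether $(a,b) \in M$; in every case the extension is a legitimate alternating path to $b$, forcing $b \in Z$. For (ii), I would use $M$ itself as a bijection: every $a \in A \setminus Z$ is matched (since $U \subseteq Z$) to some $b \in B$ that must lie outside $Z$ (else the matching edge would pull $a$ into $Z$), and every $b \in B \cap Z$ is matched to some $a \in A \cap Z$ by the same extension argument. Thus $M$ pairs each vertex of $C$ with a distinct vertex outside $C$, giving $|C| \le |M|$, and combined with (i), $\nu(G) \le |C| \le |M| = \mu(G)$.

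The main delicate step is the claim that every $b \in B \cap Z$ is in fact matched: an unmatched such $b$ would yield an $M$-augmenting path from $U$ to $b$, contradicting the maximality of $M$. This is also the sole place where bipartiteness is essential—without it, alternating paths can close into odd cycles, and the bijection argument (together with K\"{o}nig's theorem itself) genuinely fails in non-bipartite graphs.
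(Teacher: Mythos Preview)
Your proof is correct and is essentially the standard alternating-paths argument for K\"{o}nig's theorem. However, the paper does not actually prove this proposition: it is stated as a classical result with a citation to K\"{o}nig~\cite{Konig1916} and used as a black box (specifically in the proof of \Cref{clm:matching-new-graph}). So there is nothing to compare against; your write-up simply supplies a proof where the paper chose to cite one.

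One small remark on your exposition: in part~(ii) you phrase the counting as ``$M$ pairs each vertex of $C$ with a distinct vertex outside $C$, giving $|C| \le |M|$.'' This is fine for the inequality you need, but in fact the argument gives the exact equality $|C| = |M|$ directly, since every edge of $M$ has \emph{exactly} one endpoint in $C$ (for a matching edge $(a,b)$ one checks $a \in Z \iff b \in Z$, so either $a \in A \setminus Z$ or $b \in B \cap Z$ but not both). You then recover the equality anyway by combining with $\mu(G) \le \nu(G) \le |C|$, so the argument is complete as written.
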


\paragraph{Probabilistic Tools:} In our proofs, we use the following standard concentration inequalities.

\begin{proposition}[Chernoff Bound]\label{prop:chernoff}
    Let $X_1, X_2, \ldots, X_n$ be independent Bernoulli random variables. Let $X = \sum_{i=1}^{n} X_i$. For any $t > 0$, 
    $\Pr[|X - \E[X]| \geq t] \leq 2\exp\left(-\frac{t^2}{3\E[X]}\right).$
\end{proposition}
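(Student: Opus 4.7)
The plan is to use the textbook Chernoff-Hoeffding exponential-Markov method, proving one-sided tail bounds separately and then combining them by a union bound to account for the factor of $2$. Set $\mu := \E[X]$ and fix $\lambda > 0$ to be chosen later. By Markov's inequality applied to the nonnegative random variable $e^{\lambda X}$,
\[
\Pr[X \geq \mu + t] \;=\; \Pr\!\bigl[e^{\lambda X} \geq e^{\lambda(\mu + t)}\bigr] \;\leq\; e^{-\lambda(\mu + t)}\, \E\!\bigl[e^{\lambda X}\bigr].
\]
Independence of the $X_i$ factorizes the moment generating function as $\E[e^{\lambda X}] = \prod_i \E[e^{\lambda X_i}] = \prod_i \bigl(1 + p_i(e^\lambda - 1)\bigr)$, where $p_i := \Pr[X_i = 1]$, and applying $1 + x \leq e^x$ to each factor bounds this by $\exp\bigl(\mu(e^\lambda - 1)\bigr)$.

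Choosing $\lambda^\star := \ln(1 + t/\mu)$, the minimizer of the resulting expression, yields the classical bound $\Pr[X \geq \mu + t] \leq \exp\bigl(-\mu \cdot h(t/\mu)\bigr)$ with $h(\delta) := (1+\delta)\ln(1+\delta) - \delta$. To convert this into the target form $\exp(-t^2/(3\mu))$, I would invoke the elementary inequality $h(\delta) \geq \delta^2/3$, which is easily verified by Taylor-expanding $h$ around $0$ and checking derivatives. This produces $\Pr[X \geq \mu + t] \leq \exp\bigl(-t^2/(3\mu)\bigr)$. A symmetric argument with $\lambda < 0$ handles the lower tail: using $\E[e^{-\lambda X_i}] \leq \exp(p_i(e^{-\lambda} - 1))$ and optimizing similarly gives $\Pr[X \leq \mu - t] \leq \exp(-\mu \cdot g(t/\mu))$ with $g(\delta) := (1-\delta)\ln(1-\delta) + \delta \geq \delta^2/2$, which is even stronger than the target. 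Adding the two one-sided probabilities yields the factor of $2$ in the statement.

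The only delicate point, and the one place where a little care is needed, is the range of $t$: the inequality $h(\delta) \geq \delta^2/3$ is standard on $\delta \in [0, 1]$, i.e. $t \leq \mu$, which is the regime in which the stated bound is meaningful; for $t > \mu$ the upper-tail version needs a separate (easy) verification using that $h(\delta) - \delta^2/3$ remains nonnegative for all $\delta \geq 0$, while the lower tail is automatically controlled since $\Pr[\mu - X \geq t] = 0$ when $t > \mu$. I do not foresee any substantive obstacle beyond this routine analytic check — the remainder is a direct assembly of the standard steps above.
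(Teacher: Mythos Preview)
The paper does not prove this proposition; it is stated without proof in the preliminaries as a standard probabilistic tool. Your exponential-moment derivation is indeed the textbook route, and the steps through the optimized choice $\lambda^\star = \ln(1+t/\mu)$ and the bound $\Pr[X \ge \mu+t] \le \exp(-\mu\, h(t/\mu))$ are correct.

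However, there is a genuine gap in your treatment of the regime $t > \mu$. You assert that the ``easy verification'' reduces to checking $h(\delta) - \delta^2/3 \ge 0$ for all $\delta \ge 0$, but this inequality is \emph{false}: already at $\delta = 2$ one has $h(2) = 3\ln 3 - 2 \approx 1.296 < 4/3 = \delta^2/3$, and the gap only widens for larger $\delta$. The standard multiplicative Chernoff bound gives the $\exp(-\delta^2\mu/3)$ form only for $0 < \delta \le 1$; for $\delta > 1$ the correct exponent weakens to $-\delta\mu/3$. In fact, the proposition as literally written in the paper (``for any $t > 0$'') is not correct: a single Bernoulli with $p = 0.01$ and $t = 0.99$ gives $\Pr[|X-\mu|\ge t] = 0.01$, while $2\exp(-t^2/(3\mu)) = 2\exp(-32.67) \approx 10^{-14}$. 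The paper only invokes the bound with $t = \sqrt{6\mu\log n}$, where the $\delta \le 1$ form is the intended one, so the looseness in the statement is harmless in context --- but your proposed extension of the proof to all $t > 0$ does not go through.
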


\begin{proposition}[Hoeffding’s Inequality]\label{prop:hoeffding}
    Let $X_1, X_2, \ldots, X_n$ be independent random variables such that $a \leq X_i \leq b$. Let $\bar{X} = (\sum_{i=1}^n X_i)/n$. For any $t > 0$, 
    $
    \Pr[|\bar{X} - \E[X]| \geq t] \leq 2\exp\left(-\frac{2nt}{(b-a)^2}\right).
    $
\end{proposition}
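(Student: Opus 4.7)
The plan is to use the classical moment-generating function (MGF) argument that Hoeffding introduced. First, I would reduce to a one-sided tail bound. Writing $S_n = \sum_{i=1}^n X_i$, the event $\{\bar X - \E[\bar X] \ge t\}$ is the same as $\{S_n - \E[S_n] \ge nt\}$, and for any $\lambda > 0$ Markov's inequality applied to the nonnegative random variable $e^{\lambda (S_n - \E[S_n])}$ gives
\[
\Pr\!\bigl[S_n - \E[S_n] \ge nt\bigr] \;\le\; e^{-\lambda n t}\, \E\!\bigl[e^{\lambda(S_n - \E[S_n])}\bigr] \;=\; e^{-\lambda n t} \prod_{i=1}^n \E\!\bigl[e^{\lambda(X_i - \E[X_i])}\bigr],
\]
where the factorization uses independence of the $X_i$.

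Second, I would establish the key technical lemma (\emph{Hoeffding's lemma}): for any random variable $Y$ with $\E[Y]=0$ and $Y \in [\alpha, \beta]$, the bound $\E[e^{\lambda Y}] \le \exp(\lambda^2 (\beta-\alpha)^2 / 8)$ holds. The proof uses convexity of $y \mapsto e^{\lambda y}$ to upper-bound $e^{\lambda Y}$ by the affine interpolant between $(\alpha, e^{\lambda\alpha})$ and $(\beta, e^{\lambda\beta})$, then takes expectations (using $\E[Y]=0$) to reduce matters to bounding a deterministic function of the form $g(\lambda) = \log\!\bigl(p\, e^{\lambda \alpha} + q\, e^{\lambda \beta}\bigr)$ with $p\alpha + q\beta = 0$. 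Showing $g''(\lambda) \le (\beta-\alpha)^2/4$ uniformly (this is essentially the ``variance of a Bernoulli is at most one quarter'' estimate) and integrating twice from $\lambda = 0$ yields the exponent $(\beta-\alpha)^2/8$. Applied to $Y_i = X_i - \E[X_i]$, whose range has length at most $b-a$, this gives $\E[e^{\lambda Y_i}] \le \exp(\lambda^2 (b-a)^2 / 8)$ for every $i$.

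Third, combining the two steps yields
\[
\Pr\!\bigl[S_n - \E[S_n] \ge nt\bigr] \;\le\; \exp\!\Bigl(-\lambda n t + n \lambda^2 (b-a)^2 / 8\Bigr),
\]
and minimizing the right-hand side over $\lambda > 0$ (the optimizer is $\lambda^\star = 4t/(b-a)^2$) produces the bound $\exp(-2nt^2/(b-a)^2)$. Applying the identical argument to $-X_i$ gives the matching lower-tail bound, and a union bound over the two tails accounts for the factor of $2$ in the stated inequality.

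The main technical obstacle is Hoeffding's lemma itself; the Chernoff reduction, the use of independence to factor the MGF, and the one-dimensional optimization over $\lambda$ are all routine. The crucial step is the uniform second-derivative bound on $g$, which is what turns the trivial range bound $|Y| \le \beta - \alpha$ into the quadratic exponent needed for a sub-Gaussian tail; without this refinement one only obtains the weaker Hoeffding--Azuma style bound with a suboptimal constant.
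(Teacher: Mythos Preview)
The paper does not supply a proof of this proposition; it is listed among the ``Probabilistic Tools'' in the preliminaries as a standard fact and invoked as a black box (in the proof of \Cref{lem:bridge-estimator}). Your outline is the classical MGF/Chernoff argument together with Hoeffding's lemma and is correct.

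One remark on the statement itself: your argument yields the exponent $-2nt^{2}/(b-a)^{2}$, whereas the proposition as written has $-2nt/(b-a)^{2}$. This is a typo in the paper; the application in \Cref{lem:bridge-estimator} (with $t=\epsilon/4$, $b-a=4/\epsilon$, and sample size $r$) evaluates to $\exp(-r\epsilon^{4}/128)$, which matches the $t^{2}$ version you derived and not the $t$ version as stated. So you have proved the intended inequality.
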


\section{New Meta Algorithms for Maximum Path Cover}\label{sec:pathcover-algs}

In this section, we present a new meta algorithm for maximum path cover that obtains a $1/2$-approximation. The algorithm, as we will state it in this section, will not be particularly in the sublinear time model. We discuss its sublinear time implementation later in \Cref{sec:query-process,sec: disjoint-paths-estimate}.

Our starting point is the \algone{} described in \cref{sec:techniques}. Let us first formally prove that it obtains a $1/2$-approximation, and that no component in it is a cycle.

\begin{claim}
    The output of \algone{} is a collection of disjoint paths.
\end{claim}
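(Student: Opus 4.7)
The plan is to establish two properties of the output $P$: that every vertex has degree at most $2$, and that $P$ contains no cycle. Together these force $P$ to be a disjoint union of paths. The degree bound is immediate from the port mechanism: each vertex has only two ports, a port becomes occupied at most once, and each of the three branches of the algorithm occupies exactly one port of each endpoint of the edge being added. So the degree of any vertex in $P$ is at most the number of its ports, namely $2$.

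The main work lies in showing $P$ is acyclic. I would argue by contradiction: suppose $P$ contains a cycle $C$ on vertices $w_0, \ldots, w_{k-1}$. First, since every $w_i$ already has its two cycle edges in $P$ and has degree at most $2$, both ports of $w_i$ are consumed by those two cycle edges, and in particular no non-cycle edge of $P$ is incident to any $w_i$. Second, because the algorithm never uses the port pair $(u^1, v^1)$, every edge of $P$ occupies at least one port indexed by $0$; in particular, each of the $k$ cycle edges occupies at least one of the $k$ ``$0$''-ports $w_0^0, \ldots, w_{k-1}^0$. Since these ports are each occupied at most once, a pigeonhole argument forces equality throughout: every cycle edge occupies exactly one $0$-port, never two.

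To derive a contradiction, I would then examine $e_1 = (u, v)$, the edge of $C$ that appears earliest in the processing order $\pi$. At the moment $e_1$ is processed, no other cycle edge has been processed yet, and by the first observation every edge of $P$ incident to $u$ or $v$ is a cycle edge. Consequently, all four ports $u^0, v^0, u^1, v^1$ are still free when $e_1$ is processed, so the highest-priority branch of the algorithm fires and $e_1$ is inserted using the pattern $(u^0, v^0)$, occupying \emph{two} $0$-ports. This contradicts the pigeonhole conclusion above. The step I expect to be most delicate is this time-of-processing argument: one must carefully distinguish the final state of $P$ from its state at the moment $e_1$ is processed, and leverage the degree-$2$ saturation at $u$ and $v$ in the final $P$ to rule out any earlier edge consuming their ports.
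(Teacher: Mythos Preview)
Your proof is correct. The approach differs from the paper's in an interesting way. The paper argues via an invariant: at any point during the algorithm, every degree-one vertex has its $0$-port occupied (this follows because rules~2 and~3 can only fire when rule~1 has already failed, forcing the relevant $0$-port to be pre-occupied). Then, at the moment a cycle-closing edge $e=(u,v)$ is processed, both endpoints have degree one, hence both $u^0$ and $v^0$ are occupied, and since every rule requires at least one of $u^0, v^0$ to be free, $e$ cannot be added.

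Your route instead works with the final output: a pigeonhole count on a putative cycle shows no cycle edge can occupy two $0$-ports, and then you derive a contradiction by examining the \emph{earliest} cycle edge (which, with all four ports free, must trigger rule~1 and occupy two $0$-ports). This is slightly longer but is arguably more robust: the same pigeonhole argument is exactly what the paper later uses (Observation~\ref{obs:no-zero-cycle}) to analyze cycles in \algtwo{}, where the priority-based invariant no longer holds cleanly. So your argument anticipates the technique needed downstream, whereas the paper's proof exploits the specific priority structure of \algone{}.
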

\begin{myproof}
    Since $P$ has maximum degree two, it suffices to show none of its connected components are cycles.  Property $(i)$  above implies that at any point during the algorithm, any degree one vertex $v$ has its port $v^0$ occupied. Now take an edge $e=(u, v)$ that forms a cycle if added to $P$. Both $u$ and $v$ must have degree one and so $u^0$ and $v^0$ are occupied. Since by property $(ii)$ edge $e$ does not occupy both $v^1$ and $u^1$, the algorithm does not add $e$ to $P$ thus not completing a cycle.
\end{myproof}

\begin{claim}\label{clm: alg-one-approx}
    Let $P^\star$ be {\em any} path cover using weight one edges. Then the output of \algone{} has size at least $\frac{1}{2}|P^\star|$.
\end{claim}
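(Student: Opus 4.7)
The plan is to show $|P^\star|\le 2|P|$ via a Hall-type matching argument: I will match each edge of $P^\star$ to a distinct \emph{half-edge} of $P$, namely a pair $(f,w)$ with $f\in P$ and $w$ an endpoint of $f$. Since $P$ has exactly $2|P|$ half-edges, the existence of such a matching immediately yields the bound.

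The first step is to understand which edges of $P^\star$ can be absent from $P$. Since the algorithm processed every edge, each $e^\star=(u,v)\in P^\star\setminus P$ must have been rejected, i.e., all three port configurations were blocked when $e^\star$ was considered. A short case analysis shows this is equivalent to at least one of the port pairs $\{u^0,u^1\}$, $\{u^0,v^0\}$, $\{v^0,v^1\}$ being fully occupied; and since occupied ports remain occupied, the same conditions hold at termination. Writing $V_0(P), V_1(P), V_2(P)$ for the sets of vertices of degree $0, 1, 2$ in the final $P$, this yields two structural facts about $P^\star$: (a) every edge of $P^\star$ has at least one endpoint in $V_1(P)\cup V_2(P)$, and (b) every edge of $P^\star$ with an endpoint in $V_0(P)$ has its other endpoint in $V_2(P)$. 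In particular, $P^\star$ contains no $V_0V_0$ or $V_0V_1$ edges.

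Next I would form the bipartite graph $B$ whose left side is $P^\star$, whose right side is the set of half-edges of $P$, and where $e^\star$ is joined to every half-edge $(f,w)$ with $w$ an endpoint of $e^\star$. It suffices to verify Hall's condition on $B$. For any $S\subseteq P^\star$, writing $V_i^S := V_i(P)\cap V(S)$, we have $|N(S)| = \sum_{w\in V(S)} \deg_P(w) = |V_1^S|+2|V_2^S|$. Since $S\subseteq P^\star$ is acyclic, $|V(S)| = |S|+p_S$, where $p_S$ is the number of path components of $S$; so Hall's condition $|N(S)|\ge |S|$ rearranges to $|V_0^S|\le |V_2^S|+p_S$.

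The main technical step---and the main obstacle---is proving this last inequality. I would argue it path by path: $|V_0^\pi|\le |V_2^\pi|+1$ for each path $\pi$ in $S$. The key is structural fact (b): in any path of $P^\star$, a $V_0$ vertex has all its path-neighbors in $V_2$. So consecutive $V_0$ vertices along $\pi$ are separated by at least one $V_2$ vertex, and each $V_0$ strictly internal to $\pi$ is flanked on both sides by a $V_2$ vertex. The care required is to avoid double-counting the flanking $V_2$ vertices between consecutive $V_0$ vertices versus those at the ends of $\pi$, but this is routine once set up correctly. Summing the per-path bound gives Hall's condition, hence a matching of $B$ saturating $P^\star$, hence $|P^\star|\le 2|P|$.
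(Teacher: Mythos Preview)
Your proof is correct, but it takes a considerably longer route than the paper's. Both arguments rest on the same structural observation---your facts (a) and (b) are a repackaging of the statement that every edge $(u,v)\in P^\star$ satisfies $\deg_P(u)+\deg_P(v)\ge 2$, i.e., at least two of the four ports $u^0,u^1,v^0,v^1$ are occupied at termination. The paper derives this in one line: if at most one port were occupied when $(u,v)$ was processed, then one of the three allowed port configurations would have been available and the edge would have been added.

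From there the paper avoids Hall entirely with a two-line double count. Setting $\phi(e)=\tfrac14(\deg_P(u)+\deg_P(v))$ for $e=(u,v)\in P^\star$, the observation gives $\phi(e)\ge \tfrac12$, so $\sum_{e\in P^\star}\phi(e)\ge |P^\star|/2$; on the other hand, since $\deg_{P^\star}(v)\le 2$,
\[
\sum_{e\in P^\star}\phi(e)=\tfrac14\sum_{v}\deg_{P^\star}(v)\deg_P(v)\le \tfrac12\sum_v\deg_P(v)=|P|.
\]
Your Hall-based argument recovers the same bound and even yields a bit more (an explicit injection of $P^\star$ into the half-edges of $P$), but the per-path verification of $|V_0^\pi|\le |V_2^\pi|+1$ is machinery that the direct charging argument sidesteps entirely.
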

\begin{myproof}
    For any edge $e=(u, v) \in P^\star$ define $\phi(e) = \frac{1}{4}(\deg_P(u) + \deg_P(v))$. We first claim that for every edge $e=(u, v)$ in $G$, we have $\phi(e) \geq 1/2$ (or, equivalently, $\deg_P(u)+\deg_P(v) \geq 2$). This is clear for edges $e \in P$ due to the contribution of $e$ itself to its endpoints' degrees, so fix $e \not\in P$. Consider the time that we process $e=(u, v)$ in the algorithm and decide not to add it to $P$. We claim that out of $v^0, v^1, u^0, u^1$ at least two ports must be occupied. Suppose w.l.o.g. and for contradiction that only $v^x$ is occupied for $x \in \{0, 1\}$. Then $(u, v)$ can occupy $v^{1-x}$ and $u^x$ and be added to $P$. This contradicts $(u, v)$ not being added to $P$ and proves our claim that $\phi(e) \geq 1/2$. 
    
    From the discussion above, we get that
    $$
        \sum_{e \in P^\star} \phi(e) \geq \sum_{e \in P^\star} 1/2 = |P^\star|/2.
    $$
    
    Moreover, because every vertex has degree at most two in $P^\star$, we get 
    $$
    \sum_{e \in P^\star} \phi(e) = \frac{1}{4} \sum_{(u, v) \in P^\star} \deg_P(u)+\deg_P(v) \leq \frac{1}{4} \cdot 2 \sum_{v \in V} \deg_P(v) = |P|.
    $$
    
    The two inequalities above combined imply that $|P| \geq |P^\star|/2$.
\end{myproof}

As discussed, our final algorithm is different from \algone{} discussed above. One problem with \algone{} is that it cannot be cast as an instance of the randomized greedy maximal independent set (RGMIS) algorithm for which there is a rich toolkit of sublinear time estimators. To remedy this, we present a modified variant of \algone{} whose output is (almost) as good, but in addition can be modeled as an instance of RGMIS. We denote the output of RGMIS on a graph $G$ with a permutation $\pi$ on its vertices by $\RGMIS(G, \pi)$.

The algorithm is stated below as \algtwo{}. Similar to the output of \algone{}, the output of \algtwo{} can be verified to have maximum degree two. Thus, it is a collection of paths and cycles. But unlike \algone{}, the output of \algtwo{} can have cycles. This happens since, unlike \algone{}, each connected component of the output of \algtwo{} is not guaranteed to have an edge $(u, v)$ occupying both $u^0$ and $v^0$. Nonetheless, we are able to show that this bad event only happens for a small fraction of connected components of the output of \algtwo{} in expectation, and so once we remove one edge of each of these cycles, the resulting collection of disjoint paths has almost the same size.

\begin{algorithm}[H]
\caption{A modification of \algone{} that uses RGMIS.}
\label{alg:path-cover22}

\textbf{Parameter:} $K$ (think of it as a large constant integer).

Let $G=(V, E)$ be the subgraph of weight one edges. We construct a graph $H=(V_H, E_H)$ from $G$ on which we run RGMIS.
    
Each vertex in $H$ corresponds to an edge $e$ in $G$ and two {\em ports} (as in \algone{}) of the endpoints of $e$ that it occupies. Formally, for any $(u, v) \in E$ we have $K + 2$ vertices in $H$:
{
    \begin{itemize}[itemsep=-2pt,topsep=2pt]
        \item One vertex that corresponds to occuping $u^0$ and $v^1$.
        \item One vertex that corresponds to occuping $u^1$ and $v^0$.
        \item $K$ vertices that each corresponds to occuping $u^0$ and $v^0$.
    \end{itemize}
}

Consider two distinct vertices $a$ and $b$ in $H$ corresponding to edges $e_a$ and $e_b$ in $G$:
  
{ 
\begin{itemize}[itemsep=-2pt,topsep=2pt]
        \item If $e_a = e_b$ then we add an edge between $a$ and $b$ in $H$.
        \item If $e_a$ and $e_b$ share exactly one endpoint $v$ and both $a$ and $b$ occupy the same port of $v$, we add an edge between $a$ and $b$ in $H$.
    \end{itemize}
}
    
Find a randomized greedy maximal independent set $I$ of $H$.
    
Let $P$ be the set of edges in $G$ corresponding to the vertices in $I$.
    
Return $P$.

\end{algorithm}

\begin{observation}\label{obs:no-zero-cycle}
Let $C$ be a connected component in the output of \algtwo{}. If $C$ is a cycle, then every edge in $C$ occupies one 0-port and one 1-port (that is, no edge occupies two 0-ports).
\end{observation}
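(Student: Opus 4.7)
The plan is a simple counting argument on the port occupancies around the cycle. I start by recalling that vertices of $H$ correspond to edges of $G$ together with one of three port-occupancy patterns: $\{u^0, v^1\}$ or $\{u^1, v^0\}$ (call these \emph{mixed}), and $\{u^0, v^0\}$ (call this \emph{double-zero}). Crucially, no vertex of $H$ corresponds to the pattern $\{u^1, v^1\}$. The edges of $H$ are placed precisely so that two vertices whose patterns share the same port of the same vertex in $G$ are adjacent. Hence in the output $P$---whose edges correspond to the RGMIS $I$ of $H$---each vertex $v \in V$ has at most one edge of $P$ using port $v^0$ and at most one using port $v^1$.

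Next, since $C$ is a cycle, every vertex of $C$ has $P$-degree exactly two. Combined with the property above, this forces each such vertex to have \emph{both} $v^0$ and $v^1$ occupied, one port per incident edge of $C$. Summing over the $k$ vertices in $C$ therefore produces exactly $k$ ``0-port occupancies'' and exactly $k$ ``1-port occupancies'' contributed by the edges of $C$.

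To finish, I classify the $k$ edges of $C$. Let $a$ denote the number of mixed edges and $b$ the number of double-zero edges in $C$, so $a + b = k$. A mixed edge contributes one 0-port and one 1-port occupancy, while a double-zero edge contributes two 0-ports and zero 1-ports. The 1-port count then gives $a = k$, and the 0-port count gives $a + 2b = k$; subtracting yields $b = 0$. This is exactly the claim that no edge in $C$ occupies two 0-ports.

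The argument is essentially bookkeeping and I do not anticipate a real obstacle; the only subtlety is verifying from the definition of $H$ that in a cycle both ports of each vertex are genuinely saturated, after which the asymmetry caused by the absence of a $\{1,1\}$-pattern vertex in $H$ immediately forces $b = 0$.
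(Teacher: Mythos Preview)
Your proof is correct and follows essentially the same counting argument as the paper: both observe that in a cycle each vertex has both ports occupied, so the total number of occupied $0$-ports equals the total number of occupied $1$-ports, and since every edge occupies at least one $0$-port (there being no $\{1,1\}$-pattern in $H$), no edge can occupy two $0$-ports. You spell out the bookkeeping with explicit variables $a$ and $b$, whereas the paper states the contradiction directly, but the substance is identical.
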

\begin{myproof}
    Suppose that $C$ has $n'$ vertices. Since each vertex in a cycle has degree two, both ports of each vertex in $C$ must be occupied. Hence, $n'$ 0-ports and $n'$ 1-ports of $C$ are occupied in total. Given that any edge occupies at least one 0-port by the algorithm, we cannot have an edge that occupies two 0-ports, or else we should occupy more 0-ports than 1-ports of $C$, which is a contradiction.
\end{myproof}

Next, we show that up to a factor of $(1+2/k)$ which is negligible for $K$ in the order $1/\epsilon$, the output of \algtwo{} is an (almost) 1/2-approximation of the maximum path cover value.

\begin{observation}\label{obs:one-zero-path}
Let $C$ be a connected component in the output of \algtwo{}. If $C$ is a path, then it contains at most one edge that occupies two 0-ports.
\end{observation}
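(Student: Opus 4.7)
The plan is to prove this by a port-counting argument closely analogous to the one used for cycles in \Cref{obs:no-zero-cycle}. First I would set up the framework: let $C$ be a path component with $k$ edges and $k+1$ vertices. The two endpoints of $C$ have degree one in the output, so exactly one of their two ports is occupied; the $k-1$ internal vertices have degree two, so both of their ports are occupied. In particular, the total number of occupied ports across the vertices of $C$ is $2k$, matching the $2k$ port-occupations made by the $k$ edges.

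Next I would classify every output edge by how many 0-ports it occupies. The key point is that, by the construction of $H$ in \algtwo{}, the only $H$-vertices corresponding to an edge $(u,v)$ occupy either $(u^0,v^0)$, $(u^0,v^1)$, or $(u^1,v^0)$---never $(u^1,v^1)$. Hence every edge in the output occupies at least one 0-port, so edges in $C$ split cleanly into ``00-edges'' (two 0-ports) and ``01-edges'' (one 0-port and one 1-port). Writing $x$ for the number of 00-edges, the total number of occupied 0-ports counted from the edge side is $2x + (k-x) = k+x$.

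Then I would count the same quantity from the vertex side. Each of the $k-1$ internal vertices contributes exactly one occupied 0-port (since it has both ports occupied). Letting $a \in \{0,1,2\}$ denote the number of endpoints of $C$ whose 0-port is the occupied one, the vertex-side total is $(k-1) + a$. Equating the two expressions gives $x = a - 1$, and since $a \le 2$ we conclude $x \le 1$, which is exactly the desired statement. (As a sanity check, $x \ge 0$ forces $a \ge 1$, which is the path analogue of \Cref{obs:no-zero-cycle}.)

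The only place where a subtlety arises is the step ruling out ``11-edges'' in the output, since it is precisely this fact that reduces the classification of edges from three types to two and makes the double count close. This does not require any new argument: it follows directly from inspecting the bullet list defining the vertex set of $H$ in \algtwo{}, which never creates a node occupying both $u^1$ and $v^1$. Once that observation is in hand, the rest is a short and self-contained double count, so I do not anticipate any further obstacle.
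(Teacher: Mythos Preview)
Your proposal is correct and follows essentially the same port-counting argument as the paper: you count occupied 0-ports from the edge side and the vertex side, while the paper phrases it as bounding the difference between occupied 0-ports and 1-ports, but the two are the same double count. Your version is in fact slightly more explicit (deriving $x=a-1$), and the only subtlety you flag---ruling out 11-edges---is exactly what the paper uses implicitly when it says ``all other types of edges occupy exactly one 0-port and one 1-port.''
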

\begin{myproof}
    Let $C$ be the path $(v_1, v_2, \ldots, v_r)$. Since the degree of any vertex $v_i$ for $1<i<r$ is two in the path, both ports of $v_i$ must be occupied. For $v_1$ and $v_r$, on the other hand, only one port is occupied. Hence, the total number of 0-ports that are occupied by $C$ minus the number of 1-ports occupied by it is at most two. This means that there is at most one edge that occupies two 0-ports since all other types of edges occupy exactly one 0-port and one 1-port.
\end{myproof}

\begin{lemma}\label{clm: approximation-bound}
let $P$ be the output of \algtwo{} on graph $G$. Then
$$
    \frac{1}{2} \rho(G) \leq \E|P| \leq \left(1+\frac{2}{K}\right) \rho(G),
$$
where the expectation is taken over the randomization of computing RGMIS in \algtwo{}.
\end{lemma}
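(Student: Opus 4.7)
The lower bound $\tfrac{1}{2}\rho(G)\le \E|P|$ should go through by adapting the proof of \cref{clm: alg-one-approx} almost verbatim. The only step that needs re-justification in the RGMIS setting is the ``port maximality'' property on the $G$-side: for every $G$-edge $(u,v)$, either $(u,v)\in P$, or else all $K+2$ of its copies in $H$ are blocked by members of $I$. A blocked type-$00$ copy forces an edge of $P$ to occupy a port in $\{u^0,v^0\}$; a blocked $s_0$ forces a port in $\{u^0,v^1\}$; a blocked $s_1$ forces a port in $\{u^1,v^0\}$. A short case check (no single occupied port can simultaneously lie in all three of those sets) shows that at least two of $\{u^0,u^1,v^0,v^1\}$ are occupied, so $\deg_P(u)+\deg_P(v)\ge 2$. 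The potential $\phi(e)=\tfrac14(\deg_P(u)+\deg_P(v))\ge\tfrac12$ then yields $|P^\star|/2 \le |P|$ for any path cover $P^\star$; choosing $P^\star$ to be a maximum path cover and taking expectations gives the bound.

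For the upper bound, the output $P$ has maximum degree $2$ (each port holds at most one edge and each vertex has two ports), so its components are paths and cycles; write $|P|=|P_{\mathrm{path}}|+|P_{\mathrm{cyc}}|$. Since the path components alone form a valid vertex-disjoint path cover, $|P_{\mathrm{path}}|\le\rho(G)$ pointwise, so the goal reduces to showing $\E|P_{\mathrm{cyc}}|\le (2/K)\rho(G)$. By \cref{obs:no-zero-cycle} every cycle edge is of type-$01$, and the plan is to exploit the $K$-fold symmetry of the type-$00$ copies of each edge: if $e$ ends up inside a cycle of $P$, then one of the two type-$01$ copies of $e$ ``won'' its $(K+2)$-clique despite there being $K$ type-$00$ competitors that share the same $u^0$-port external neighborhood. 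I intend to formalize this with a charging/coupling argument that maps each cycle occurrence of an edge $e$ to $\Omega(K)$ distinct permutations in which a type-$00$ copy of $e$ is selected instead, thereby breaking the cycle; averaging over these maps should produce the $2/K$ slack, since distinct swap targets give distinct images and the only factor-$2$ overcounting is the two possible cycle orientations that could have been opened at $e$.

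The main obstacle is controlling cascading effects in the RGMIS under the rank perturbations required by the charging argument. A rank swap inside the $e$-clique can in principle ripple through $H$ and alter memberships far from the swap site, so the counting bijection must be argued carefully. The cleanest route is to invoke the Yoshida-Yoshida-Itoh local-query view of RGMIS, which the paper already leverages in \cref{sec:query-process,sec: disjoint-paths-estimate}: the membership of each $H$-vertex is a local recursion over its neighborhood, and only those recursions whose queries cross the $e$-clique can be affected by the swap. Using that type-$00$ and type-$01$ copies of $e$ share their $u^0$-port external neighbors (so local eligibility is preserved), and that cycle components are ``tightly constrained'' configurations, I expect the charging to close cleanly and yield the claimed $(1+2/K)\rho(G)$ upper bound.
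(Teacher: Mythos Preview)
Your lower bound is exactly the paper's argument and is fine.

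For the upper bound you are working much harder than necessary, and the hard part of your plan is left unproved. The rank-swap/charging route you sketch faces exactly the cascading problem you flag: swapping which copy of $e$ wins its $(K+2)$-clique can change which ports of $u$ and $v$ are occupied, which can change the RGMIS outcome on every edge incident to $u$ or $v$, and so on along the entire component. You assert that ``local eligibility is preserved'' because type-$00$ and type-$01$ copies share the $u^0$ external neighborhood, but they do \emph{not} share the $v$-side neighborhood (one blocks $v^0$, the other $v^1$), so the swap is not locally invisible and the bijection is not obviously well-defined. Nothing in the YYI machinery of \cref{sec:query-process} helps here; that section bounds query complexity, not the effect of rank perturbations on the output.

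The paper avoids all of this with a one-line probabilistic observation. Instead of bounding $\E|P_{\mathrm{cyc}}|$, it bounds the expected \emph{number of cycle components}, then removes one edge per cycle to obtain a path cover. The key point: look at the moment a new connected component of $P$ is born, i.e., the first time some edge $e=(u,v)$ is added with neither endpoint yet touched by $P$. At that moment all $K+2$ copies of $e$ are still eligible in the RGMIS, so the winning copy is type-$00$ with probability exactly $K/(K+2)$. By \cref{obs:no-zero-cycle} a component containing a type-$00$ edge can never become a cycle, and by \cref{obs:one-zero-path} two such components can never merge with each other. Hence if $C_0,C_1$ count components born with a type-$00$ or type-$01$ seed respectively, the final number of cycles is at most $C_1$, while $\E[C_1]/(\E[C_0]+\E[C_1])=2/(K+2)$ and $C_0+C_1\le |P|$. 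This gives $\E[\#\text{cycles}]\le \tfrac{2}{K+2}\E|P|$, so $\rho(G)\ge \E|P|-\E[\#\text{cycles}]\ge \tfrac{K}{K+2}\E|P|$, which is the claimed bound. No coupling, no swaps, no cascades.
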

\begin{proof}
    Let $P^*$ be a maximum path cover. For any edge $e=(u, v) \in P^\star$ define $\phi(e) = \frac{1}{4}(\deg_P(u) + \deg_P(v))$. With the exact same argument as in the proof of \Cref{clm: alg-one-approx}, we get that $\phi(e) \geq 1/2$, which implies
    \begin{align*}
        \sum_{e \in P^\star} \phi(e) \geq \sum_{e \in P^\star} 1/2 = \rho(G)/2.
    \end{align*}
    Since the degree of each vertex in $P$ is at most two, we get
    \begin{align*}
        \sum_{e \in P^\star} \phi(e) = \frac{1}{4} \sum_{(u, v) \in P^\star} \deg_P(u)+\deg_P(v) \leq \frac{1}{4} \cdot 2 \sum_{v \in V} \deg_P(v) = |P|.
    \end{align*}
    By combining above inequalities we get $ \frac{1}{2} \rho(G) \leq |P|$. Note that we do not need the randomization for the proof of the lower bound.
    
    By construction of $P$, every vertex has degree at most two in $P$. Hence, all connected components of $P$ are cycles and paths. We claim that at most $\frac{2}{K+2}$ fraction of connected components are cycles in expectation. Since the expected number of connected components is at most $\E|P|$, from this we get that the expected number of cycles is at most $2\E|P|/(K+2)$. By removing one edge from each cycle, we obtain a valid solution for maximum path cover problem. Thus,
    \begin{align*}
        \E|P| - \frac{2\E|P|}{K + 2} = \frac{K}{K+2} \E|P| \leq \rho(G) \qquad \Rightarrow \qquad \E|P| \leq \left(1+ \frac{2}{K}\right)\cdot\rho(G).
    \end{align*}
   
    So it remains to show that at most $\frac{2}{K+2}$ fraction of connected components are cycles in expectation. As we process edges one by one according to the ordering of RGMIS, let $A$ be the set of edges that none of their incident edges are added to the solution of \algtwo{}. By definition of $A$, if one copy of edge $(u, v)$ is in $A$, then all other copies of $(u,v)$ are also in $A$. Therefore, at any point during running RGMIS, if a new component is added to the solution, the edge $(u,v)$ that gets added to the solution occupies $(u^0, v^0)$ with probability at least $\frac{K}{K+2}$ since $K$ copies out of the $K+2$ copies are for $(u^0, v^0)$. Let $C_0$ be the number of times that the newly added component is an edge occupying two 0-ports, and $C_1$ be the number of times that the newly added component is an edge occupying one 0-port and one 1-port. By the above argument, we have
    \begin{align}\label{eq:cycle-ineq}
        \frac{\E[C_0]}{\E[C_0] + \E[C_1]} = \frac{K}{K+2}.
    \end{align}
    
    Note that after running \algtwo{}, it is possible that the number of connected components is actually smaller than $C_0 + C_1$, since some of the components may merge as the algorithm proceeds. However, by \Cref{obs:one-zero-path}, two components that their first edge occupies two 0-ports will not merge together. Also, by \Cref{obs:no-zero-cycle}, none of the cycle components have an edge that occupies two 0-ports. Therefore, in the end, there exists at most $\E[C_0] + \E[C_1]$ connected components and at least $\E[C_0]$ of them will not be cycles. This completes the proof.
    \end{proof}


\section{A Local Query Process for Algorithm 2 and its Complexity}\label{sec:query-process}

In this section, we define a query process to estimate the size of the output of \algtwo{}. 

In graph $H$ of \algtwo{}, each vertex corresponds to an edge in the original graph. More precisely, we make $K+2$ copies of each edge $(u, v)$ such that one of the copies corresponds to an edge occupying $(u^0, v^1)$, one for $(u^1, v^0)$, and $K$ for $(u^0, v^0)$. We use $G' = (V, E')$ to show the new graph with these parallel edges. During the course of \algtwo{}, two different edges that share the same endpoint and port cannot appear in the solution together. We use the following definition to formalize this notion.

\begin{definition}[Conflicting Pair of Edges]
Two edges $e, e' \in E'$ that share an endpoint $v$ are {\em conflicting} if both $e$ and $e'$ correspond to same port $v^i$ for $i \in \{0, 1\}$. We call $(e,e')$ a conflicting pair of edges.
\end{definition}

In order to estimate the size of the output of \algtwo{}, we define a vertex oracle that given a vertex $v$ and a permutation $\pi$ on $E'$, returns the degree of vertex $v$ in the output of \algtwo{}. These are akin to the query processes used before in the works of \cite{behnezhad2021,YoshidaYI09}, but are specific to our \algtwo{}.

\begin{algorithm}[H]
\caption{``vertex oracle'' $\VO(u, \pi)$ to determine the degree of vertex $u$ in $\RGMIS(G', \pi)$.}
\label{alg:vertexoracle}

	Let $e_1 = (u, v_1), \ldots, e_r = (u, v_r)$ be the edges incident to $u$ with $\pi(e_1) < \ldots < \pi(e_r)$.
	
	$d \leftarrow 0$
	
	\For{$i$ in $1 \ldots r$}{
		\lIf{$\EO(e_i, v_i, \pi) = \true$}{$d \leftarrow d+1$}
	}
	\Return $d$
\end{algorithm}

\begin{algorithm}[H]
\caption{``edge oracle'' $\EO(e, u, \pi)$ to determine an edge $e$ is in $\RGMIS(G', \pi)$. Also, $u$ must be an endpoint of $e$.}
\label{alg:edgeoracle}

    \lIf{$\EO(e, u, \pi)$ computed before}{\Return the computed result.}
    
	Let $e_1 = (u, v_1), \ldots, e_r = (u, v_r)$ be the edges incident to $e$ such that $\pi(e_1) < \ldots < \pi(e_r) < \pi(e)$. Also, $(e, e_i)$ is a conflicting pair for all $1 \leq i \leq r$.
	
	\For{$i$ in $1 \ldots r$}{
		\lIf{$\EO(e_i, v_i, \pi) =  \true$}{\Return \false}
	}
	
	\Return \true
\end{algorithm}

Note that in Line 2 of the \Cref{alg:edgeoracle} we only recursively call the function on edges that their label, conflict with edge $e$ since if other edges appear in the RMGIS subgraph, we can still have $e$ in the RGMIS subgraph. Before analyzing the query complexity of the vertex oracle, we prove the correctness of the vertex oracle.

\begin{claim}\label{clm:correctness-edge-oracle}
For any edge $e = (u, z) \in E'$ that is occupying ports $u^i$ and $z^j$, if $\EO(e, u, \pi)$ is called while computing $\VO(v, \pi)$, then $\EO(e, u, \pi) = \true$ iff $e \in \RGMIS(G', \pi)$.
\end{claim}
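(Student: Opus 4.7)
The plan is to prove the claim by strong induction on the $\pi$-rank of $e$, and to certify simultaneously the correctness of every recursive $\EO$ call spawned during the computation of $\VO(v, \pi)$.

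For the base case, consider an edge $e$ whose conflicting neighbors in $H$ all have $\pi$-value larger than $\pi(e)$ (the global $\pi$-minimizer in $E'$ satisfies this trivially). Then the loop inside $\EO(e, u, \pi)$ is vacuous and the procedure returns $\true$. On the RGMIS side, when the greedy process reaches the $H$-vertex associated with $e$, no conflicting neighbor has yet been placed in the independent set, so $e$ is included. Hence both sides of the iff are true.

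For the inductive step, assume the claim holds for every edge $e'$ with $\pi(e') < \pi(e)$ that is touched while computing $\VO(v, \pi)$. By the definition of RGMIS on $H$, we have $e \in \RGMIS(G', \pi)$ iff no earlier neighbor of $e$ in $H$ lies in $\RGMIS(G', \pi)$. By the construction of $H$ in \algtwo{}, these earlier neighbors are exactly the earlier edges $e_1, \dots, e_r$ that form a conflicting pair with $e$, which is precisely the list that $\EO(e, u, \pi)$ iterates over. The induction hypothesis implies that each recursive call $\EO(e_i, v_i, \pi)$ correctly reports whether $e_i \in \RGMIS(G', \pi)$. Hence $\EO(e, u, \pi)$ returns $\false$ iff some iterated $e_i$ lies in $\RGMIS(G', \pi)$, iff $e \notin \RGMIS(G', \pi)$, which closes the induction.

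The main obstacle is verifying that the enumeration inside $\EO(e, u, \pi)$ --- the earlier edges ``incident to $e$'' that form a conflicting pair with $e$ --- coincides with the full set of earlier $H$-neighbors of the vertex associated with $e$. This requires unfolding the two rules that generate $H$'s edges in \algtwo{}: conflicts at a shared endpoint with the same port, and parallel copies of the same underlying $G$-edge. One also needs to check that the memoization step at the top of \Cref{alg:edgeoracle} is consistent: this follows from the observation that the value returned by $\EO(e, \cdot, \pi)$ depends only on $e$, namely on whether $e \in \RGMIS(G', \pi)$, and hence is independent of which endpoint is passed as the second argument during recursive calls initiated by $\VO(v, \pi)$.
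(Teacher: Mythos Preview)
Your inductive scheme is fine, but the key step fails because you misread what the loop inside $\EO(e,u,\pi)$ enumerates. The list $e_1=(u,v_1),\dots,e_r=(u,v_r)$ consists only of the lower-rank conflicting edges \emph{incident to the vertex $u$} (the second argument). It does \emph{not} contain the lower-rank edges that conflict with $e$ at the other endpoint $z$. So your sentence ``these earlier neighbors are exactly the earlier edges $e_1,\dots,e_r$ that form a conflicting pair with $e$, which is precisely the list that $\EO(e,u,\pi)$ iterates over'' is false: half of the potential conflicts of $e$ are not checked by $\EO(e,u,\pi)$ at all. Consequently, the ``iff'' you want does not follow from inspecting $\EO(e,u,\pi)$ in isolation.

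The paper's proof supplies exactly the missing piece: it argues from the \emph{call context}. The call $\EO(e,u,\pi)$ is reached either from $\VO(z,\pi)$ or from some $\EO(e',z,\pi)$, and in either case the caller has already iterated, in increasing $\pi$-rank, over the edges at $z$ that occupy port $z^j$ with rank below $\pi(e)$; the fact that the execution reached $e$ in that iteration certifies (via the induction hypothesis) that none of those $z$-side conflicts lie in $\RGMIS(G',\pi)$. Only after this invariant is in place does the $u$-side check performed by $\EO(e,u,\pi)$ suffice to decide membership. Your argument needs this caller-side invariant; without it the induction step does not close.

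This also undermines your justification for memoization consistency. The assertion that ``the value returned by $\EO(e,\cdot,\pi)$ depends only on $e$'' is not a priori true: $\EO(e,u,\pi)$ and $\EO(e,z,\pi)$ scan \emph{different} lists. That they return the same answer is a consequence of the call-context invariant above, not an independent observation you can invoke to dispatch memoization.
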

\begin{myproof}
We prove the claim using induction on ranking of edge $e$. Assume that the claim is true for all edges with ranking smaller than $\pi(e)$. If $\EO(e, u, \pi)$ is called by $\EO(e' = (w, z), z, \pi)$ or directly by $\VO(v, \pi)$, then by definition of \Cref{alg:edgeoracle} and \Cref{alg:vertexoracle}, all edges $e'' = (w', z)$ with $\pi(e'') < \pi(e')$ that are occupying $z^j$ are queried before $e'$ which means that none of them return \true{}. Hence, by induction hypothesis, none of the edges incident to $z$ that are occupying $z^j$ with lower rank are in the $\RGMIS(G', \pi)$. Moreover, $\EO(e, u, \pi)$ calls all incident edges to $u$ with lower rank that are occupying $u^i$ and return \true{} if none of them are in the $\RGMIS(G', \pi)$ by induction hypothesis. Therefore, $\EO(e, u, \pi) = \true$ iff $e \in \RGMIS(G', \pi)$.
\end{myproof}

\begin{claim}
Let $v \in V$ and $d$ be the output of $\VO(v, \pi)$. Then $d$ is equal to the degree of vertex $v$ in the subgraph outputted by $\RGMIS(G', \pi)$.
\end{claim}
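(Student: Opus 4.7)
The plan is to derive this claim immediately from the preceding \Cref{clm:correctness-edge-oracle} by a simple bookkeeping argument over the loop inside $\VO(v, \pi)$. There is essentially no new technical content; the correctness of the vertex oracle reduces entirely to the correctness of the edge oracle.

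First I would unpack the pseudocode of \Cref{alg:vertexoracle}. The oracle $\VO(v, \pi)$ enumerates the edges $e_1 = (v, v_1), \dots, e_r = (v, v_r)$ incident to $v$ in order of $\pi$, starts a counter $d = 0$, and for each $i$ increments $d$ exactly when $\EO(e_i, v_i, \pi) = \true$. Thus, by construction,
\[
d \;=\; \bigl|\{\,i \in \{1, \dots, r\} : \EO(e_i, v_i, \pi) = \true\,\}\bigr|.
\]

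Next I would apply \Cref{clm:correctness-edge-oracle} to each term in this set. That claim asserts that whenever $\EO(e, u, \pi)$ is invoked during the computation of some $\VO(\cdot, \pi)$, its return value equals $\true$ if and only if $e \in \RGMIS(G', \pi)$. Each call $\EO(e_i, v_i, \pi)$ here is made from within the execution of $\VO(v, \pi)$, and $v_i$ is indeed an endpoint of $e_i$, so \Cref{clm:correctness-edge-oracle} applies verbatim and yields $\EO(e_i, v_i, \pi) = \true \iff e_i \in \RGMIS(G', \pi)$.

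Combining the two observations gives
\[
d \;=\; \bigl|\{\,i \in \{1, \dots, r\} : e_i \in \RGMIS(G', \pi)\,\}\bigr|,
\]
which is exactly the number of edges incident to $v$ in the subgraph induced by $\RGMIS(G', \pi)$, i.e., the degree of $v$ in that subgraph. I do not anticipate a real obstacle: the only point that might invite scrutiny is that $\VO$ passes the ``far'' endpoint $v_i$ rather than $v$ to $\EO$, but \Cref{clm:correctness-edge-oracle} is stated symmetrically with respect to which endpoint is supplied, so this requires no extra argument.
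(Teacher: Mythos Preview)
Your proposal is correct and follows the same approach as the paper: the paper's own proof is a one-liner that simply invokes \Cref{clm:correctness-edge-oracle} (together with the fact that $\VO$ processes edges in increasing $\pi$-order, which is the precondition making that claim applicable). Your write-up is a slightly more detailed unpacking of exactly this reduction.
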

\begin{myproof}
The observation follows by combining the fact that the vertex oracle queries edges in increasing order and \Cref{clm:correctness-edge-oracle}.
\end{myproof}

Let $T(v, \pi)$ denote the number of recursive calls to the edge oracle during the execution of $\VO(v, \pi)$.

\begin{theorem}\label{thm: query-complexity}
For a randomly chosen vertex $v$ and permutation $\pi$ on $E'$, we have that
\begin{align*}
    \E_{v, \pi}[T(v,\pi)] = O(\bar{d}\cdot \log^2 n)
\end{align*}
where $\bar{d}$ is the average degree of the graph $G$.
\end{theorem}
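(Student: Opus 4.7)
The plan is to adapt the RGMIS query complexity analysis of \citet*{behnezhad2021} from line graphs to the conflict graph $H$ defined in \algtwo{}. The first step is a Yoshida--Yamamoto--Ito style decomposition: because each $\EO(e, \cdot, \pi)$ call is memoized, we have $T(v, \pi) = \sum_{e \in E'} \mathbb{1}[\EO(e, \cdot, \pi) \text{ invoked during } \VO(v, \pi)]$, hence
\[
    \E_{v,\pi}[T(v,\pi)] \;=\; \frac{1}{n}\sum_{v \in V}\sum_{e \in E'} \Pr_\pi\!\left[\EO(e, \cdot, \pi) \text{ invoked during } \VO(v, \pi)\right].
\]
Such an invocation occurs iff there is a chain of conflicting edges in $H$ with strictly decreasing ranks starting from an edge incident to $v$ and ending at $e$, so the task reduces to bounding the total probability mass on such random chains.

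A naive YYI bound would control the right-hand side by the average degree of $H$, but a single port of a high-degree vertex $w \in V$ can host up to $(K{+}1)\deg_G(w)$ mutually conflicting edges, so the average degree of $H$ scales with $\Delta$ rather than $\bar{d}$. To avoid this overshoot, I would mirror the two-level analysis of \citet*{behnezhad2021}: partition $E'$ into $O(\log n)$ rank layers, and for each layer bound separately the expected number of edges of that layer explored by $\VO(v, \pi)$ for a random $v$. The standard observation that a strictly decreasing rank pattern along a chain of length $t$ has probability at most $1/t!$ shows the exploration has depth $O(\log n)$ with high probability, which accounts for the first $\log n$ factor and lets us ignore all but $O(\log n)$ of the rank layers.

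The main technical step is to charge, within each layer, the expected exploration size against vertex incidences in $G$ rather than in the inflated graph $G'$. Following the charging scheme of \citet*{behnezhad2021}, each exploration step from one edge to a conflicting neighbor can be paid by a fresh vertex incidence in $G$, and since $\sum_w \deg_G(w) = 2m = n\bar{d}$, averaging over the uniformly random $v$ yields an $O(\bar{d})$ bound per layer; the second $\log n$ factor then arises from the two-step recursion built into the charging. The main obstacle, and what requires the most care in adapting the analysis, is handling the $K{+}2$ parallel copies together with the port-clique structure of $H$: I would exploit the symmetry of the $K$ copies occupying $(u^0, v^0)$, whose conflict neighborhoods in $H$ are identical, to bundle them in the recursion, so that $K$ enters only as a multiplicative constant that is absorbed into the $\poly(1/\epsilon)$ dependence promised by \Cref{res:path-cover}.
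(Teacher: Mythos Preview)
There is a genuine gap. Your opening identity is already off: memoization in the edge oracle is on the pair $(e,u)$, and $T(v,\pi)$ counts every invocation of $\EO$, including repeated calls that hit the cache (a non-cached $\EO(e,u,\pi)$ still iterates over and invokes all lower-rank conflicting neighbors, and those invocations are what dominate the time). So $T(v,\pi)$ is not $\sum_{e}\mathbb{1}[\cdot]$. The paper's decomposition is instead $\sum_{v}T(v,\pi)=\sum_{e\in E'}Q(e,\pi)$, where $Q(e,\pi)$ counts with multiplicity how many times $\EO(e,\cdot,\pi)$ is called across all starting vertices, and the whole proof reduces to the per-edge bound $\E_\pi[Q(e,\pi)]=O(\log^2 n)$; the $\bar{d}$ factor then comes trivially from $|E'|=(K{+}2)m$ divided by $n$.

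More importantly, the route you sketch (rank layers, a $1/t!$ depth bound, and a ``charging against vertex incidences in $G$'') is not the argument of \cite{behnezhad2021} that the paper adapts, and you have not supplied enough to see how it would close. The per-edge bound is proved via the YYI-style permutation shift $\phi(\pi,\vec{P})$ that cyclically rotates ranks along a query trail: one builds a bipartite graph on permutations, with an edge from $\pi$ to $\phi(\pi,\vec{P})$ for every query trail $\vec{P}$ ending at $\vec{e}$, and shows that each image permutation has at most $\beta=O(\log^2 n)$ preimages among ``likely'' permutations. The likely/unlikely split and the value of $\beta$ come from the parallel round complexity of greedy MIS (\cref{lem: parallel-RMGIS-rounds}, via \cref{clm: large-round-complexity}), not from a $1/t!$ tail bound. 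The injectivity-up-to-length is the ``no-branch'' \cref{lem:no-branch}, whose proof (\cref{clm:new-branch} in particular) is exactly where the port/conflict structure of $H$ must be handled carefully and where the analysis diverges from the plain line-graph case; your proposal does not touch this step, and the vague charging scheme you describe does not substitute for it.
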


Let $Q(e, v, \pi)$ be the number of $\EO(e, \cdot, \pi)$ calls during the execution of $\VO(v, \pi)$. Moreover, let $Q(e, \pi)$ be the number of $\EO(e, \cdot, \pi)$ calls starting from any vertex. In other words, we have that $Q(e, \pi) = \sum_{v\in V}Q(e, v, \pi)$.

\begin{observation}\label{obs: q-upperbound}
For every edge $e$ and permutation $\pi$, $Q(e, \pi) \leq O(n^2)$.
\end{observation}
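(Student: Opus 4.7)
To establish $Q(e, \pi) \leq O(n^2)$, the plan is to show $Q(e, v, \pi) \leq O(n)$ for every fixed starting vertex $v$ and then sum over the $n$ choices of $v$. Writing $e = (u, z)$, I would carefully enumerate every possible caller of $\EO(e, \cdot, \pi)$ during a single execution of $\VO(v, \pi)$ and use the memoization on the first line of \Cref{alg:edgeoracle} to bound how often each caller can invoke it.

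First, I would inspect \Cref{alg:vertexoracle} and \Cref{alg:edgeoracle} to list the only places where $\EO(e, \cdot, \pi)$ can be triggered: (i) directly from $\VO(w, \pi)$ for an endpoint $w \in \{u, z\}$, which happens at most once per execution and only if $v = w$; and (ii) from the recursive loop inside a parent call $\EO(e', w, \pi)$ where $e'$ conflicts with $e$ at $w \in \{u, z\}$ and $\pi(e') > \pi(e)$. Each such parent iterates over its conflicting neighbors at most once, so it produces at most one call to $\EO(e, \cdot, \pi)$.

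Next, I would bound the number of distinct candidate parents in category (ii). A conflicting neighbor of $e$ at endpoint $w$ is an edge in $G'$ sharing a port with $e$ at $w$; since $G'$ contains $K + 2 = O(1)$ copies of every edge of $G$ and $\deg_G(w) \leq n - 1$, there are at most $O(n)$ such edges at each of the two endpoints of $e$. By the memoization rule at the top of \Cref{alg:edgeoracle}, each pair $(e', w)$ has its recursive body executed at most once per run of $\VO(v, \pi)$, so each candidate parent contributes at most one call to $\EO(e, \cdot, \pi)$ in that run. Combining with category (i) gives $Q(e, v, \pi) \leq O(n)$.

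Finally, summing over $v$, $Q(e, \pi) = \sum_{v \in V} Q(e, v, \pi) \leq n \cdot O(n) = O(n^2)$, which proves the observation. The only (minor) subtlety is step (ii), where the memoization must be invoked to prevent a single $(e', w)$ pair from repeatedly calling $\EO(e, \cdot, \pi)$; the rest is a routine counting argument, so I would not expect this observation to pose a serious obstacle, and indeed the nontrivial work is reserved for the subsequent bound in \Cref{thm: query-complexity}.
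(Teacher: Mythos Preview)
Your proposal is correct and follows essentially the same approach as the paper: bound $Q(e, v, \pi) \leq O(n)$ by counting the at most $O((K+2)n)$ incident (conflicting) edges that can serve as parents plus the single direct call from $\VO$, then sum over the $n$ starting vertices. The only difference is that you explicitly invoke memoization to justify that each parent contributes at most one call, whereas the paper leaves this implicit.
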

\begin{myproof}
Let $e = \{x,y\}$. For a fixed vertex $u$, either the vertex oracle $\VO(u, \pi)$ queries the edge oracle for $e$ directly, or through some incident edge $e'$. Hence, the edge oracle of $e$ is called through at most $(K + 2)(\deg(x) - 1) + (K + 2)(\deg(y) - 1)$ of its incident edges ($K+2$ appears since each edge has $K + 2$ copies), which implies that $Q(e, u, \pi) \leq (2K + 4)(n - 1) + 1$. Therefore,
\begin{align*}
    Q(e, \pi) \leq \sum_{u\in V} Q(e, u, \pi) \leq n\left( (2K + 4)(n - 1) + 1 \right) \leq O(n^2).\qedhere
\end{align*}
\end{myproof}

The main contribution of this section is to show that the expected number of $\EO(e, \pi)$ calls over all permutations $\pi$ is $O(\log^2 n)$, which is formalized in the following lemma.

\begin{lemma}\label{lem: query-complexity}
For any edge $e \in E'$, we have $\E_\pi[Q(e, \cdot, \pi)] = O(\log^2 n)$.
\end{lemma}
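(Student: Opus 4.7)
The plan is to bound $\E_\pi[Q(e,\pi)]$ by explicitly enumerating the recursive traces that could reach $\EO(e,\cdot,\pi)$ and weighting each by its probability of being realized under a uniformly random $\pi$ on $E'$. For a starting vertex $v_0$ and edges $e_1,e_2,\ldots,e_k=e$ in $E'$, together with vertices $v_1,\ldots,v_k$, I would call this sequence a \emph{potential trace} of length $k$ ending at $e$ if (i) $e_1$ is incident to $v_0$ and $v_1$ is the other endpoint of $e_1$, and (ii) for each $i\geq 2$, $e_i$ is incident to $v_{i-1}$, the pair $(e_{i-1},e_i)$ is conflicting at $v_{i-1}$, and $v_i$ is the other endpoint of $e_i$. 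Tracing through \Cref{alg:vertexoracle} and \Cref{alg:edgeoracle}, every call to $\EO(e,\cdot,\pi)$ originates from a potential trace whose ranks satisfy $\pi(e_1)>\pi(e_2)>\cdots>\pi(e_k)$ (the short-circuit behaviour of \Cref{alg:edgeoracle} only makes fewer calls, so dropping it only strengthens the upper bound). Hence
\[
Q(e,\pi)\;\leq\;\sum_{k\geq 1}\sum_{\text{traces ending at }e}\mathbb{1}\bigl\{\pi(e_1)>\cdots>\pi(e_k)\bigr\}.
\]

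Taking expectation over a uniformly random $\pi$, a fixed ordered $k$-tuple of distinct edges has decreasing ranks with probability exactly $1/k!$. Writing $N_k(e)$ for the total number of potential traces of length $k$ ending at $e$ (summed over all starting vertices), linearity of expectation gives $\E_\pi[Q(e,\pi)]\leq \sum_{k\geq 1}N_k(e)/k!$. The entire problem thus reduces to bounding this sum by $O(\log^2 n)$.

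The main obstacle is that a naive bound $N_k(e)\leq (O(K)\Delta)^{k-1}$ via the maximum port-clique size yields a sum exponential in $\Delta$, not $O(\log^2 n)$. To obtain the desired estimate, I would adapt the query-tree analysis of \citet{behnezhad2021} for random greedy maximal matching (which corresponds to RGMIS on a line graph). Two features of our setting make this adaptation possible. First, the conflict graph $H$ decomposes into cliques indexed by ports $(v,i)$, each of which plays the role of a "vertex" of a line-graph-like object; an edge in $E'$ belongs to exactly two such cliques, one at each endpoint, just as an edge of a base graph belongs to two vertex-cliques in a line graph. Second, at each step of the recursion the walk alternates endpoints in a structured way, so each layer of the trace corresponds to a competition inside a single port clique.

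Following this line, I would charge every potential trace to the canonical backward exploration from $e$ that, at each step, considers the lower-ranked conflicting edges inside the current port clique in order, and I would argue, mirroring Behnezhad's inductive argument on decreasing ranks, that the expected number of such backward explorations reaching depth $k$ is at most $c_1\cdot (c_2\log n)^{k-1}/(k-1)!\cdot (1/k)$ for universal constants $c_1,c_2$ (the extra $1/k$ coming from the rank-integration step of the Behnezhad analysis). Summing over $k\geq 1$ collapses the factorials to $\int_0^{c_2\log n}\frac{\mathrm{d}x}{x}=O(\log\log n)$-style telescoping and yields $\E_\pi[Q(e,\pi)]=O(\log^2 n)$. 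The extra logarithmic factor compared to the matching case absorbs both the $K+2$-fold duplication of edges into parallel port copies (which only inflates each port clique by a constant factor, hidden in $c_2$) and the fact that backward exploration in our setting branches at \emph{two} endpoints of each edge rather than at one shared vertex, as in the line graph. The hard part of the write-up will be performing the rank-integration cleanly in the presence of port labels and verifying that the short-circuit pruning in \Cref{alg:edgeoracle} only helps.
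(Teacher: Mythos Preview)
Your proposal has a genuine gap at two levels.

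First, the arithmetic: even granting your claimed depth-$k$ bound $c_1(c_2\log n)^{k-1}/k!$, the sum over $k$ is
\[
\sum_{k\ge 1}\frac{c_1(c_2\log n)^{k-1}}{k!}=\frac{c_1}{c_2\log n}\bigl(e^{c_2\log n}-1\bigr)=\frac{c_1(n^{c_2}-1)}{c_2\log n},
\]
which is polynomial in $n$, not $O(\log^2 n)$. There is no ``$\int_0^{c_2\log n}\frac{dx}{x}$-style telescoping'' here; the factorials are simply dominated by the geometric growth of $(c_2\log n)^{k-1}$.

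Second, and more fundamentally, the decision to drop the short-circuit (``dropping it only strengthens the upper bound'') is exactly what breaks the argument. It is true that ignoring the early termination upper-bounds $Q(e,\pi)$, but the resulting upper bound is not $O(\log^2 n)$. With short-circuiting removed, the trace-counting quantity $\sum_k N_k(e)/k!$ depends on port-clique sizes (hence on degrees), and on dense graphs it is super-polylogarithmic; your own observation that $N_k(e)$ can be $(O(K)\Delta)^{k-1}$ already shows this. Behnezhad's analysis for RGMM does \emph{not} proceed by an ``inductive argument on decreasing ranks'' that produces a per-depth bound of the shape you write; it uses the short-circuit crucially, via the fact that the length of any query-trail is bounded by the parallel round complexity of randomized greedy.

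The paper's proof is structurally quite different from yours. It defines a rank-shift map $\phi(\pi,\vec{P})$ along a query-trail $\vec{P}$, builds a bipartite graph between permutations (on both sides) whose edges are exactly the query-trails ending at $\vec{e}$, and bounds the average degree of this bipartite graph. Two ingredients do the work: (i) a no-branching lemma showing that two query-trails mapped to the same target permutation must be nested, so each target has at most $\beta$ preimages among ``likely'' permutations whose trails have length $\le\beta$; and (ii) the parallel round complexity of RGMIS, which shows that all but a $1/n^2$ fraction of permutations are likely when $\beta=\Theta(\log^2 n)$. Step (ii) relies on the short-circuiting in the edge oracle (it is what ties trail length to parallel rounds), so this ingredient cannot be recovered once you discard the early termination. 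If you want to repair your approach, you will need to reinstate the short-circuit and replace the trace-counting by the shift-mapping/nesting argument.
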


Assuming the correctness of \Cref{lem: query-complexity}, we can complete the proof of \Cref{thm: query-complexity}.

\begin{myproof}[Proof of \Cref{thm: query-complexity}]
\begin{align*}
    \E_{v, \pi}[T(v, \pi)] = \frac{1}{n}\E_\pi\Big[\sum_{v\in V}T(v,\pi)\Big] & = \frac{1}{n}\E_\pi\Big[\sum_{v\in V}\sum_{e \in E'}Q(e,v,\pi)\Big] \\
    & = \frac{1}{n}\E_\pi\Big[\sum_{e \in E'}\sum_{v\in V}Q(e,v,\pi)\Big] = \frac{1}{n} \E_\pi\Big[\sum_{e \in E'} Q(e, \pi)\Big] \\ 
    & = \frac{1}{n} \sum_{e \in E'} \E_\pi[ Q(e, \pi)] = \frac{1}{n} \sum_{e \in E'} O(\log^2 n) \\
    & = \frac{1}{n} O(|E'|\cdot \log^2 n) = O(\bar{d}\cdot \log^2 n).\qedhere
\end{align*}
\end{myproof}

During the recursive calls to the edge oracle that starts from vertex $v$, the edges in the stack of recursive calls create a trail. 

\begin{observation}
Let $S = (e_1 = (v, u), e_2, \ldots, e_r)$ be the stack of recursive calls starting from vertex $v$. Then $(e_1, e_2, \ldots, e_r)$ is a trail in $G'$.
\end{observation}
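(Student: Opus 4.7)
The plan is to verify directly from the pseudocode that $(e_1,\ldots,e_r)$ satisfies the two defining properties of a trail: consecutive edges share an endpoint (it is a walk), and no edge is repeated (all $e_i$ are distinct).

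First, I would establish the walk property. Suppose the stack contains the frames $\EO(e_1, u_1, \pi), \EO(e_2, u_2, \pi), \ldots, \EO(e_r, u_r, \pi)$, where $u_1 = u$ is the endpoint of $e_1$ supplied by the initial vertex oracle call $\VO(v,\pi)$ (so $e_1 = (v,u)$ with $u = u_1$). Inspecting Algorithm~5, whenever $\EO(e_j, u_j, \pi)$ issues a recursive call, it does so only on edges incident to $u_j$; writing that edge as $e_{j+1}$, the recursive invocation is $\EO(e_{j+1}, u_{j+1}, \pi)$ where $u_{j+1}$ is the endpoint of $e_{j+1}$ opposite $u_j$. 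In particular, $u_j$ is an endpoint shared by $e_j$ and $e_{j+1}$, which is exactly the condition for $(e_1,\ldots,e_r)$ to be a walk in $G'$.

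Next, I would prove distinctness of the edges via a rank-monotonicity argument. In Algorithm~5, the edges that $\EO(e, u, \pi)$ recurses on are restricted to those $e'$ with $\pi(e') < \pi(e)$. Therefore, passing from frame $j$ to frame $j+1$ in the call stack strictly decreases $\pi$-rank, giving
\[
\pi(e_1) > \pi(e_2) > \cdots > \pi(e_r).
\]
Since $\pi$ is a permutation of $E'$, all values $\pi(e_j)$ are distinct, so the edges $e_1,\ldots,e_r$ themselves are pairwise distinct. Combining this with the walk property from the previous step yields that $(e_1,\ldots,e_r)$ is a trail in $G'$.

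There is essentially no conceptual obstacle here; the statement is a direct consequence of how the oracle is written. The only care needed is to track which endpoint is passed along the recursion so that the "shared vertex" between consecutive edges is correctly identified (namely, it is the second argument of the outer frame's call to $\EO$). The strict rank decrease along the stack then makes the distinctness immediate without requiring any appeal to the memoization in the first line of Algorithm~5.
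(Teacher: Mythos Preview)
Your proof is correct and follows essentially the same approach as the paper: both establish the walk property from the fact that the edge oracle only recurses on incident edges, and both derive edge-distinctness from the strict decrease of $\pi$-ranks along the call stack. Your version is a bit more explicit in tracking the shared endpoint and in writing out the full monotonicity chain $\pi(e_1) > \cdots > \pi(e_r)$, which the paper states separately as the next observation.
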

\begin{myproof}
Since in Line 2 of \Cref{alg:edgeoracle}, edge oracle only queries incident edges, $(e_1, e_2, \ldots, e_r)$ is a walk. It remains to show that all edges are distinct. Suppose that $e_i = e_j$ for some $i < j$ which implies $\pi(e_i) = \pi(e_j)$. Since the edge oracle queries edges in decreasing order, we have $\pi(e_j) < \pi(e_i)$ which is a contradiction. 
\end{myproof}

We direct the edges of the trail from $v$ to the other endpoint. We call a trail that starts from $v$ on the graph with edge permutation $\pi$, a $(v,\pi)$-query-trail. For an edge $e = (x,y)$, let $\vec{e}$ denote the directed edge from $x$ to $y$ and $\cev{e}$ denote a directed edge from $y$ to $x$. 

\begin{observation}\label{obs:increasing_trail}
Let $\vec{P} = (\vec{e_1}, \vec{e_2}, \ldots, \vec{e_k})$ be a $(v, \pi)$-query-trail; then $\pi(e_1) > \pi(e_2) > \ldots > \pi(e_k)$.
\end{observation}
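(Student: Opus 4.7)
The plan is to read the claim directly off the definition of the edge oracle. By inspection of \Cref{alg:edgeoracle}, when $\EO(e, u, \pi)$ is invoked during the recursion, the only edges on which it makes further recursive calls are the edges $e_1,\ldots,e_r$ incident to $e$ at the endpoint $u$ that satisfy $\pi(e_i) < \pi(e)$. In other words, every edge-oracle recursive call strictly decreases the $\pi$-value of the current edge. This single observation is the heart of the proof.

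With that in hand, I would formalize the claim by induction on the length $k$ of the $(v,\pi)$-query-trail $\vec{P}=(\vec{e_1},\ldots,\vec{e_k})$. For the base case $k\le 1$ there is nothing to prove. For the inductive step, consider two consecutive directed edges $\vec{e_i}$ and $\vec{e_{i+1}}$ on the trail. By the definition of the trail, $\vec{e_{i+1}}$ corresponds to a recursive call made from within $\EO(e_i,v_i,\pi)$ (where $v_i$ is the head of $\vec{e_i}$, which is also an endpoint of $e_{i+1}$). By the property of the edge oracle noted above, this recursive call is only issued for an edge whose $\pi$-value is strictly less than $\pi(e_i)$; hence $\pi(e_{i+1}) < \pi(e_i)$. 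Chaining the inequalities for $i=1,\ldots,k-1$ gives the desired $\pi(e_1) > \pi(e_2) > \cdots > \pi(e_k)$.

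There is essentially no obstacle here beyond bookkeeping: the only thing that needs care is making sure the direction of the trail is aligned with the direction of the recursion, so that the shared endpoint $v_i$ used when $\EO(e_i,v_i,\pi)$ spawns the call on $e_{i+1}$ is exactly the head of $\vec{e_i}$ (as defined in the paragraph preceding the observation). Once that alignment is noted, the monotonicity is immediate from the line ``$\pi(e_1) < \cdots < \pi(e_r) < \pi(e)$'' in \Cref{alg:edgeoracle}. Note also that this monotonicity independently reconfirms that the trail uses each edge at most once, matching the preceding observation.
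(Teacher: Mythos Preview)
Your proof is correct and follows essentially the same approach as the paper: both observe that \Cref{alg:edgeoracle} only recurses on edges with strictly smaller $\pi$-value than the current edge, so the stack of recursive calls is decreasing in $\pi$. Your version is slightly more explicit (induction and attention to the direction of the trail), but the argument is the same.
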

\begin{myproof}
During the answering whether an edge is in $\RGMIS(G',\pi)$, \Cref{alg:edgeoracle} recursively calls on edges with 
$\pi$ values lower than the value of the current edge. Therefore, the stack of recursive calls will be decreasing with respect to $\pi$ values.
\end{myproof}

Let $Q(\vec{e}, \pi) \subseteq Q(e, \pi)$ be the set of all query trails that end at $\vec{e}$ (with the same direction). In what follows, we obtain a bound for the query complexity for $\vec{e}$. We use this lemma to prove \Cref{lem: query-complexity}.

\begin{lemma}\label{lem: query-complexity-directed}
For any edge $e$, we have $\E_\pi[Q(\vec{e}, \pi)] = O(\log^2 n)$.
\end{lemma}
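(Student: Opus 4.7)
My plan is to adapt the query-trail analysis of \citet*{behnezhad2021}, originally developed for line graphs in the greedy matching context, to our port-refined conflict graph $G'$. The first step is to reduce the expectation to a deterministic combinatorial count. By \Cref{obs:increasing_trail}, a length-$k$ trail $T = (\vec{e_1},\dots,\vec{e_k}=\vec{e})$ ending at $\vec{e}$ contributes to $Q(\vec{e},\pi)$ if and only if $\pi(e_1)>\cdots>\pi(e_k)$, which happens with probability $1/k!$ under a uniformly random $\pi$ (since the $k$ restricted values form a uniformly random ranking). By linearity of expectation,
\begin{equation*}
\E_\pi[Q(\vec{e},\pi)] \;\le\; \sum_{k\ge 1}\frac{N_k(\vec{e})}{k!},
\end{equation*}
where $N_k(\vec{e})$ counts length-$k$ trails (with distinct edges) in $G'$ ending at $\vec{e}$.

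The heart of the argument is bounding this sum by $O(\log^2 n)$. The worst-case combinatorial bound $N_k \le ((K+1)\Delta)^{k-1}$ combined with $1/k!$ would yield only $e^{(K+1)\Delta}$, which is inadequate when $\Delta$ is large. Instead, I would analyze the ``backward'' branching tree rooted at $\vec{e}$: the children of a directed edge $\vec{e'}$ are its predecessors $\vec{e''}$ with $\pi(e'')>\pi(e')$, and the size of this alive tree equals $Q(\vec{e},\pi)$. Following Behnezhad's approach, I would establish (a) the expected depth of the alive tree is $O(\log n)$, via a rank-doubling argument: conditional on the current edge having $\pi$-rank $r$ from the top, the successor in the trail has $\pi$-rank uniformly distributed above, so in expectation the rank approximately halves and the depth is bounded logarithmically; and (b) the expected number of alive nodes at each fixed depth is $O(\log n)$, via a harmonic-sum bound on order statistics of uniform $\pi$-values. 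Multiplying these gives $\E_\pi[Q(\vec{e},\pi)]=O(\log^2 n)$.

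Regarding the port structure of $G'$---with $K+2$ parallel copies per edge of $G$ and port-dependent predecessor sets---its effect is limited to constant multiplicative factors. The number of predecessors of $\vec{e}$ with tail $x$ is at most $(K+1)\deg_G(x)$, compared to $\deg_G(x)-1$ in a pure line graph, but since $K$ is a constant determined by the approximation parameter $\epsilon$, this only inflates constants inside the $O(\cdot)$, and the Behnezhad-style analysis translates directly.

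The main obstacle is establishing the depth and width bounds of the alive tree in the second step. The width bound in particular requires a delicate conditional analysis: conditioning on the $\pi$-value of the current edge, the ranks of its alive predecessors are distributed as order statistics of uniform random variables, and the per-level expected contribution telescopes to $\sum_{r} O(1/r) = O(\log n)$. This harmonic-type bound is crucial for replacing the naive $\Theta(\Delta)$ blow-up per level by a logarithmic factor, and the directedness of $\vec{e}$ (as opposed to the undirected edge $e$ treated in \Cref{lem: query-complexity}) is what makes this conditional decomposition clean.
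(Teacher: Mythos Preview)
Your plan has genuine gaps and diverges from the paper's argument. The paper does \emph{not} bound depth and width of a branching tree separately. Instead it defines a permutation-shifting map $\phi(\pi,\vec{P})$ that cyclically rotates the $\pi$-values along a query-trail $\vec{P}$, builds a bipartite graph $H$ on two copies of the permutation set with an edge from $\pi$ to $\phi(\pi,\vec{P})$ for each query-trail ending at $\vec{e}$ (so that $\deg_H(\pi)=Q(\vec{e},\pi)$), and bounds the average $A$-side degree by double counting. The two ingredients are: (i) at most a $1/n^2$-fraction of permutations admit a query-trail longer than $\beta=c\log^2 n$, proved by relating trail length to the round complexity of parallel greedy MIS (\Cref{clm: large-round-complexity} with \Cref{lem: parallel-RMGIS-rounds}); and (ii) a \emph{no-branch lemma} (\Cref{lem:no-branch}) showing that any two query-trails mapping under $\phi$ to the same target permutation must be nested, so each $B$-vertex has at most $\beta$ neighbors among the ``likely'' $A$-vertices. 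This is also the structure of the argument in \cite{behnezhad2021} for line graphs; your description of that approach as a direct depth-times-width tree analysis is not accurate.

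Concretely, your sketch fails in at least two places. First, even granting your (a) and (b), you cannot conclude by multiplying expectations: depth and per-level widths of the tree are correlated, and $\E[\text{depth}]\cdot\E[\text{width}]$ does not bound $\E[\text{tree size}]$. Second, the ``rank-doubling'' heuristic for (a) is directionally confused. In a query-trail $(\vec{e_1},\dots,\vec{e_k}=\vec{e})$ the ranks \emph{decrease} toward $\vec{e}$, so walking backward from the root of your tree the ranks increase; and even going forward, the next recursive call is \emph{not} to a uniformly random lower-ranked conflicting edge, since the oracle scans conflicting edges in increasing rank order and may recurse into several of them before terminating. The depth bound the paper actually proves and uses is $O(\log^2 n)$ with high probability, not $O(\log n)$ in expectation. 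Your width bound (b) is likewise unsubstantiated: whether an edge is a valid predecessor on a query-trail depends on the global event that every lower-ranked conflicting edge probed earlier returned \false{}, which is not captured by order statistics of uniform ranks. The no-branch lemma is precisely the device that controls this branching, and your plan offers no substitute for it.
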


\begin{myproof}[Proof of \Cref{lem: query-complexity}]
Since $Q(e, \pi) = Q(\vec{e}, \pi) \cup Q(\cev{e}, \pi)$, by \Cref{lem: query-complexity-directed} we have 
\begin{align*}
\E_\pi[Q(e, \pi)] \leq \E_\pi[Q(\vec{e}, \pi)] + \E_\pi[Q(\cev{e}, \pi)] = O(\log^2 n) + O(\log^2 n) = O(\log^2 n). \qedhere
\end{align*}
\end{myproof}

Given a permutation $\pi$ and a trail $\vec{P}=(\vec{e_1}, \vec{e_2}, \ldots, \vec{e_k})$, we define $\phi(\pi, \vec{P})$ to be another permutation $\sigma$ over the edges such that:
\begin{align*}
    (\sigma(e_1), \sigma(e_2), \ldots, \sigma(e_{k-1}), \sigma(e_k)) \coloneqq (\pi(e_2), \pi&(e_3), \ldots, \pi(e_{k}), \pi(e_{1}))\\
    \pi(e') = \sigma(e') \quad\quad \forall e' \notin \vec{P}
\end{align*}

Given an edge $\vec{e}$, by using the above mapping function we can construct a bipartite graph $H$ with two parts $A$ and $B$ such that each part has $|E'|!$ vertices showing different permutations of edges. For a permutation $\pi \in A$ and a $(v,\pi)$-query-trail $\vec{P}$ that ends at $\vec{e}$ for some arbitrary vertex $v$, we connect $\pi$ in $A$ to $\phi(\pi, \vec{P})$ in $B$. Note that by construction of $H$, $\text{deg}(\pi_A) = Q(\vec{e},\pi_A)$ for all $\pi_A \in A$, since we have a unique edge for each query-trail that ends at $\vec{e}$ with permutation $\pi_A$. Hence, in order to prove \Cref{lem: query-complexity}, it is sufficient to prove that $\E_{\pi_A \sim A}[\text{deg}_H(\pi_A)]=O(\log^2 n)$. Let $\mathcal{Q}(\vec{e}, \pi)$ be the set of all query-trails for permutation $\pi$ that ends at $\vec{e}$. Let $\beta = c \log^2 n$ for some large $c$. We partition permutations into two sets of {\em likely} and {\em unlikely} permutations called $L$ and $U$ as follows:
\begin{align*}
    L \coloneqq \left\{\pi \in \Pi \Big\lvert \max_{\vec{P} \in \mathcal{Q}(\vec{e}, \pi)} |\vec{P}| \leq \beta \right\} \quad\quad U \coloneqq \Pi \setminus L.
\end{align*}

Likely permutations are those permutations that the longest query-trail ending at $\vec{e}$ has length at most $\beta$ and unlikely permutations are the remaining permutations. Let $A_L$ be the set of vertices corresponding to the likely permutations in $A$ and $A_U$ be the set of vertices corresponding to the unlikely permutations. The intuition behind this partitioning is that the set of unlikely permutations makes up a tiny fraction of all permutations which is formalized in \Cref{lem: few-long-trail}.

\begin{lemma}\label{lem: few-long-trail}
If $c$ is a large enough constant, then we have $|A_U| \leq |E'|! / n^2$.
\end{lemma}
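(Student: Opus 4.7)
The plan is to use a union bound over all structurally valid trails of length $\beta+1$ ending at $\vec{e}$, exploiting the monotonicity of ranks along query-trails guaranteed by \Cref{obs:increasing_trail}. The first key observation is that if $\pi \in U$, then taking the last $\beta+1$ edges of any query-trail of length greater than $\beta$ ending at $\vec{e}$ gives a structurally valid trail $\vec{P}=(\vec{e_1},\ldots,\vec{e_{\beta+1}})$ ending at $\vec{e}$ along which $\pi$ is strictly decreasing. This truncation is valid because $\VO$ does not short-circuit, so any suffix of a query-trail, started from the tail of its first edge, is itself a query-trail.

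Since the ranks along a fixed $\vec{P}$ form a uniformly random ordering under a uniform $\pi$, the probability that they are strictly decreasing is exactly $1/(\beta+1)!$, so a union bound gives
\[
|A_U| \;\le\; N_{\beta+1}\cdot \frac{|E'|!}{(\beta+1)!},
\]
where $N_{\beta+1}$ counts structurally valid trails of length $\beta+1$ ending at $\vec{e}$. Reducing the claim to $N_{\beta+1}\le (\beta+1)!/n^2$, I would bound $N_{\beta+1}$ by walking backwards from $\vec{e}$: at each step the previous edge must conflict with the current one at their shared endpoint, and each vertex-port clique of the conflict graph has at most $(K+1)\Delta$ members. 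The crude bound $N_{\beta+1}\le((K+1)\Delta)^\beta$ combined with Stirling's inequality $(\beta+1)!\ge((\beta+1)/e)^{\beta+1}$ closes the argument directly whenever $\Delta$ is polylogarithmic in $n$.

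The main obstacle is handling the general case where $\Delta$ can be as large as $n$, since then the crude bound on $N_{\beta+1}$ is too loose for $\beta = c\log^2 n$. I expect to overcome this by partitioning edges into $O(\log n)$ dyadic rank-levels, where level $\ell$ contains edges with $\pi$-rank in the range $(|E'|\cdot 2^{-\ell-1}, |E'|\cdot 2^{-\ell}]$, and observing that a decreasing trail must traverse these levels in non-decreasing order. A second, per-level union bound, which exploits the geometric decay of level sizes and the union-of-cliques structure of the conflict graph (each node lying in exactly two cliques, one for each of its vertex-ports in $G$), would then show that with probability at least $1-1/n^{c'}$ the trail uses at most $O(\log n)$ edges from any single level. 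Combining this $O(\log n)$ per-level bound across the $O(\log n)$ levels yields a total trail length of $O(\log^2 n)$ with overall failure probability at most $1/n^2$, matching $\beta=c\log^2 n$ for $c$ sufficiently large and completing the proof.
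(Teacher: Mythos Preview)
Your approach via a union bound over all decreasing conflict-trails has a genuine gap when the maximum degree $\Delta$ is large, and this is precisely the regime the paper needs (the algorithm is ultimately run on the graph $\hat{G}$, where $\Delta = \Theta(n)$).

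The problem is the relaxation step. You replace ``$\pi \in U$'' by the event ``there exists a decreasing conflict-trail of length $\beta+1$ ending at $\vec{e}$'' and then try to bound the latter. But this superset is far too large. The conflict graph $H$ is a union of cliques, one per vertex-port of $G$, of size up to $(K+1)\Delta$; and in any clique of size $D$ equipped with uniformly random ranks, a decreasing path of length $D$ through the clique \emph{always} exists (simply sort the clique by rank). So whenever $\vec{e}$ lies in a clique of size $\Theta(n)$, with probability $1$ there is a decreasing conflict-trail of length $\Theta(n) \gg \beta = c\log^2 n$ ending at $\vec{e}$. No dyadic-level refinement of the union bound can help here: the event you are trying to show has probability $\le 1/n^2$ actually has probability $1$.

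What your relaxation discards is exactly the structure that makes \emph{query}-trails short. The edge oracle processes conflicting neighbors in rank order and \emph{short-circuits} as soon as one returns \textsc{true}; this forces consecutive triples $e_{i},e_{i+1},e_{i+2}$ along a query-trail to interact with the parallel greedy MIS dynamics in a way that lower-bounds the round complexity $\rho(H,\pi)$ by half the trail length. This is the content of the paper's \Cref{clm: large-round-complexity}, and arbitrary decreasing conflict-trails have no such property. The paper then invokes the Blelloch--Fineman--Shun bound $\rho(H,\pi)=O(\log^2 n)$ w.h.p.\ (\Cref{lem: parallel-RMGIS-rounds}) as a black box and concludes in one line.

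If you want a self-contained argument, you must retain and exploit the short-circuiting structure of query-trails rather than relax to decreasing trails. At that point you are essentially reproving \Cref{clm: large-round-complexity} together with the $O(\log^2 n)$ round-complexity bound for parallel greedy MIS; the dyadic-level sketch you describe is close in spirit to the standard proof of the latter, but it has to be applied to the MIS process itself, not to raw decreasing trails.
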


Before proving \Cref{lem: few-long-trail}, we introduce the parallel implementation of the greedy maximal independent set.

\noindent\textbf{Parallel Randomized Greedy Maximal Independent Set:} Let $G$ be a graph and $\pi$ be a permutation over its edges. In each iteration, we pick all vertices whose rank is less than all their neighbors and remove all their neighbors. We denote the number of rounds in this algorithm until $G$ becomes empty as \textit{round complexity} and we show it with $\rho(G, \pi)$.

It is clear that the output of the parallel randomized greedy MIS is the same as $\RGMIS(G, \pi)$. We have the following known result about the round complexity of parallel randomized greedy MIS.

\begin{lemma}[{\cite[Theorem~3.5]{Blelloch12}}]\label{lem: parallel-RMGIS-rounds}
For a uniformly random chosen permutation $\pi$ over edges of $G$, we have $\rho(G, \pi) = O(\log^2 n)$, with probability of at least $1 - \frac{1}{n^2}$.
\end{lemma}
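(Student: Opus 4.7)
The plan is to use the cyclic-shift bijection $\phi$ to convert each $\pi \in A_U$ into a permutation $\sigma$ whose parallel RGMIS execution on $H$ requires many rounds, and then invoke \Cref{lem: parallel-RMGIS-rounds} to bound the size of the image.

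First I would fix, for each $\pi \in A_U$, a canonical $(v,\pi)$-query-trail $\vec{P}(\pi) = (e_1, \ldots, e_k)$ of length $k > \beta$ ending at $\vec{e}$, for instance the longest such trail with ties broken lexicographically in the edge-index sequence. Setting $\sigma = \phi(\pi, \vec{P}(\pi))$ produces a map $\Phi : A_U \to S_{|E'|}$. For each fixed trail $\vec{P}$, the operation $\phi(\cdot, \vec{P})$ is a bijection of $S_{|E'|}$, which is the right starting point for a counting argument.

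Next I would analyze the structural consequence of the shift. From the definition of $\phi$, under $\sigma$ the trail's ranks satisfy $\sigma(e_1) > \sigma(e_2) > \cdots > \sigma(e_{k-1})$ while $\sigma(e_k)$ exceeds all of them. Consecutive pairs $e_i, e_{i+1}$ still share the same port that made them adjacent under $\pi$, so the chain $e_{k-1}, e_{k-2}, \ldots, e_1, e_k$ is an actual path in $H$. The key claim is that this path forces the parallel RGMIS execution on $H$ under $\sigma$ to take at least $\Omega(k)$ rounds: within the chain, only every other vertex can be added to the MIS in any one round (each MIS addition removes only its immediate chain-neighbors), so resolving all $k$ chain vertices requires at least $\Omega(k)$ rounds and hence $\rho(H,\sigma) = \Omega(k)$.

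I would then verify that $\Phi$ is injective. Given $\sigma$ in the image, the fact that $e_k = \vec{e}$ is fixed and that $\sigma(e_k)$ must be the maximum-rank element among the trail identifies $\vec{P}(\pi)$ uniquely (the canonical tie-breaking rule completes this), after which $\pi$ is recovered by inverting the cyclic shift. Finally, I would invoke \Cref{lem: parallel-RMGIS-rounds} on $H$: since $|V_H| = (K+2)|E(G)| = O(n^2)$ and thus $\log^2|V_H| = O(\log^2 n)$, the number of permutations $\sigma$ with $\rho(H,\sigma) > c_0\log^2|V_H|$ is at most $|E'|!/|V_H|^2 \leq |E'|!/n^2$. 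Choosing $c$ large enough that $\beta/2 > c_0\log^2|V_H|$ makes every $\sigma \in \Phi(A_U)$ fall into this low-probability set; combined with injectivity this yields $|A_U| \leq |E'|!/n^2$.

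The hardest step is the structural claim in the middle: showing that the decreasing chain created by the shift truly forces $\rho(H,\sigma) = \Omega(k)$. A priori, vertices of the chain could be removed in parallel RGMIS by off-chain MIS neighbors, short-circuiting the chain's enforced round progression. Controlling this interference is delicate and must exploit the canonical choice of $\vec{P}$ (ensuring the chain corresponds to the actual recursion path inside the edge oracle) together with a coupling that tracks which off-chain MIS events are compatible with $\vec{P}(\pi)$ being the longest trail in $\pi$. This is precisely where the technical heart of the lemma lies.
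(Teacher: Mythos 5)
Your proposal does not prove the stated lemma. The statement you were asked to prove is the round-complexity bound for parallel randomized greedy MIS itself: for a uniformly random permutation $\pi$, the parallel greedy process finishes in $O(\log^2 n)$ rounds with probability at least $1-\tfrac{1}{n^2}$. This is an external result (Theorem~3.5 of Blelloch, Fineman, and Shun), which the paper cites as a black box and does not reprove; its proof requires an entirely different argument, namely analyzing how processing successive prefixes of a random order shrinks the residual graph so that the longest dependency chain in the greedy execution has length $O(\log^2 n)$ w.h.p. Nothing in your writeup addresses this.

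What you have instead sketched is essentially the paper's proof of \Cref{lem: few-long-trail}, i.e., the bound $|A_U| \leq |E'|!/n^2$ on the number of unlikely permutations (via the correspondence between long query-trails and high round complexity, which is \Cref{clm: large-round-complexity} in the paper). Crucially, your argument \emph{invokes} \Cref{lem: parallel-RMGIS-rounds} as a known fact to bound the number of permutations with large round complexity --- so relative to the assigned statement, the proposal is circular: you are using the lemma to prove the lemma. The structural claim you flag as the ``technical heart'' (that a decreasing conflicting chain of length $k$ forces $\Omega(k)$ rounds) is a legitimate and nontrivial step, but it belongs to the proof of \Cref{clm: large-round-complexity} and \Cref{lem: few-long-trail}, not to the proof of the round-complexity bound itself.
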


In order to use the above lemma, we need to show that for an unlikely permutation, the round complexity is large and therefore, small fraction of permutations are unlikely as a result of \Cref{lem: parallel-RMGIS-rounds}.

\begin{claim}\label{clm: large-round-complexity}
Let $\vec{P}$ be query-trail in $G'$ with permutation $\pi$. Then $\rho(G', \pi) \geq \lfloor \frac{|\vec{P}|}{2}\rfloor$.
\end{claim}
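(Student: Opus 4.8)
The plan is to track each edge $e_j$ of the query-trail $\vec P=(\vec{e_1},\dots,\vec{e_k})$ through the parallel randomized greedy MIS on $G'$ (equivalently, on the conflict graph $H$), recording the round $R_j$ in which $e_j$ is \emph{resolved}: either it enters $\RGMIS(G',\pi)$, or it is deleted because a conflicting edge does. Since $\rho(G',\pi)\ge R_j$ for all $j$, it suffices to prove $R_2\ge\lfloor k/2\rfloor$; the cases $k\le 1$ are trivial, so assume $k\ge 2$. First I would record the standard recursive description of $R$. Writing $L(e)=\{g:\ g\text{ conflicts with }e,\ \pi(g)<\pi(e)\}$: if $e\in\RGMIS(G',\pi)$ then $R(e)=1+\max_{g\in L(e)}R(g)$ (with $\max\emptyset=0$), since $e$ is selected in the first round after all of $L(e)$ has been removed; and if $e\notin\RGMIS(G',\pi)$ then $R(e)=\min\{R(g):\ g\in L(e),\ g\in\RGMIS(G',\pi)\}$, this set being nonempty because $e$ can only be deleted by a lower-ranked conflicting edge entering the MIS.

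The core is a per-step comparison: for every $2\le i\le k-1$, $R_i\ge R_{i+1}$, and moreover $R_i\ge R_{i+1}+1$ unless $e_{i+1}\in\RGMIS(G',\pi)$. Let $v_{i-1},v_i$ be the endpoints of $e_i$, so $e_{i-1},e_i$ conflict at $v_{i-1}$ and $e_i,e_{i+1}$ conflict at $v_i$. Since $\EO(e_i,v_i,\pi)$ sits on the recursion stack and calls $\EO(e_{i+1},v_{i+1},\pi)$, its loop in \Cref{alg:edgeoracle} has already examined, in increasing rank order, every edge conflicting with $e_i$ at $v_i$ of rank $<\pi(e_{i+1})$, and by \Cref{clm:correctness-edge-oracle} each such edge returned \false and hence is not in the MIS; applying the same reasoning to the caller $\EO(e_{i-1},v_{i-1},\pi)$, every edge conflicting with $e_i$ at $v_{i-1}$ of rank $<\pi(e_i)$ is also not in the MIS. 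Therefore $L(e_i)\cap\RGMIS(G',\pi)$ consists only of edges conflicting with $e_i$ at $v_i$ with rank in $[\pi(e_{i+1}),\pi(e_i))$. Now, if $e_i\in\RGMIS(G',\pi)$ then $e_{i+1}\notin\RGMIS(G',\pi)$ (they conflict) and, since $e_{i+1}\in L(e_i)$, $R_i=1+\max_{g\in L(e_i)}R(g)\ge 1+R_{i+1}$. If $e_i\notin\RGMIS(G',\pi)$, take any $g\in L(e_i)$ in the MIS: it occupies the same port of $v_i$ as $e_i$, hence as $e_{i+1}$, so $g$ conflicts with $e_{i+1}$; if $\pi(g)>\pi(e_{i+1})$ then $e_{i+1}\in L(g)$, so $R_{i+1}<R(g)$ because $g$ is a local minimum when it is selected, while $\pi(g)=\pi(e_{i+1})$ forces $g=e_{i+1}$. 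Hence $R_i=\min_g R(g)$ equals $R_{i+1}$ when $e_{i+1}\in\RGMIS(G',\pi)$ and strictly exceeds it otherwise.

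To conclude, telescope: $R_2=R_k+\sum_{i=2}^{k-1}(R_i-R_{i+1})\ge R_k+\#\{i\in\{2,\dots,k-1\}:\ e_{i+1}\notin\RGMIS(G',\pi)\}$. The complementary indices correspond to the MIS edges among $e_3,\dots,e_k$, and since no two consecutive trail edges can both be in the MIS (they conflict), there are at most $\lceil(k-2)/2\rceil$ of them. Thus $R_2\ge 1+\lfloor(k-2)/2\rfloor=\lfloor k/2\rfloor$, so $\rho(G',\pi)\ge R_2\ge\lfloor|\vec P|/2\rfloor$.

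I expect the per-step comparison to be the main obstacle. The delicate point is preventing $e_i$ from being deleted early by an MIS edge meeting it at the ``backward'' endpoint $v_{i-1}$, which would make $R_i$ small and destroy the monotonicity; ruling this out is exactly what the scanning order of $\EO$ --- i.e., the reachability structure of the query-trail combined with \Cref{clm:correctness-edge-oracle} --- provides. Parallel copies, and the degenerate case where $e_{i+1}$ also conflicts with $e_i$ at the backward endpoint $v_{i-1}$, need a little extra attention, but they only force the strict case and do not weaken the bound.
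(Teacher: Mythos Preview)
Your proof is correct. The core insight is the same as the paper's: the scanning order of $\EO$ guarantees that no MIS edge of rank below $\pi(e_i)$ can meet $e_i$ at the backward endpoint $v_{i-1}$, and no MIS edge of rank below $\pi(e_{i+1})$ can meet $e_i$ at the forward endpoint $v_i$, so the only candidate for the MIS edge that deletes $e_i$ sits at $v_i$ with rank in $[\pi(e_{i+1}),\pi(e_i))$.

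The packaging differs. The paper argues by contradiction that $\rho(e_i)>\rho(e_{i+2})$ for $1<i<k-1$: assuming $\rho(e_i)\le\rho(e_{i+2})$ it first shows $e_{i+1}$ is still alive at round $\rho(e_i)$, hence $e_i$ is not a local minimum, hence it is killed by some $e'\in\RGMIS$; the scanning argument then forces $\pi(e')<\pi(e_{i+1})$ or $\pi(e')<\pi(e_i)$ at the wrong endpoint, so the oracle would have found $e'$ earlier and the trail would have terminated. Strict skip-by-two decrease immediately yields $\rho(e_2)\ge\lfloor k/2\rfloor$.

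You instead write down the exact recursion for the resolution round and prove the finer consecutive inequality $R_i\ge R_{i+1}$, strict unless $e_{i+1}\in\RGMIS$, then telescope and bound the number of non-strict steps by the number of MIS edges among $e_3,\dots,e_k$, which is at most $\lceil(k-2)/2\rceil$ since consecutive trail edges conflict. This is a direct argument rather than a contradiction, and in fact your per-step statement implies the paper's skip-by-two statement (two consecutive non-strict steps would force two consecutive MIS edges). Both routes rest on the identical structural observation about $L(e_i)\cap\RGMIS$; yours makes the recursion explicit, the paper's hides it inside the contradiction.
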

\begin{myproof}
Let $\vec{P} = (\vec{e_1}, \vec{e_2}, \ldots, \vec{e_k})$ be a query-trail. By \Cref{obs:increasing_trail}, we have $\pi(e_1) > \pi(e_2) > \ldots > \pi(e_k)$, where $e_k$ is the last edge on the trail. Let $\rho(e)$ show the round in which edge $e$ is deleted by the parallel algorithm. If we can show that for $i < k - 1$, $\rho(e_i) > \rho(e_{i+2})$, then we have that $\rho(e_2) \geq \lfloor \frac{k}{2}\rfloor$ which completes the proof. We prove it using a contradiction. Assume that $\rho(e_i) \leq \rho(e_{i+2})$ for some $1 < i < k - 1$. Note that $\rho(e_{i+1}) \geq \rho(e_i)$, otherwise, when $e_{i+1}$ is deleted from the graph, one of its corresponding ports that is shared with $e_i$ and $e_{i+2}$ was occupied which implies that at least one of $e_i$ and $e_{i+2}$ should be deleted at the same time. Hence, in round $\rho(e_i)$, edge $e_{i+1}$ is still present in the graph. Therefore, $e_i$ is not a local minimum in round $\rho(e_i)$ and is deleted due to presence of an edge $e'$ in the solution. Note that $e' \neq e_{i+1}$ since $e_{i+1}$ is not the minimum edge because $e_{i+2}$ is still in the graph. If $e'$ is only incident to $e_i$, $\EO(e_{i-1}, \cdot, \pi)$ should call $\EO(e', \cdot, \pi)$ before $\EO(e_i, \cdot, \pi)$ since $e'$ is the local minimum in round $\rho(e_i)$ and therefore $\pi(e') < \pi(e_i)$. If $e'$ is incident to both $e_i$ and $e_{i+1}$, $\EO(e_i, \cdot, \pi)$ should call $\EO(e', \cdot,\pi)$ before $\EO(e_{i+1}, \cdot,\pi)$ since $e'$ is local minimum at round $\rho(e_i)$ and therefore $\pi(e') < \pi(e_{i+1})$. In both cases, the edge oracle terminates and will not query edge $e_{i+2}$. Hence, the assumption that $\rho(e_i) \leq \rho(e_{i+2})$ leads to a contradiction and the proof is complete.
\end{myproof}

Now we are ready to prove \Cref{lem: few-long-trail}.

\begin{proof}[Proof of \Cref{lem: few-long-trail}]
For each unlikely permutation $\pi \in U$, there exists a query-trail of length larger than $\beta$. By \Cref{clm: large-round-complexity}, we have $\rho(G, \pi) \geq \lfloor \frac{\beta + 1}{2}\rfloor$. Since $\beta = c\log^2 n$, by choosing $c$ large enough and \Cref{lem: parallel-RMGIS-rounds}, we have that $|U|/|\Pi| \leq 1/n^2$. Therefore, $|U| \leq |E'|!/n^2$ which implies that $|A_U| \leq |E'|!/n^2$ since $A_U$ represents vertices that correspond to unlikely permutations.
\end{proof}

Next, we show that each vertex $\pi_B \in B$, has at most $\beta$ neighbors between likely permutations in part $A$ in bipartite graph $H$.

\begin{lemma}\label{lem: few-likely-neighbors}
Let $\pi_Y$ be a vertex in $Y$. Then $\pi_Y$ has most $\beta$ neighbors in $X_L$. 
\end{lemma}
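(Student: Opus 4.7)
The strategy is to inject the likely neighbors of $\pi_Y$ into the set $\{1, 2, \ldots, \beta\}$ via the length of the associated query trail. For each likely neighbor $\pi_X \in X_L$ of $\pi_Y$, the trail $\vec{P}$ satisfying $\phi(\pi_X, \vec{P}) = \pi_Y$ is uniquely determined by the pair $(\pi_X, \pi_Y)$: because $\phi$ acts as a cyclic shift only on $\vec{P}$ and has no fixed points on a strictly $\pi_X$-decreasing trail, $\vec{P}$ must consist of exactly the edges on which $\pi_X$ and $\pi_Y$ disagree. Moreover $|\vec{P}| \leq \beta$ since $\pi_X$ is likely. Hence counting likely neighbors reduces to counting such valid trails of length at most $\beta$.

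The key step is to show that for each length $\ell \in \{1, \ldots, \beta\}$, there is at most one trail $\vec{P}$ of length $\ell$ ending at $\vec{e}$ that yields a valid query trail in the corresponding $\pi_X = \phi^{-1}(\pi_Y, \vec{P})$. I would prove this by a canonical reconstruction of $\vec{P} = (e_1, \ldots, e_\ell)$ from $\pi_Y$, $\vec{e}$, and $\ell$, proceeding backwards from $e_\ell = e$. The predecessor $e_{\ell-1}$ must be a conflicting edge at the entry port of $\vec{e_\ell}$, and for $\vec{P}$ to be a valid query trail in $\pi_X$ the edge oracle's execution in $\pi_X$ must actually reach $e_\ell$ inside the iteration of $\EO(e_{\ell-1}, \cdot, \pi_X)$; equivalently, every conflicting edge $e'$ at that port with $\pi_X(e') < \pi_X(e_\ell) = \pi_Y(e_{\ell-1})$ must return $\false$ in $\pi_X$'s recursion. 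Because $\phi$ preserves $\pi$-values on every non-trail edge, these $\false$-return conditions can be rephrased in terms of $\pi_Y$-based oracle calls on the unchanged part of the graph, and combined with the trail-structure requirement that $\pi_Y(e_\ell)$ be the maximum trail value while $(e_1, \ldots, e_{\ell-1})$ has strictly decreasing $\pi_Y$-values, these pin down $e_{\ell-1}$ uniquely. Iterating the reasoning identifies $e_{\ell-2}, \ldots, e_1$ in turn.

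Summing the per-length bound over $\ell = 1, \ldots, \beta$ then gives at most $\beta$ likely neighbors of $\pi_Y$, establishing the lemma. The main obstacle is precisely the inductive uniqueness step: the validity condition for $\vec{P}$ depends on the full permutation $\pi_X$ through $\EO$ return values, and one must carefully leverage the fact that $\pi_X$ and $\pi_Y$ agree off the trail so that the relevant oracle calls on non-trail edges have outputs determined by $\pi_Y$'s recursion, leaving at most one admissible trail configuration compatible with both the $\pi_Y$-structure and the reachability constraints.
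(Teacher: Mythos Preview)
Your high-level plan coincides with the paper's: label each edge $(\pi_X,\pi_Y)$ of the bipartite graph by the length of the associated query trail, observe that lengths are at most $\beta$ for $\pi_X\in X_L$, and show that distinct likely neighbors of a fixed $\pi_Y$ receive distinct labels. Your observation that the trail equals the symmetric difference of $\pi_X$ and $\pi_Y$ (since the cyclic shift has no fixed point on a strictly decreasing sequence) is correct and cleanly justifies that the edge determines its trail.

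Where your proposal diverges from the paper is the argument for ``at most one trail of each length.'' The paper does \emph{not} attempt a forward or backward reconstruction of the trail from $(\pi_Y,\vec{e},\ell)$. Instead it isolates this as a separate structural result, \Cref{lem:no-branch}: if two query trails $\vec P,\vec P'$ (under permutations $\pi,\pi'$) both end at $\vec e$ and satisfy $\phi(\pi,\vec P)=\phi(\pi',\vec P')$, then the shorter is a sub-trail of the longer. Equal lengths then force $\vec P=\vec P'$ and hence $\pi=\pi'$. The proof is by contradiction at the first branch point $e_b$: one shows (via \Cref{clm:equal-ranks} and \Cref{clm:new-branch}) that $e_{b+1}\in\RGMIS(G',\pi')$, so $\EO(e'_{b+1},\cdot,\pi')$ encounters $e_{b+1}$ (which returns \true) before ever calling $\EO(e_b,\cdot,\pi')$, contradicting the validity of $\vec P'$.

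Your reconstruction sketch has a genuine gap at the step ``these pin down $e_{\ell-1}$ uniquely.'' The \false-return condition you cite is a property of the \emph{entry port of $e_\ell$}, not of $e_{\ell-1}$: it says every conflicting edge at that port with $\pi_X$-rank below $\pi_X(e_\ell)$ returns \false. This is satisfied for \emph{every} conflicting edge $f$ at that port with $\pi_X(f)>\pi_X(e_\ell)$, so it does not single out $e_{\ell-1}$ among the many higher-ranked conflicting edges. The additional constraint you invoke, that $(e_1,\dots,e_{\ell-1})$ have strictly decreasing $\pi_Y$-values with minimum at $e_{\ell-1}$, is circular at this stage: you do not yet know the value $\pi_Y(e_{\ell-1})=\pi_X(e_\ell)$ because $\pi_X$ depends on the entire (unknown) trail. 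What actually forces uniqueness is a comparison between two candidate trails, showing that one's continuation lands in $\RGMIS$ under the other's permutation and therefore blocks it---precisely the content of \Cref{lem:no-branch}. A purely local backwards reconstruction does not naturally surface this ``one candidate blocks the other'' phenomenon, and your sketch would need to import that insight to close the gap.
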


Before proving this lemma, we show how we can prove \Cref{lem: query-complexity-directed} using \Cref{lem: few-long-trail} and \Cref{lem: few-likely-neighbors}.

\begin{proof}[Proof of \Cref{lem: query-complexity-directed}]
Note that by \Cref{obs: q-upperbound}, degree of each vertex $\pi_A \in A$ is at most $O(n^2)$. Combining \Cref{lem: few-long-trail}, we have
$$E(A_U, B) \leq |E'|! / n^2 \cdot O(n^2) \leq O\big(|E'|!).$$

Moreover, by \Cref{lem: few-likely-neighbors}, each vertex $\pi_B \in B$ has at most $O(\beta)$ neighbors in $A_L$. Since $H$ is a bipartite graph,  $E(A_L, B) \leq O(\beta)\cdot |A_L|$. Therefore, sum of degrees of all vertices in $A$ is at most 
\begin{align*}
    E(A_L, B) + E(A_U, B) \leq O(\beta)\cdot|A_L| + O(|E'|!) \leq O(\beta\cdot|E'|!).
\end{align*}

For a random vertex in $A$, the expected degree is $\frac{O(\beta \cdot |E'|!)}{|E'|!} = O(|E'|)$. Combining with $\beta = c\log^2 n$ and $\deg(\pi_A) = Q(\vec{e},\pi_A)$ completes the proof.
\end{proof}

The rest of this section, we prove \Cref{lem: few-likely-neighbors}. Before proving \Cref{lem: few-likely-neighbors}, we prove that if two different query-trails that are mapped to two different permutations of $A_L$ to $\pi_B \in B$ by $\phi$, the shorter query-trail must be subgraph of the longer one.

\begin{lemma}\label{lem:no-branch}
Let $\pi$ and $\pi'$ be two different permutations, and $\vec{P}$ and $\vec{P}'$ be $(v,\pi)$- and $(v',\pi')$-query-trail, respectively, that both end at edge $\vec{e}$. If $\phi(\pi,\vec{P})=\phi(\pi',\vec{P}')$ and $|\vec{P}| \geq |\vec{P}'|$, then $\vec{P}'$ is a subgraph of $\vec{P}$.
\end{lemma}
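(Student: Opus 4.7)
The plan is to prove the stronger statement that $\vec{P}'$ is in fact the length-$|\vec{P}'|$ suffix of $\vec{P}$; this obviously implies the subgraph conclusion. I will argue by reverse induction on the position $i$, showing that for every $i\in\{1,\ldots,|\vec{P}'|\}$ the $i$-th to last directed edges of $\vec{P}$ and $\vec{P}'$ coincide. The base case $i=1$ is immediate since by hypothesis both trails terminate at $\vec{e}$.

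For the inductive step, assume the last $i$ directed edges of the two trails already agree, and call this common suffix $\vec{f_1},\ldots,\vec{f_i}$ with $\vec{f_i}=\vec{e}$. Assuming $i<|\vec{P}'|$, both trails possess an $(i{+}1)$-th to last edge; write $\vec{g}:=\vec{e_{k-i}}$ for this edge in $\vec{P}$ and $\vec{g}':=\vec{e_{k'-i}'}$ for the corresponding one in $\vec{P}'$. By the trail convention and \Cref{obs:increasing_trail}, both $\vec{g}$ and $\vec{g}'$ are directed so that their heads coincide with the tail $u$ of $\vec{f_1}$, and both $g,g'$ conflict with $f_1$ at the common port of $u$. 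Plugging the common suffix into $\phi(\pi,\vec{P})=\phi(\pi',\vec{P}')=\sigma$ yields $\sigma(f_j)=\pi(f_{j+1})=\pi'(f_{j+1})$ for $j=1,\ldots,i-1$, so $\pi$ and $\pi'$ agree on $f_2,\ldots,f_i$; likewise $\sigma(f_i)=\pi(e_1)=\pi'(e_1')$, and most importantly $\sigma(g)=\pi(f_1)$ and $\sigma(g')=\pi'(f_1)$. Because $\sigma$ is a bijection, proving $\vec{g}=\vec{g}'$ (as directed edges) reduces to proving the single equality $\pi(f_1)=\pi'(f_1)$.

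To establish this sub-claim I would exploit the fact that the segment $\EO(f_1,w,\cdot),\EO(f_2,\cdot,\cdot),\ldots,\EO(f_i,\cdot,\cdot)$, where $w$ is the head of $\vec{f_1}$, sits inside the call stack under both $\pi$ and $\pi'$. Inside $\EO(f_1,w,\cdot)$, conflicts of $f_1$ at $w$ are iterated in increasing $\pi$-order up to the threshold $\pi(f_1)$, and for the snapshot to reach precisely the call $\EO(f_2,\cdot,\cdot)$ every such conflict $c$ with $\pi$-value strictly less than $\pi(f_2)=\sigma(f_1)$ must return $\false$, while any with $\pi$-value between $\pi(f_2)$ and $\pi(f_1)$ would either block the descent or simply have been excluded by the threshold. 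Since $\pi$ and $\pi'$ coincide outside $\vec{P}\cup\vec{P}'$ and since the $\pi$-ordering along each trail is strictly decreasing, any conflict of $f_1$ at $w$ that is a trail edge of either trail must lie in the common suffix itself; hence the thresholds governing the descent in the two executions must agree, forcing $\pi(f_1)=\pi'(f_1)$.

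The main obstacle is exactly this uniqueness of $\pi(f_1)$: one needs to rule out configurations where an in-between conflict belongs to $(\vec{P}\setminus\vec{P}')\cup(\vec{P}'\setminus\vec{P})$ and carries mismatched $\pi$- and $\pi'$-values that would otherwise allow two distinct thresholds to produce the same recursive descent into $(\vec{f_2},\ldots,\vec{f_i})$. The careful accounting is driven by the inequalities $\sigma(e_{k})>\sigma(e_1)>\sigma(e_2)>\cdots>\sigma(e_{k-1})$ forced by \Cref{obs:increasing_trail}, applied symmetrically to $\vec{P}'$. Once the sub-claim is confirmed, the directions of $\vec{g}$ and $\vec{g}'$ automatically match (both are anchored with head at $u$), the induction advances, and after $|\vec{P}'|$ steps we conclude that $\vec{P}'$ is the suffix, hence the subgraph, of $\vec{P}$.
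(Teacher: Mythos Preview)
Your reduction is correct: with $\sigma:=\phi(\pi,\vec{P})=\phi(\pi',\vec{P}')$ one has $\sigma(g)=\pi(f_1)$ and $\sigma(g')=\pi'(f_1)$, so the inductive step is indeed equivalent to the single equality $\pi(f_1)=\pi'(f_1)$. However, your argument for that equality does not go through. The descent from $\EO(f_1,w,\cdot)$ into $\EO(f_2,\cdot,\cdot)$ only uses information about conflicts of $f_1$ at $w$ whose rank is below $\pi(f_2)$; it places no constraint whatsoever on the ``threshold'' $\pi(f_1)$ beyond $\pi(f_1)>\pi(f_2)$. Two different values of $\pi(f_1)$ (and $\pi'(f_1)$) can produce exactly the same snapshot at the moment $f_2$ is entered. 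Your sentence ``any conflict of $f_1$ at $w$ that is a trail edge of either trail must lie in the common suffix itself'' is also unjustified: trails may revisit $w$, so edges from the non-common parts of $\vec{P}$ or $\vec{P}'$ can be conflicts of $f_1$ at $w$. You yourself flag this as ``the main obstacle'' and then only gesture at the $\sigma$-inequalities without discharging it.

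The paper does \emph{not} establish $\pi(e_b)=\pi'(e_b)$; in fact, under the assumption that the trails branch this equality is false (it would force $\pi'(e_{b+1})=\pi'(e_b)$). Instead it argues by contradiction with a completely different mechanism. Assuming $g\neq g'$ and WLOG $\pi(f_1)\le\pi'(f_1)$, it first proves a \emph{rank agreement} lemma (\Cref{clm:equal-ranks}): every edge with rank below $\pi(f_1)$ under either permutation has the same rank under both. From this it deduces (\Cref{clm:new-branch}) that $g=e_{b+1}\in\RGMIS(G',\pi')$. Finally, since $g$ and $g'$ are conflicting (both conflict with $f_1$ on the same port, \Cref{obs:pair-conf}) and since $\pi'(g)=\sigma(g)=\pi(f_1)<\pi'(f_1)$, the call $\EO(g',\cdot,\pi')$ would have queried $g$ before reaching $f_1$, obtained $\true$, and terminated---contradicting that $(\ldots,\vec{g}',\vec{f_1},\ldots)$ is a valid $(v',\pi')$-query-trail. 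This RGMIS-membership argument, anchored by the rank-agreement claim, is the missing idea in your proposal.
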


We prove this lemma by series of observations and claims. Let $\vec{P} = (\vec{e_k}, \ldots, \vec{e_1})$ and $\vec{P}'=(\vec{e_r}', \ldots, \vec{e_1}')$ such that $e=e_1=e_1'$. If $\vec{P}'$ is not a subgraph of $\vec{P}$, then it must branch after an edge $\vec{e_b}'$. This means that $\vec{e_i}=\vec{e_i}'$ for $i \leq b$ and $\vec{e_{b+1}} \neq \vec{e_{b+1}}'$. Note that $\vec{e_{b+1}}$ and $\vec{e_{b+1}}'$ can be copy of the same edge.

\begin{observation}\label{obs:cons-conf}
Let $\pi$ be a random permutation over $E'$. For a $(u, \pi)$-query-trail, if $f$ and $f'$ are two consecutive edges in the trail, then $(f, f')$ is a conflicting pair.
\end{observation}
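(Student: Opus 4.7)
The plan is to prove this observation directly by inspecting the specification of the edge oracle (\Cref{alg:edgeoracle}). The key point is that the only way one edge of the query-trail can be followed by another is via a recursive $\EO$ call, and by definition such a recursive call is only ever issued to edges satisfying the conflicting-pair condition. So the randomness of $\pi$ will not play any role here; in fact the statement holds for every fixed permutation.

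First, I would recall how a $(u,\pi)$-query-trail arises. By the discussion preceding the observation, $\vec{P} = (\vec{e_1}, \ldots, \vec{e_k})$ corresponds to nested calls $\EO(e_1, \cdot, \pi), \EO(e_2, \cdot, \pi), \ldots, \EO(e_k, \cdot, \pi)$ appearing on the execution stack (in the given order) during the computation of $\VO(u, \pi)$. Consequently, two consecutive edges $f = e_i$ and $f' = e_{i+1}$ in the trail correspond to a recursive invocation $\EO(f', v', \pi)$ issued from inside the body of $\EO(f, w, \pi)$ for some endpoint $w$ of $f$ (and $v'$ being the other endpoint of $f'$).

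Second, I would examine the body of \Cref{alg:edgeoracle} itself. The only recursive calls are those in the inner loop, which iterates over edges $e_j = (w, v_j)$ incident to $f$ at the vertex $w$ subject to the conditions $\pi(e_j) < \pi(f)$ and the extra constraint that $(f, e_j)$ is a conflicting pair. Therefore the edge $f'$, which triggers the recursive call nested immediately inside $\EO(f, w, \pi)$, must be one of these $e_j$, and in particular $(f, f')$ must be a conflicting pair. This is exactly the observation's conclusion.

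There is no real obstacle in this argument; the content of the observation is essentially a translation of a line of the pseudocode into the trail language. The only care needed is to properly match \Cref{alg:edgeoracle}'s filtering rule for its recursive children (which explicitly contains the conflicting-pair predicate) with the meaning of ``consecutive edges'' in a query-trail defined via the recursion stack.
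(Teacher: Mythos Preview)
Your proposal is correct and follows exactly the same approach as the paper: the paper's proof is the single sentence ``Since the edge oracle calls $\EO(f', \cdot,\pi)$ in $\EO(f, \cdot, \pi)$, $(f, f')$ must be a conflicting pair,'' which is precisely the observation you unpack in more detail by pointing to the filtering rule in Line~2 of \Cref{alg:edgeoracle}.
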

\begin{myproof}
Since the edge oracle calls $\EO(f', \cdot,\pi)$ in $\EO(f, \cdot, \pi)$, $(f, f')$ must be a conflicting pair.
\end{myproof}

\begin{observation}\label{obs:pair-conf}
Let $f_1, f_2, f_3$ be three different edges incident to some vertex $u$ and let $\pi$ be a random permutation over $E'$. Let $\vec{P_1}$ be a $(x, \pi)$-query-trail that calls $\EO(f_3, \cdot, \pi)$ in $\EO(f_1, \cdot, \pi)$. Also, let $\vec{P_2}$ be a $(y, \pi')$-query-trail that calls $\EO(f_3, \cdot, \pi')$ in $\EO(f_2, \cdot, \pi')$. Then $(f_1, f_2)$ is a conflicting pair.
\end{observation}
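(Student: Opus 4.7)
The plan is to first apply \cref{obs:cons-conf} to each of the two nested calls in the hypothesis, yielding that $(f_1, f_3)$ and $(f_2, f_3)$ are both conflicting pairs. The remaining task is to upgrade these two constraints into a conflict between $f_1$ and $f_2$, by exploiting the fact that $f_3$ occupies a single fixed port at the common vertex $u$.

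Concretely, let $u^i$ denote the port that $f_3$ occupies at $u$. By the definition of a conflicting pair, $f_1$ and $f_3$ share a common endpoint at which they occupy the same port. I would argue that this shared port must be $u^i$: since both edges are incident to $u$, in the typical case where $f_1$ and $f_3$ are not parallel copies of the same edge of $G$, they share only the endpoint $u$, so the shared port is necessarily at $u$, forcing $f_1$ to occupy $u^i$ as well. The symmetric argument applied to $(f_2, f_3)$ places $f_2$ at port $u^i$. Consequently, $f_1$ and $f_2$ share the endpoint $u$ and both occupy port $u^i$ there, so they are a conflicting pair.

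The main obstacle is the parallel-copy case, where $f_1$ (or symmetrically $f_2$) and $f_3$ share two endpoints in $G$, so that in principle the conflict could be at the non-$u$ endpoint and would not directly pin down $f_1$'s port at $u$. To handle this, I would trace the edge oracle's call structure: the scanning endpoint hidden in the ``$\cdot$'' argument of $\EO(f_1, \cdot, \pi)$ is precisely the vertex at which the trail transitions from $f_1$ into $f_3$, and the same holds for $\EO(f_2, \cdot, \pi')$ entering $f_3$. A short case analysis over the $K+2$ possible port assignments of parallel copies of the relevant edge of $G$, combined with the consistency of the two trails' transitions into $f_3$, forces the conflict to still be mediated by $u^i$ and recovers the argument of the previous paragraph.
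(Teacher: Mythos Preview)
Your approach is the same as the paper's: invoke \Cref{obs:cons-conf} on each nested call to get that $(f_1,f_3)$ and $(f_2,f_3)$ are conflicting pairs, then argue both conflicts sit at the port $u^i$ that $f_3$ occupies at $u$, forcing $f_1$ and $f_2$ to share $u^i$. The paper's three-line proof simply writes ``Assume that both $f_1$ and $f_3$ occupied port $u^i$'' and proceeds, without addressing the parallel-copy subtlety you flag, so your treatment is if anything more careful than the original.
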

\begin{myproof}
By \Cref{obs:cons-conf}, $(f_1, f_3)$ is a conflicting pair. Assume that both $f_1$ and $f_3$ occupied port $u^i$. Moreover, since $(f_2, f_3)$ is a conflicting pair, then $f_2$ is also occupying $u^i$. Therefore, $(f_1, f_2)$ is a conflicting pair.
\end{myproof}

\begin{observation}\label{obs:equal_edge}
$\pi(e_b) = \pi'(e_{b+1})$.
\end{observation}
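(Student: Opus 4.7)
The plan is to unwind the definition of $\phi$ at index $b+1$ on both trails and invoke the hypothesis $\sigma := \phi(\pi,\vec{P}) = \phi(\pi',\vec{P}') =: \sigma'$. Under the indexing used just before the observation, $\vec{P} = (\vec{e_k},\dots,\vec{e_1})$ lists the trail from the first-queried edge $e_k$ down to the terminal edge $e_1 = e$, and similarly for $\vec{P}'$. First I would translate the original definition of $\phi$ into this reversed indexing, yielding the shift identity $\sigma(e_j) = \pi(e_{j-1})$ for every $2 \le j \le k$ together with the wrap $\sigma(e_1) = \pi(e_k)$; in words, $\phi$ rotates the $\pi$-values of the trail one position along the traversal direction and leaves every off-trail edge untouched. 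Applied at $j = b+1$, this gives $\sigma(e_{b+1}) = \pi(e_b)$, and the analogous identity holds for $\sigma'$ on $\vec{P}'$.

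Next I would argue that $e_{b+1}$ lies off the trail $\vec{P}'$, so that the off-trail clause of $\phi$ yields $\sigma'(e_{b+1}) = \pi'(e_{b+1})$. For positions $j \le b$ of $\vec{P}'$ this is immediate because $e_j' = e_j$ while $\vec{P}$ is a trail with distinct edges, so $e_j' \ne e_{b+1}$; for $j = b+1$ this is exactly the branching hypothesis $e_{b+1}' \ne e_{b+1}$; and for $j \ge b+2$, a hypothetical equality $e_{b+1} = e_j'$ would, via the shift identity for $\sigma'$ combined with $\sigma = \sigma'$, force $\pi(e_b) = \pi'(e_{j-1}')$ and hence $e_{j-1}' = e_{b+1} = e_j'$ since $\pi'$ is a bijection, contradicting edge-distinctness of $\vec{P}'$.

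Combining the two ingredients chains the claim: $\pi(e_b) = \sigma(e_{b+1}) = \sigma'(e_{b+1}) = \pi'(e_{b+1})$, which is exactly the observation. I expect the only non-routine step to be the off-trail claim for $e_{b+1}$ in the second paragraph; the rest is mechanical bookkeeping with the definition of $\phi$ and the branching indexing.
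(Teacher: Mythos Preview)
Your approach mirrors the paper's exactly: evaluate $\sigma = \phi(\pi,\vec P) = \phi(\pi',\vec P')$ at the edge $e_{b+1}$, use the shift identity on the $\vec P$ side to get $\sigma(e_{b+1}) = \pi(e_b)$, and use the off-trail clause on the $\vec P'$ side to get $\sigma(e_{b+1}) = \pi'(e_{b+1})$. The paper simply asserts that $e_{b+1}$ is off $\vec P'$ without justification, so your attempt to verify this is a welcome addition, and your arguments for the cases $j \le b$ and $j = b+1$ are correct.

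However, your argument for $j \ge b+2$ does not go through. You correctly derive $\pi(e_b) = \pi'(e_{j-1}')$ from the hypothetical $e_{b+1} = e_j'$, but you then write ``hence $e_{j-1}' = e_{b+1}$ since $\pi'$ is a bijection.'' Injectivity of $\pi'$ would let you conclude $e_{j-1}' = e_{b+1}$ only from $\pi'(e_{j-1}') = \pi'(e_{b+1})$; what you have is $\pi'(e_{j-1}') = \pi(e_b)$, a value of a \emph{different} permutation. Bridging these would require $\pi(e_b) = \pi'(e_{b+1})$, which is exactly the statement being proved, so the step is circular.

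A clean repair is to use the symmetry-breaking assumption $\pi(e_b) \le \pi'(e_b)$ (introduced in the paper immediately after this observation, with the remark that nothing so far distinguishes $\pi$ from $\pi'$). Under that assumption, the equality $\pi(e_b) = \pi'(e_{j-1}')$ together with $j-1 > b$ and the monotonicity of $\pi'$ along $\vec P'$ yields $\pi(e_b) = \pi'(e_{j-1}') > \pi'(e_b') = \pi'(e_b) \ge \pi(e_b)$, a contradiction. Since the observation is symmetric under swapping $(\pi,\vec P) \leftrightarrow (\pi',\vec P')$, imposing this WLOG beforehand is harmless.
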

\begin{myproof}
Since $\vec{e_{b+1}}$ is not in $\vec{P}'$, we have that $\phi(\pi', \vec{P}')(e_{b+1})=\pi'(e_{b+1})$. Also, $\phi(\pi, \vec{P})(e_{b+1})=\pi(e_b)$ since $\phi(\pi, \vec{P})$ shifts edges of the trail $\vec{P}$ by one. Given that permutation $\phi(\pi, \vec{P})$ is equal to $\phi(\pi', \vec{P}')$, we have $\pi(e_b) = \pi'(e_{b+1})$.
\end{myproof}

Without loss of generality, we can assume that $\pi(e_b) \leq \pi'(e_b)$ since we did not make any difference between $\pi$ and $\pi'$ until this point.

\begin{observation}\label{obs:branch-begin}
$\pi'(e_{b+1}) < \pi'(e_b)$.
\end{observation}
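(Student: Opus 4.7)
The plan is to chain together the two facts immediately preceding the statement: the equality $\pi(e_b) = \pi'(e_{b+1})$ provided by \Cref{obs:equal_edge}, and the WLOG assumption $\pi(e_b) \leq \pi'(e_b)$ stated just before the observation. Substituting the former into the latter gives $\pi'(e_{b+1}) \leq \pi'(e_b)$, and all that remains is to rule out equality to get the strict inequality.

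Concretely, I would proceed in two short steps. First, I would apply \Cref{obs:equal_edge} to rewrite $\pi(e_b)$ as $\pi'(e_{b+1})$ on the left-hand side of $\pi(e_b) \leq \pi'(e_b)$, yielding the weak bound
\[
\pi'(e_{b+1}) \;=\; \pi(e_b) \;\leq\; \pi'(e_b).
\]
Second, I would argue that the two sides cannot coincide: $e_b$ and $e_{b+1}$ are consecutive edges of the trail $\vec{P}$ in $G'$, so by the definition of a trail they are distinct elements of $E'$ (even if they happen to be two different copies of the same underlying edge of $G$, those copies are distinct vertices of $H$ and thus distinct elements of $E'$). Because $\pi'$ is a permutation on $E'$, distinct elements receive distinct ranks, hence $\pi'(e_{b+1}) \neq \pi'(e_b)$. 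Combining this with the weak inequality above yields $\pi'(e_{b+1}) < \pi'(e_b)$.

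There is essentially no obstacle here; the only subtlety is the formal justification that $e_b \neq e_{b+1}$ as elements of $E'$, which is immediate from the trail property together with the fact that $G'$ is the multi-graph whose vertices are the $K+2$ copies of each edge of $G$.
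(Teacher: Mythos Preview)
Your proposal is correct and follows essentially the same approach as the paper: combine \Cref{obs:equal_edge} with the WLOG assumption $\pi(e_b)\le\pi'(e_b)$ to get $\pi'(e_{b+1})\le\pi'(e_b)$, and then use that $\pi'$ is a permutation (together with $e_b\neq e_{b+1}$) to upgrade to a strict inequality. The paper's proof is simply a terser version of what you wrote.
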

\begin{myproof}
By combining \Cref{obs:equal_edge}, our assumption that $\pi(e_b) \leq \pi'(e_b)$, and the fact that $\pi'$ is a permutation, we have that $\pi'(e_{b+1}) < \pi'(e_b)$.
\end{myproof}

\begin{claim}\label{clm:equal-ranks}
If $\pi(f) < \pi(e_b)$ or $\pi'(f) < \pi(e_b)$ for some edge $\vec{f}$, then $\pi(f) = \pi'(f)$.
\end{claim}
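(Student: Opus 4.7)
The plan is to let $\sigma := \phi(\pi, \vec{P}) = \phi(\pi', \vec{P}')$ and perform a case analysis on whether $f$ lies in $\vec{P}$, in $\vec{P}'$, in both, or in neither. The key fact throughout is that $\phi$ only rearranges $\pi$ on the trail $\vec{P}$, so $\sigma$ agrees with $\pi$ outside $\vec{P}$ and with $\pi'$ outside $\vec{P}'$. In the reversed indexing used here, the cyclic shift becomes $\sigma(e_j) = \pi(e_{j-1})$ for $j \geq 2$ (and analogously for $\vec{P}'$), while $\pi(e_1) < \pi(e_2) < \dots < \pi(e_k)$ along $\vec{P}$ and similarly for $\pi'$ along $\vec{P}'$.

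If $f \notin \vec{P} \cup \vec{P}'$, then $\pi(f) = \sigma(f) = \pi'(f)$ directly, so the claim holds. If $f$ lies in the common prefix, i.e.\ $f = e_i = e_i'$ for some $1 \le i \le b$, I would apply the shift identity from both sides at index $i+1$: for $i+1 \le b$ we have $\pi(e_i) = \sigma(e_{i+1}) = \pi'(e_i')$, which yields $\pi(e_j) = \pi'(e_j)$ for all $1 \le j \le b-1$. The remaining sub-case $i = b$ is handled by noting that $\pi(f) = \pi(e_b)$ fails the strict inequality hypothesis, and $\pi'(f) = \pi'(e_b) \ge \pi(e_b)$ by the WLOG assumption, so the hypothesis of the claim fails and there is nothing to prove.

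If $f \in \vec{P} \setminus \vec{P}'$, then $f = e_j$ with $j \ge b+1$. Monotonicity of $\pi$ along $\vec{P}$ gives $\pi(f) \ge \pi(e_{b+1}) > \pi(e_b)$, and since $f \notin \vec{P}'$ the shift gives $\pi'(f) = \sigma(f) = \pi(e_{j-1}) \ge \pi(e_b)$; both hypotheses of the claim fail. The symmetric case $f \in \vec{P}' \setminus \vec{P}$ uses the WLOG inequality $\pi(e_b) \le \pi'(e_b)$ together with monotonicity of $\pi'$ on $\vec{P}'$ to get $\pi'(f) \ge \pi'(e_{b+1}') > \pi'(e_b) \ge \pi(e_b)$, while the shift identity from the $\vec{P}'$ side gives $\pi(f) = \sigma(f) = \pi'(e_{j-1}') \ge \pi(e_b)$.

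The main obstacle is purely notational: the reversed indexing $(\vec{e_k}, \dots, \vec{e_1})$ used in the proof inverts the direction of monotonicity, and one must apply the $\phi$-shift in the correct direction relative to this indexing. The only sub-case needing extra care is $i = 1$ in the common-prefix argument, since the direct shift reads $\sigma(e_1) = \pi(e_k)$ (cyclic wrap-around) rather than $\sigma(e_1) = \pi(e_0)$; however, $\pi(e_1) = \pi'(e_1)$ already falls out of the $j = 2$ instance of the shift identity, so no separate argument is needed and the claim follows in all cases.
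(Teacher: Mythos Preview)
Your proof is correct and follows essentially the same five-way case split as the paper: $f$ outside both trails, $f$ in the shared segment $\{e_1,\dots,e_{b-1}\}$, $f=e_b$, $f\in\{e_{b+1},\dots,e_k\}$, and $f\in\{e'_{b+1},\dots,e'_r\}$, with the same use of the shift identity $\sigma(e_j)=\pi(e_{j-1})$ and monotonicity along the trails. One small omission (which the paper's proof shares) is the sub-case where $f$ lies in \emph{both} trails beyond index $b$; your $\vec P\setminus\vec P'$ argument uses $f\notin\vec P'$ to conclude $\pi'(f)=\sigma(f)$, so strictly speaking that branch does not cover it---but monotonicity on each trail immediately gives $\pi(f)>\pi(e_b)$ and $\pi'(f)>\pi'(e_b)\ge\pi(e_b)$, so the hypothesis fails there as well.
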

\begin{myproof}
There are five different possible cases for $f$:
\begin{itemize}
    \item $\vec{f} \notin \vec{P} \cup \vec{P}'$: Since $\phi$ only changes the edge on the query-trail and $\phi(\pi, \vec{P})=\phi(\pi', \vec{P}')$, we have $\pi(f) = \pi'(f)$. 
    \item $\vec{f} \in \{\vec{e_1}, \ldots, \vec{e}_{b-1}\}$: Since $\phi(\pi, \vec{P})(e_{i+1}) = \phi(\pi', \vec{P}')(e_{i+1})$ for $1 \leq i < b$, we have $\pi(e_i) = \pi'(e_i)$. Hence, $\pi(f) = \pi'(f)$.
    \item $\vec{f}=\vec{e_b}$: In this case, condition $\pi(f) < \pi(e_b)$ does not hold since $\pi(f) = \pi(e_b)$. Also, $\pi'(f)=\pi'(e_b) \geq \pi(e_b)$. Therefore, condition $\pi'(f) < \pi(e_b)$ does not hold. 
    \item $\vec{f}\in \{\vec{e}_{b+1}, \ldots, \vec{e}_k\}$: By \Cref{obs:increasing_trail}, we have that $\pi(f) > \pi(e_b)$. Therefore, condition $\pi(f) < \pi(e_b)$ does not hold. Let $\vec{f}=\vec{e_i}$ for $i > b$. Since $\phi(\pi, \vec{P}) = \phi(\pi', \vec{P}')$, we have $\pi'(f) = \pi(e_{i-1}) \geq \pi(e_b)$. Therefore, none of the conditions in the claim statement holds.
    \item $\vec{f}\in \{\vec{e_{b+1}}', \ldots, \vec{e_r}'\}$: By \Cref{obs:increasing_trail}, we have that $\pi'(f) > \pi'(e_b)$. Combining by our assumption that $\pi'(e_b) \geq \pi(e_b)$, we have $\pi'(f) \geq \pi(e_b)$. Let $\vec{f}=\vec{e_i}'$ for $i > b$. Since $\phi(\pi, \vec{P}) = \phi(\pi', \vec{P}')$, we have that $\pi(f) = \pi'(e_{i-1}') \geq \pi'(e_b) \geq \pi(e_b)$. Therefore, none of the conditions in the claim statement holds.
\end{itemize}
The proof is thus complete.
\end{myproof}

\begin{claim}\label{clm:new-branch}
$e_{b+1} \in \RGMIS(G',\pi')$.
\end{claim}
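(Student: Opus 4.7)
Plan: My goal is to show $e_{b+1} \in \RGMIS(G', \pi')$. I will do this in two steps: first reduce the claim to a statement about $\RGMIS(G', \pi)$ using the rank-agreement facts already proved, then read that statement off from the trail $\vec{P}$.

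For the reduction, recall that $\pi'(e_{b+1}) = \pi(e_b)$ by \Cref{obs:equal_edge}, and by \Cref{clm:equal-ranks} the permutations $\pi$ and $\pi'$ assign identical values to every edge of rank strictly less than $\pi(e_b)$. A routine induction on rank then shows that the greedy MIS decisions under $\pi$ and $\pi'$ coincide on every such edge: for any $f$ with $\pi(f) < \pi(e_b)$, we have $f \in \RGMIS(G', \pi)$ iff $f \in \RGMIS(G', \pi')$. So to prove $e_{b+1} \in \RGMIS(G', \pi')$ it suffices to show that every conflicting neighbor $f$ of $e_{b+1}$ in $H$ with $\pi(f) < \pi(e_b)$ fails to lie in $\RGMIS(G', \pi)$.

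For the second step, I exploit the structure of trail $\vec{P}$. The edge oracle (\Cref{alg:edgeoracle}) enumerates its lower-ranked conflicts in strictly increasing $\pi$-order and short-circuits with a FALSE return the first time a recursive call returns TRUE. Since trail $\vec{P}$ reaches $e_b$ via the call $\EO(e_{b+1}, \cdot, \pi) \to \EO(e_b, \cdot, \pi)$, the for loop inside $\EO(e_{b+1}, \cdot, \pi)$ must have processed every conflict of rank $< \pi(e_b)$ without triggering the short-circuit, meaning each such recursive call returned FALSE; by \Cref{clm:correctness-edge-oracle}, these edges are not in $\RGMIS(G', \pi)$.

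The main subtlety --- and the likely main obstacle --- is that $\EO(e_{b+1}, \cdot, \pi)$ only iterates over conflicts at one endpoint of $e_{b+1}$ (the one opposite the caller). Conflicts at the other endpoint, i.e.\ the one shared with the caller edge $e_{b+2}$ (or with the initial vertex oracle when $b+1 = k$), are actually handled by the caller itself: since $(e_{b+2}, e_{b+1})$ is a conflicting pair, the two edges occupy the same port at their shared endpoint, so every conflict of $e_{b+1}$ at that endpoint with rank less than $\pi(e_{b+1})$ is simultaneously a conflict of $e_{b+2}$ there, and it was already processed by $\EO(e_{b+2}, \cdot, \pi)$ before that call recursed into $e_{b+1}$; an analogous short-circuit argument shows each such recursive call also returned FALSE. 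Parallel copies of $e_{b+1}$ in $E'$ necessarily share a port with $e_{b+1}$ at at least one of its endpoints and are therefore covered by one of the two processing rounds. Combining these cases gives that every conflicting neighbor $f$ of $e_{b+1}$ in $H$ with $\pi(f) < \pi(e_b)$ is out of $\RGMIS(G',\pi)$, and by the reduction above, $e_{b+1} \in \RGMIS(G', \pi')$.
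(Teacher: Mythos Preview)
Your approach is essentially the same as the paper's. The paper argues by contradiction (assume $e_{b+1}\notin\RGMIS(G',\pi')$, take a blocking conflict $f$, transfer it to $\pi$ via \Cref{clm:equal-ranks}, and contradict the validity of trail $\vec P$), while you give the equivalent direct argument (every low-rank conflict of $e_{b+1}$ is ruled out of $\RGMIS(G',\pi)$ by the short-circuit behavior along $\vec P$, then transfer to $\pi'$). Your explicit split into ``head endpoint handled by $\EO(e_{b+1},\cdot,\pi)$'' and ``tail endpoint handled by the caller $\EO(e_{b+2},\cdot,\pi)$ (or the vertex oracle when $b+1=k$)'' is exactly the paper's case split $i\in\{b,b+1\}$, and your induction-on-rank transfer is what the paper invokes with ``since both permutations are identical for ranks lower than $\pi(e_b)$.''

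One small inaccuracy worth flagging: your sentence ``Parallel copies of $e_{b+1}$ in $E'$ necessarily share a port with $e_{b+1}$'' is not literally true. If $e_{b+1}$ is the copy occupying $(u^0,v^1)$, then the copy occupying $(u^1,v^0)$ shares no port with it, yet those two are still adjacent in $H$ by the clause ``if $e_a=e_b$ then we add an edge between $a$ and $b$.'' This does not actually damage your proof, because the edge oracle (and hence the notion of $\RGMIS(G',\pi)$ used throughout \Cref{sec:query-process}) only ever tests \emph{conflicting pairs}, and the paper's own proof of the claim likewise only considers an $f$ that conflicts with $e_{b+1}$. So your remark is superfluous rather than load-bearing; you may simply drop it and the argument goes through at the same level of rigor as the paper's.
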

\begin{myproof}
We prove the claim by contradiction. Assume that $e_{b+1} \notin \RGMIS(G',\pi')$. Hence, there exists an edge $f$ which is incident to $e_{b+1}$ such that $\pi'(f) < \pi'(e_{b+1})$. Thus, $\EO(e_{b+1}, \cdot, \pi')$ will recursively call $\EO(f, \cdot, \pi')$. Let $f$ be incident to $e_{i}$ and $e_{i+1}$ for $i \in \{b, b+1\}$. In the query-trail $\vec{P}$, $\EO(e_{i+1}, \cdot, \pi)$ calls $\EO(e_i, \cdot, \pi)$. Therefore, using the \Cref{obs:pair-conf}, we have that $(f,e_i)$ is a conflicting pair. Note that by \Cref{obs:equal_edge}, we have $\pi'(f) < \pi(e_b)$. Hence, $\pi(f) = \pi'(f) < \pi(e_b)$ by \Cref{clm:equal-ranks}. Since both permutations are identical for ranks lower than $\pi(e_b)$, edge $f$ must appear in $\RGMIS(G', \pi)$ and the query-trail $\vec{P}$ is not a valid query-trail since $\EO(e_i, \cdot, \pi)$ terminates upon calling $\EO(f, \cdot, \pi)$ (see \Cref{fig:branch}). 
\end{myproof}

\begin{figure}[htbp]
\begin{center}
  \includegraphics[scale=0.50]{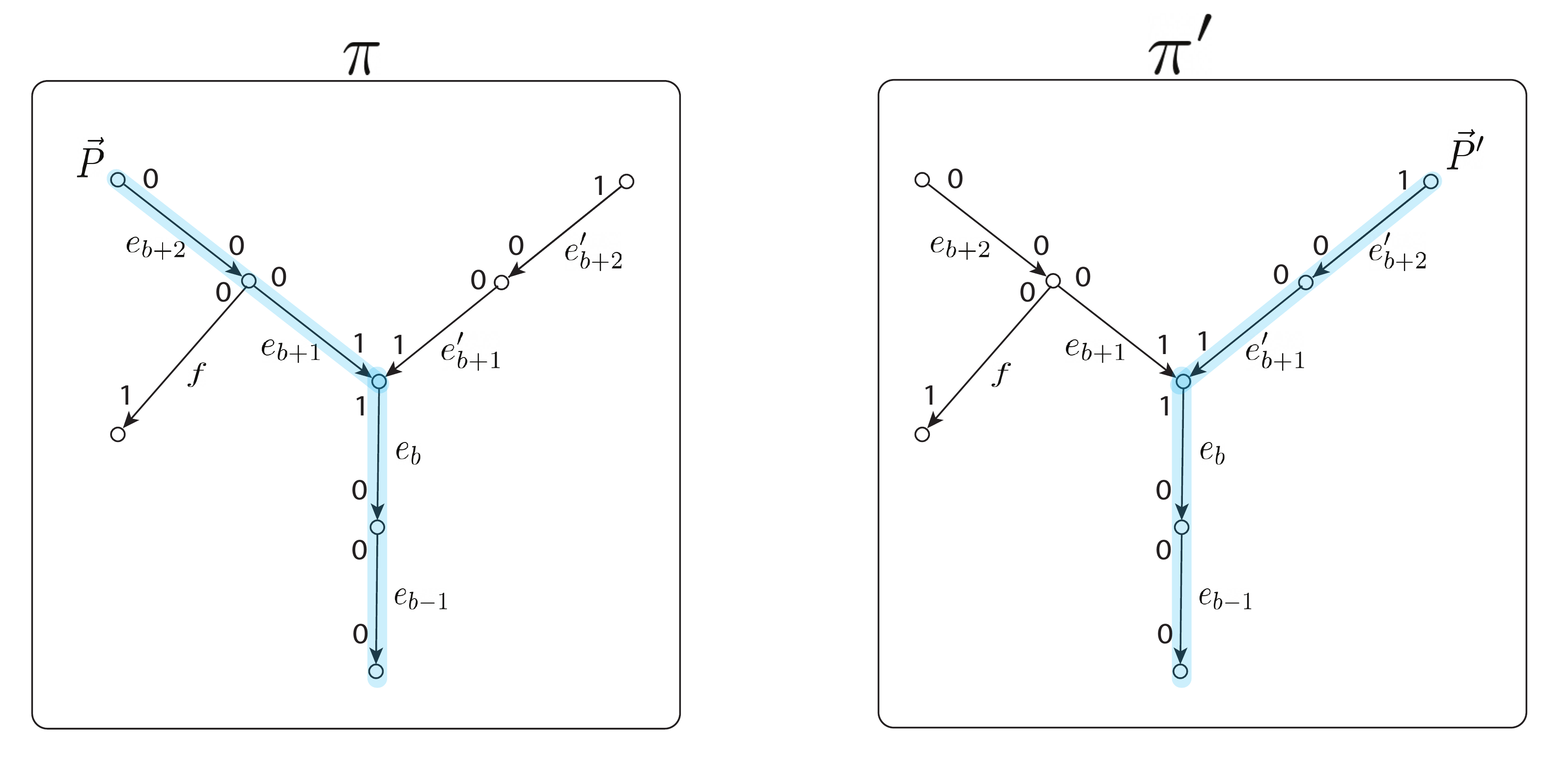}
  \caption{Illustration of proof of \Cref{clm:new-branch}. The highlighted blue trails show query-trails $\vec{P}$ and $\vec{P}'$. Query-trail $\vec{P}$ is not valid since $\EO(e_i, \cdot, \pi)$ terminates upon calling $\EO(f, \cdot, \pi)$.}\label{fig:branch}
  \end{center}
\end{figure}

\begin{myproof}[Proof of \Cref{lem:no-branch}]
 We prove that query-trail $\vec{P}'$ is not a valid $(v, \pi')$-query-trail. Note that by \Cref{obs:branch-begin}, $\EO(e_{b+1}', \cdot, \pi')$ calls $\EO(e_{b+1}, \cdot, \pi')$ before $EO(e_b, \cdot, \pi')$. Thus, by \Cref{clm:new-branch}, $\EO(e_{b+1}, \cdot, \pi')$ will return \true{} and execution of $\EO(e_{b+1}', \cdot, \pi')$ terminates at this point. Therefore, query-trail $\vec{P}'$ is a subgraph of query-trail $\vec{P}$.
\end{myproof}

Now we are ready to complete the proof of \Cref{lem: few-likely-neighbors}.

\begin{proof}[Proof of \Cref{lem: few-likely-neighbors}]
For each edge between $\pi_A \in A_L$ and $\pi_B \in B$ in graph $H$, we write a label $\chi(\pi_A, \pi_B)$ on the edge which is equal to the length of the query-trail corresponding to this edge in $H$. By \Cref{lem:no-branch}, all the labels for edges of a fixed vertex $\pi_B \in B$ that are incident to $A_L$ should be different. Moreover, by the definition of likely permutations, all query-trails of permutation $A_L$ have length less than or equal to $\beta$. Thus, each vertex $\pi_B \in B$ has at most $\beta$ neighbors in $A_L$.
\end{proof}

\section{Our Estimator for Maximum Path Cover}\label{sec: disjoint-paths-estimate}
In this section, we use the oracle of the previous section to estimate the number of edges in the output of \algtwo{}. In \Cref{sec:query-process}, we provide a lower bound on the number of recursive calls to our local query process. Note that this bound does not necessarily imply the same running time algorithm. For example, if we generate the whole permutation over all copies of edges before running the algorithm, it takes $\Theta(m)$ which is no longer sublinear. Using by now standard ideas of the literature, we show in \Cref{sec:implementation} how we can implement the query process in almost the same running time (multiplied by a polylogarithmic factor) which is formalized in the following lemma.

\begin{restatable}{lemma}{oracleImplementation}\label{lem: implementation}
There exists a data structure that given a graph $G$ in the adjacency list format, (implicitly) fixes a random permutation $\pi$ over its edges. Then for any vertex $v$, the data structure returns the degree of vertex $v$ in the subgraph $P$ produced by \algtwo{} according to a random permutation $\pi$. Each query $v$ to the data structure is answered in $\Tilde{O}(T(v, \pi))$ time w.h.p. where $T(v, \pi)$ is as defined in \cref{sec:query-process}.
\end{restatable}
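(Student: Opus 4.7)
The plan is to implement the oracles of \Cref{alg:vertexoracle,alg:edgeoracle} lazily, so that a single query $v$ runs in time proportional, up to polylog factors, to $T(v,\pi)$. I would combine three standard sublinear ingredients, adapted to the $(K+2)$-fold port-copy structure of $E'$. First, I fix $\pi$ implicitly: each element of $E'$---an edge of $G$ paired with one of its $K+2$ port-copy labels---receives a uniform random rank in $[0,1]$ via a pseudorandom hash indexed by $(\text{edge},\text{copy})$. A polylog-wise independent family suffices, each rank is fetched or compared in $\Ot(1)$ time, and with high probability all ranks are distinct, so the induced linear order is a valid permutation of $E'$.

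Both oracles reduce to the primitive ``given $(w,w^i,\tau)$, enumerate in increasing rank order the port-copies incident to $w$ that use port $w^i$, stopping once the rank exceeds $\tau$.'' Since $K=\poly(1/\epsilon)$ is constant, the universe at $(w,w^i)$ has size $O(\deg_G(w))$. I realise the enumerator by the lazy order-statistics trick of \cite{YoshidaYI09,behnezhad2021}: conditional on the most recent rank being $\tau'$ and $d''$ unsampled copies remaining, the next rank is distributed as $\tau'+(1-\tau')\,\mathrm{Beta}(1,d'')$ and the next copy is uniform among the unsampled ones, so each successive $(\text{copy},\text{rank})$ pair is produced in $\Ot(1)$ time via a balanced BST over $w$'s adjacency array---without paying $\Theta(\deg_G(w))$ upfront. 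The sampler state is cached in a hash table keyed by $(w,w^i)$, so multiple visits to the same $(w,w^i)$ within a single query share work. Finally, I memoise $\EO$ by storing the Boolean answer of each first invocation $\EO(e,u,\pi)$ in a hash table keyed by $(e,u)$, so any repeat call costs $\Ot(1)$.

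Putting the pieces together, every iteration of the \textbf{for}-loops in $\VO$ and $\EO$ does one of the following in $\Ot(1)$ time: advance the lazy enumerator at some $(w,w^i)$, issue a fresh memoised $\EO$ invocation, or halt the loop because the next rank exceeds the cutoff. Since the number of distinct $\EO$ frames equals $T(v,\pi)$ by definition, and the total number of loop iterations is in bijection, up to constants, with the children they spawn plus their terminations, the total work per query is $\Ot(T(v,\pi))$ with high probability.

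The main obstacle is keeping ranks consistent across the two enumerators at the two endpoints of each port-copy: each copy is visible at both $(w,w^i)$ and $(v,v^j)$, and the rank each side would assign to it must agree even when only one side has so far materialised its sampler. I enforce this by coupling the samplers, so that whenever one side assigns a rank to a copy the other side's BST receives a pre-assigned entry that competes against its future $\mathrm{Beta}$-draws. A careful check shows that this coupling preserves the joint distribution of ranks over $E'$ while keeping every operation $\Ot(1)$, and once the invariant is in place the rest of the analysis reduces to the per-iteration charging described above.
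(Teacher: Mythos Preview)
Your proposal has the right moving parts (lazy rank enumeration, memoisation of $\EO$, and cross-endpoint coupling), but there is an internal inconsistency in how you fix the ranks. You first say each element of $E'$ receives its rank as a \emph{hash value} $h(\text{edge},\text{copy})$, and then say the enumerator produces successive ranks by sampling from a $\mathrm{Beta}(1,d'')$ distribution. These are two different mechanisms and they are not compatible. If ranks are hash values, the next smallest rank at $(w,w^i)$ is \emph{deterministic} given the hash, not $\mathrm{Beta}$-distributed, and finding it requires evaluating the hash on all $O(\deg_G(w))$ candidates, which kills the $\Ot(1)$-per-step guarantee. Conversely, if you generate ranks on the fly via the $\mathrm{Beta}$ order-statistics trick, the hash plays no role, and consistency across endpoints is \emph{not} automatic; it must be enforced by the coupling you describe in your last paragraph. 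In other words, the coupling paragraph is the actual consistency mechanism, and the hash paragraph should be dropped. Relatedly, the claim that ``a polylog-wise independent family suffices'' is unsupported: the $\mathrm{Beta}$ derivation relies on the ranks being fully independent uniforms, and the query-complexity bound of \Cref{thm: query-complexity} is stated for a uniformly random permutation.

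The paper's own proof avoids re-deriving the coupling machinery altogether by a cleaner reduction. It observes that the multigraph $G'$ with $K{+}2$ parallel copies of each edge can be viewed as $K{+}2$ vertex-disjoint copies $G_1,\ldots,G_{K+2}$ of the simple graph $G$, each equipped with its own $\lowest$ procedure from \cite{behnezhad2021} (which already handles endpoint consistency internally). To enumerate incident port-copies of $v$ in rank order, one keeps a balanced BST of size at most $K{+}2$ containing the current minimum from each $\lowest_{G_i}$, extracts the overall minimum, and advances that copy's pointer. Since $K=O(1/\epsilon)$ is constant, every step costs $\Ot(1)$, and the total time per query is $\Ot(T(v,\pi))$. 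This black-box use of the simple-graph $\lowest$ procedure is shorter and sidesteps the coupling subtlety you flagged.
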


Note that in our local query process, we need access to the adjacency list of weight-one edges. So the challenge that arises here is how to estimate the size of the output of \algtwo{} in the adjacency matrix model. We present a reduction from adjacency matrix to adjacency list that appeared in the literature \cite{behnezhad2021}. In this reduction, each query to the adjacency list can be implemented with $O(1)$ queries to the adjacency matrix and still we are able to estimate the maximum path cover with some additive error.

Let $\gamma = 16Kn$. We construct a graph $\hat{G} = (V_{\hat{G}}, E_{\hat{G}})$ for weight-one edges of graph $G$ as follows:

\begin{itemize}
    \item $V_{\hat{G}}$ is the union of $V_1, V_2$ and $U_1, U_2, \ldots, U_n$ such that:
    \begin{itemize}
        \item $V_1$ and $V_2$ are two copies of the vertex set of the original graph $G$.
        \item $U_i$ is a vertex set of size $\gamma$ for each $i \in [n]$.
    \end{itemize}
    \item We define the edge set such that degree of each vertex is in $\{1, n, n + \gamma \}$:
    \begin{itemize}
        \item Degree of each vertex $v \in V_1$ is $n$. The $i$-th neighbor of $v$ is the $i$-th vertex in $V_1$ if $(v, i) \in E$, otherwise its $i$-th neighbor is the $i$-th vertex in $V_2$ for $i \leq n$. Note that graph $(V_1, E_H \cap (V_1 \times V_1))$ is isomorphic to $G$.
        \item Degree of each vertex $v \in V_2$ is $n + \gamma$. The $i$-th neighbor of $v$ is the $i$-th vertex in $V_2$ if $(v, i) \in E$, otherwise, its $i$-th neighbor is the $i$-th vertex in $V_1$ for $i \leq n$. For all $n < i \leq n + \gamma$, the $i$-th neighbor of $v$ is $i$-th vertex in $U_v$.
        \item Degree of each vertex $u \in U_i$ is one for $i \in [n]$. The only neighbor of $u$ is the $i$-th vertex of $V_2$.
    \end{itemize}
\end{itemize}

By the construction of $\hat{G}$, the only neighbor of $v \in \bigcup_{i=1}^n U_i$ can be determined without any query to the adjacency matrix. Also, the $i$-th neighbor of each vertex in $V_1 \cup V_2$ can be determined with one query.

\begin{observation}\label{obs: finding-neighbor}
For each vertex $v \in V_{\hat{G}}$ and $i \in [\deg_{\hat{G}}(v)]$, the $i$-th neighbor of vertex $v$ can be determined using at most one query to the adjacency matrix.
\end{observation}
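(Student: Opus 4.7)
The plan is to verify the observation by a direct case analysis over the three kinds of vertices that make up $V_{\hat{G}}$, namely $V_1$, $V_2$, and the padding sets $U_1,\ldots,U_n$. In each case, the construction explicitly prescribes the $i$-th neighbor as a function of either (a) the value of the adjacency matrix entry $(v,i)$, or (b) purely combinatorial data (an index into one of the auxiliary sets), so I would tally the queries needed case by case.

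First, for $v \in V_1$ and $i \in [n]$: the construction says the $i$-th neighbor is the $i$-th vertex of $V_1$ if $(v,i) \in E$ and the $i$-th vertex of $V_2$ otherwise. Issuing exactly one adjacency-matrix query on the pair $(v,i)$ decides which of the two vertices to return. An analogous argument handles $v \in V_2$ and $i \in [n]$: one query on $(v,i)$ determines whether the $i$-th neighbor is the $i$-th vertex of $V_2$ or the $i$-th vertex of $V_1$.

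Next, for $v \in V_2$ and $n < i \le n + \gamma$: the construction defines the $i$-th neighbor to be the $(i-n)$-th vertex of $U_v$, which depends only on $v$ and $i$ and can be computed with zero queries. Finally, for $v \in U_i$, the unique (first) neighbor is by definition the $i$-th vertex of $V_2$, again requiring no query. In every case at most one adjacency-matrix query is made, completing the verification.

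I do not anticipate a serious obstacle here: the statement is essentially a direct bookkeeping fact about the reduction graph $\hat G$, and the only thing to be careful about is that the indexing of $U_v$ (for the $n < i \le n+\gamma$ range) and the indexing of neighbors of $u \in U_i$ are both determined without consulting the adjacency matrix of $G$, which is precisely how the construction is designed.
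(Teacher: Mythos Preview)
Your proposal is correct and follows essentially the same approach as the paper: a direct case analysis over the vertex types $V_1$, $V_2$, and $\bigcup_i U_i$, noting that the first two require a single adjacency-matrix lookup while the latter (and the $i>n$ range for $V_2$) require none. The paper's own justification is a terse two-sentence remark preceding the observation; your write-up simply spells out the same reasoning in more detail.
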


Fix a vertex $v \in V_2$. When we run \algtwo{}, intuitively with high probability the first edge that is incident to $v$ and occupies port $v^0$ is between $v$ and $u \in U_v$. Furthermore, with high probability the first two edges that are incident to $v$ and occupies port $v^1$ are between $v$ and $u \in U_v$. A vertex $v \in V_2$ is an {\em abnormal} vertex if the above properties do not hold for $v$. Let $R \in V_2$ be the set of abnormal vertices. In the following observation, we show that for each vertex $v \in V_2 \setminus R$, all incident edges of $v$ in the output of \algtwo{} are between $v$ and vertices of $U_v$.

\begin{claim}\label{clm:abnormal-vertices-count}
$\E_\pi|R| \leq n/(4K)$
\end{claim}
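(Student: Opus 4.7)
My plan is to bound $\Pr[v \in R]$ for each fixed $v \in V_2$ and then apply linearity of expectation over $V_2$. First I would count, in the RGMIS graph $G'$ built in \algtwo{}, how many copies of edges incident to $v$ occupy each of the two ports $v^0$ and $v^1$. Recall $v \in V_2$ has degree $n + \gamma$ in $\hat{G}$, with $\gamma$ neighbors in $U_v$ and $n$ in $V_1 \cup V_2$. Each edge $(u,v) \in E_{\hat{G}}$ contributes $K+1$ copies occupying $v^0$ (the $K$ copies of type $(u^0,v^0)$ together with the single copy of type $(u^1,v^0)$) and exactly $1$ copy occupying $v^1$ (the copy of type $(u^0,v^1)$). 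Hence the number of copies in $G'$ occupying $v^0$ is $(n+\gamma)(K+1)$, of which $\gamma(K+1)$ are incident to $U_v$; and the number occupying $v^1$ is $n+\gamma$, of which $\gamma$ are incident to $U_v$.

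Next, since $\pi$ is uniform on $E'$, the relative $\pi$-order of any fixed subset of copies is also uniform. Thus the $\pi$-smallest copy occupying $v^0$ is incident to $V_1 \cup V_2$ (rather than $U_v$) with probability exactly
\begin{align*}
\frac{n(K+1)}{(n+\gamma)(K+1)} \;=\; \frac{n}{n+\gamma} \;=\; \frac{1}{1+16K}.
\end{align*}
Note that if this smallest copy is in fact to some $u \in U_v$, then since $u$ has degree $1$, nothing at the $u$-end can preempt this copy, so it survives into $\RGMIS(G',\pi)$ and is exactly the edge that occupies $v^0$ in the output of \algtwo{}. An analogous argument at $v^1$, combined with a union bound over the two $\pi$-smallest copies occupying $v^1$, shows that the probability at least one of them is to $V_1 \cup V_2$ is at most $\frac{2n}{n+\gamma} \leq \frac{2}{1+16K}$.

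Union-bounding over the two ports then gives
\begin{align*}
\Pr[v \in R] \;\leq\; \frac{1}{1+16K} + \frac{2}{1+16K} \;=\; \frac{3}{1+16K} \;\leq\; \frac{3}{16K},
\end{align*}
and summing over the $n$ vertices of $V_2$ yields $\E_\pi |R| \leq 3n/(16K) \leq n/(4K)$, as desired. I do not anticipate a real obstacle here; the only subtlety is being careful with the asymmetric copy structure of \algtwo{} (the $K$-versus-$1$ multiplicity at $(u^0,v^0)$ versus the other two types), which must be tracked correctly so that one does not overcount or undercount copies at each port.
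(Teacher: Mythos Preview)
Your proposal is correct and follows essentially the same approach as the paper: fix $v\in V_2$, count the copies occupying each port, bound $\Pr[v\in R]$, and sum over $V_2$. The only cosmetic difference is that you combine the two port events via a union bound (yielding $\Pr[v\in R]\le \tfrac{3}{1+16K}$), whereas the paper uses independence of the relative orders at the two disjoint port-sets to multiply $(1-\tfrac{1}{8K})^2\ge 1-\tfrac{1}{4K}$; both arrive at $\E_\pi|R|\le n/(4K)$.
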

\begin{myproof}
Fix a vertex $v \in V_2$. For a random permutation over copies of edges of $\hat{G}$, the first incident edge to $v$ that occupies port $v^0$ is between $v$ and $U_v$ with a probability of at least $\frac{(K + 1)\gamma}{(n + \gamma)(K+1)} \geq 1 - \frac{1}{8K}$. Moreover, the first two edges that occupy $v^1$ are between $v$ and $U_v$ with probability of at least $\frac{\gamma(\gamma - 1)}{(n+\gamma)(n+\gamma - 1)} \geq 1 - \frac{1}{8K}$. Since both events are independent, the probability of $v$ not being an abnormal vertex is at least
$$
\left(1 - \frac{1}{8K}\right)^2 \geq 1 - \frac{1}{4K}
,$$
which implies $\E_\pi|R| \leq n/(4K)$.
\end{myproof}

\begin{claim}\label{obs: good-vertices}
For each $v \in V_2 \setminus R$, all incident edges of $v$ in the output of \algtwo{} are between $v$ and vertices of $U_v$.
\end{claim}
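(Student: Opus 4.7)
The plan is to argue by contradiction: suppose the output $P$ contains some edge $e^* = (v, w^*)$ with $w^* \in V_1 \cup V_2$, and let $a^*$ be its $\RGMIS$-member assignment, occupying either $v^0$ or $v^1$. By the non-abnormality of $v$, I let $a_0$ be the first $\pi$-assignment occupying $v^0$, of some $U_v$-edge $e_0=(u_0,v)$, and let $a_1,a_2$ be the first two $\pi$-assignments occupying $v^1$, of distinct $U_v$-edges $e_1=(u_1,v),\,e_2=(u_2,v)$ (distinct since each edge has exactly one $v^1$-assignment in $H$). The central structural fact to exploit throughout is that every $u\in U_v$ has only the single edge $(u,v)$ in $\hat G$, so any $U_v$-edge assignment at $v$ can be blocked in $\RGMIS$ only by (i) another assignment of the same edge, or (ii) an earlier assignment sharing the same port of $v$.

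First I would handle the case that $a^*$ occupies $v^1$. Since $\pi(a_1),\pi(a_2)<\pi(a^*)$, both $a_1$ and $a_2$ must be blocked, and by the structural fact each $a_j$ is blocked by some earlier same-edge $v^0$-assignment $c_j$ of $e_j$. Using that $a_0$ is first at $v^0$, each $c_j$ either equals $a_0$ (forcing $e_j = e_0$) or has $\pi(c_j)>\pi(a_0)$, in which case $a_0$ itself must be blocked. Because $u_1\neq u_2$, we cannot have $c_1=c_2=a_0$. Case-chasing the remaining configurations yields either a $\pi$-chain contradicting $a_0$'s first-at-$v^0$ role (by tracking that any blocker of $a_0$ must itself be a same-edge $v^1$-assignment, which must then precede $a_1$ at $v^1$), or leaves one of $a_1,a_2$ unblocked because its would-be blocker $c_j$ is itself blocked at $v^0$ by the in-$\RGMIS$ $a_0$. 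Either outcome contradicts $a^*\in\RGMIS$.

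The case when $a^*$ occupies $v^0$ is the main obstacle. Here $a_0$ must be blocked, and the structural observation together with the first-at-$v^0$ property force the blocker to be $a_0$'s same-edge $v^1$-counterpart $b_0=(u_0^0,v^1)$. A short chain of $\pi$-inequalities using $a_1$ first-at-$v^1$ then forces $b_0=a_1$, so $e_0=e_1$ and $a_1\in\RGMIS$ occupies $v^1$. To finish, I plan to appeal to the second $v^1$-assignment $a_2$ of the distinct $U_v$-edge $e_2$: since $a_2$'s same-edge $v^0$-assignments have no port conflict with $a_1$ (different $v$-port, and $u_2\neq u_0$), I expect to show that some such $e_2$-assignment at $v^0$ enters $\RGMIS$ strictly before $\pi(a^*)$, blocking $a^*$ and delivering the contradiction. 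Pinning down this $\pi$-ordering---so that an $e_2$-assignment really does intercept $v^0$ ahead of $a^*$---is the hardest step, and is where the ``first two at $v^1$'' condition (rather than merely ``first at $v^1$'') becomes essential.
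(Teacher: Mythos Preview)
The paper's proof is quite different and much terser: it argues forward rather than by contradiction, asserting directly that the first $v^0$-occupying copy (of the edge it calls $(v,w)$) ``will be added to the solution,'' and then that one of the first two $v^1$-occupying copies is added, so both ports of $v$ are filled by $U_v$-edges. Your contradiction argument with a case split on the port of $a^*$ is more careful; your $v^1$ case is correct (and can in fact be shortened: $c_1$ and $c_2$ are distinct $v^0$-occupiers both required to lie in $\RGMIS$, hence $H$-adjacent, an immediate contradiction).

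Your instinct that the $v^0$ case is the real obstacle is right---but the step you plan cannot be completed from the stated hypothesis, and indeed the claim as written appears to be false. In the configuration you isolate ($b_0=a_1$, so $e_0=e_1$ and $a_1\in\RGMIS$ at $v^1$), the non-abnormality condition says nothing about where the $v^0$-copies of $e_2$ sit in $\pi$. Concretely, order the $v$-incident copies in $\pi$ as: first $a_1=(u_0^0,v^1)$, then $a_0=(u_0^0,v^0)$, then $a_2=(u_2^0,v^1)$, then $a^*=(p^0,v^0)$ for some $p\in V_1\cup V_2$ (arranged so $a^*$ is also first at $p^0$), with every $v^0$-copy of $e_2$ coming later still. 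Then $v\notin R$, yet $a_1$ blocks both $a_0$ (same edge) and $a_2$ (shared port $v^1$), and $a^*$ enters $\RGMIS$ since its only earlier $v^0$-neighbor $a_0$ is not in $\RGMIS$. So no $e_2$-copy can intercept $a^*$. The paper's proof elides exactly this scenario by tacitly treating the $(v,w)$-copy as occupying $v^0$ in the output; what would actually be needed to rescue the argument is control over the \emph{second} $v^0$-occupier (or a guarantee that the overall-first copy at $v$ sits at $v^0$), which the ``first two at $v^1$'' condition does not supply.
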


\begin{myproof}
By definition of an abnormal vertex, let the first edge in the permutation incident to $v$ be between $v$ and $w \in U_v$ which occupies $v^0$. Since all copies of edges incident to $w$ are between $v$ and $w$, this edge will be added to the solution of \algtwo{}. Moreover, we know that the first two edges that are incident to $v$ and occupy port $v^1$ are between $v$ and $U_v$. Let these two edges be $(v, u_1)$ and $(v, u_2)$ where $u_1, u_2 \in U_v$. Note that the only way that $(v, u_1)$ is not added to the solution of \algtwo{} is when $u_1 = w$. In this case, since there is only one copy for each edge that occupied port $v^1$, then $u_2 \neq w$. Therefore, \algtwo{} adds $(v, u_2)$ to its output if it has not added $(v, u_1)$. Since both ports of $v$ are occupied in this case, all incident edges of $v$ in the output of \algtwo{} are between $v$ and vertices of $U_v$. 
\end{myproof}

\begin{observation}\label{obs: small-TSP-R}
Let $P$ be the output of \algtwo{} on $\hat{G}$. Then
$$
    \frac{1}{2} \rho(\hat{G}[V_1 \cup R]) \leq \E \left|P \cap (V_1 \cup R) \times (V_1 \cup R) \right| \leq (1+\frac{2}{K})\cdot \rho(\hat{G}[V_1 \cup R])
.$$
\end{observation}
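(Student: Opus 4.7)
The plan is to mirror the argument of \Cref{clm: approximation-bound} for the induced subgraph $\hat{G}[V_1 \cup R]$, with the restricted output $P' := P \cap (V_1 \cup R) \times (V_1 \cup R)$ playing the role of $P$. Two structural facts will make this transplant possible. First, since $P$ has maximum degree $2$, so does $P'$; in particular $P'$ is a disjoint union of paths and cycles inside $\hat{G}[V_1 \cup R]$. Second, \Cref{obs: good-vertices} implies that every $P$-edge incident to a vertex of $V_1$ lies entirely inside $V_1 \cup R$, so $\deg_{P'}(u) = \deg_P(u)$ whenever $u \in V_1$; the only vertices at which the restricted and unrestricted $P$-degrees can differ lie in $R$.

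For the lower bound I will fix a maximum path cover $P^\star$ of $\hat{G}[V_1 \cup R]$ and set $\phi(e) := \tfrac{1}{4}(\deg_{P'}(u) + \deg_{P'}(v))$ for each $e=(u,v) \in P^\star$. Double counting together with $\deg_{P^\star}(\cdot) \leq 2$ again gives $\sum_{e \in P^\star} \phi(e) \leq |P'|$, so it suffices to show $\phi(e) \geq \tfrac{1}{2}$ for every $e \in P^\star$. If $e \in P$ this is immediate since $e \in P'$. Otherwise every copy of $e$ was rejected by \algtwo{}, and the case analysis in the proof of \Cref{clm: alg-one-approx} forces at least two of the ports $u^0, u^1, v^0, v^1$ to be occupied by $P$-edges at the moment the first copy of $e$ is processed. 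I then need to promote ``occupied by $P$'' to ``occupied by $P'$'', which is automatic whenever the occupied endpoint lies in $V_1$; for endpoints in $R$ I will use the abnormality definition of $R$ together with \Cref{obs: good-vertices} to rule out the bad configuration.

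The upper bound follows the cycle-removal argument of \Cref{clm: approximation-bound} almost verbatim: each connected component of $P'$ is a path or a cycle in $\hat{G}[V_1 \cup R]$; \Cref{obs:no-zero-cycle} and \Cref{obs:one-zero-path} bound the expected fraction of cycle components by $\tfrac{2}{K+2}$; and deleting one edge from each cycle produces a valid path cover of $\hat{G}[V_1 \cup R]$. Rearranging gives $\E|P'| \leq (1+\tfrac{2}{K})\,\rho(\hat{G}[V_1 \cup R])$.

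The main obstacle is the ``port promotion'' step in the lower bound when one or both endpoints of some $e=(u,v) \in P^\star \setminus P$ lie in $R$: the occupied ports witnessing the rejection of $e$ could a priori be held by $P$-edges running into the $U_u$ or $U_v$ gadgets, which contribute nothing to $\deg_{P'}$. I expect that the precise combinatorial definition of $R$ --- requiring some of the earliest port-occupying copies at an $R$-vertex to be non-$U$ edges --- is exactly what precludes the degenerate configuration in which both ports of an $R$-endpoint of $e$ are stolen by $U$-edges, so that the promotion from $P$- to $P'$-witnesses goes through. Making this case analysis precise is the key technical step of the proof.
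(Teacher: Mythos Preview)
Your approach diverges from the paper's and is considerably more laborious. The paper does not re-derive the $\phi$-argument or the cycle-counting bound from \Cref{clm: approximation-bound} for $P'$; it invokes that lemma as a black box. The entire proof is two sentences: \Cref{obs: good-vertices} guarantees that no edge of $P$ runs between $V_2\setminus R$ and $V_1\cup R$, so running \algtwo{} on $\hat G$ and intersecting with $(V_1\cup R)\times(V_1\cup R)$ behaves like running \algtwo{} directly on $\hat G[V_1\cup R]$ under the induced (uniformly random) permutation, and \Cref{clm: approximation-bound} applied to that subgraph yields both inequalities simultaneously. You are re-proving a lemma in a setting where the paper simply cites it.

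Your self-identified obstacle is a genuine gap, and the definition of $R$ does \emph{not} rescue the port-promotion step in the way you hope. Membership in $R$ only requires that \emph{one} of the two ``first edges go to $U_v$'' conditions fail; the other may still hold. Concretely, $u\in V_2$ can be abnormal because the first edge at port $u^1$ lies outside $U_u$, while the first edge at port $u^0$ is still a $U_u$-edge --- in which case port $u^0$ is occupied in $P$ by an edge into $U_u$ that is invisible to $P'$, and your inequality $\deg_{P'}(u)+\deg_{P'}(v)\ge 2$ can fail for $e=(u,v)\in P^\star\setminus P$. The same $R$--$U$ leakage also undermines your upper-bound transplant: the first $P'$-edge at an $R$-vertex need not be its first $P$-edge, so the $K/(K{+}2)$ computation behind the cycle bound in \Cref{clm: approximation-bound} does not carry over to $P'$-components that start at $R$-vertices. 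The paper's decouple-and-cite argument sidesteps all of this port-level bookkeeping; you should follow that route rather than attempting to replay the proof of \Cref{clm: approximation-bound} inside the restricted graph.
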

\begin{myproof}
By \Cref{obs: good-vertices}, if we run \algtwo{} on $\hat{G}$, for any vertex $v \in V_2 \setminus R$, all incident edges of $v$ in the output are between $v$ and $U_v$. Hence, none of the edges between $V_2 \setminus R$ and $V_1 \cup R$ will be added to the output of \algtwo{}. Since, the permutation over edges of $V_1 \cup R$ is uniformly at random, by \Cref{clm: approximation-bound}, we obtain the claimed bound.
\end{myproof}

In the above sequence of observations, we show that there are few abnormal vertices in $V_2$, which implies that most of the incident edges to vertices of $V_1$ in the output of \algtwo{} are in $\hat{G}[V_1]$ (only those between $V_1$ and $R$ violate this property). Therefore, a natural way to estimate the number of edges in the output of \algtwo{} on $G$, is to estimate the number of edges in $\hat{G}[V_1]$ in the output of \algtwo{} on $\hat{G}$. With this intuition in mind, we need to bound the query complexity of the algorithm for a random vertex in $V_1$.

\begin{claim}\label{clm: v1query}
Let $v$ be a random vertex in $V_1$ and $\pi$ be a random permutation over edges of graph that is created by copying $E_{\hat{G}}$ according to \algtwo{}. Then
\begin{align*}
    \E_{v\sim V_1, \pi}[T(v, \pi)] = \tilde{O}(n).
\end{align*}
\end{claim}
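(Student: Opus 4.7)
The plan is to deduce this from \Cref{thm: query-complexity} (applied to $\hat{G}$) by observing that averaging the query complexity over only $V_1$, instead of over the much larger vertex set $V_{\hat{G}}$, only loses a polynomial factor that is absorbed by the sparsity of $\hat{G}$.

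First I would apply \Cref{thm: query-complexity} to the graph $\hat{G}$ with its $(K+2)$-copy edge set. This yields
\begin{align*}
\sum_{v \in V_{\hat{G}}} \E_\pi[T(v,\pi)] \;=\; |V_{\hat{G}}| \cdot \E_{v \sim V_{\hat{G}},\pi}[T(v,\pi)] \;=\; O\!\left(|V_{\hat{G}}|\cdot \bar{d}(\hat{G}')\cdot \log^2 n\right) \;=\; O(|E_{\hat{G}}'|\cdot \log^2 n),
\end{align*}
where I used $|V_{\hat{G}}|\cdot \bar{d}(\hat{G}') = 2|E_{\hat{G}}'|$ and $E_{\hat{G}}'$ denotes the copied edge set.

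Next I would compute $|E_{\hat{G}}|$ directly from the construction: vertices in $V_1$ have degree $n$, vertices in $V_2$ have degree $n+\gamma$ with $\gamma=16Kn$, and vertices in the $U_i$'s have degree $1$. Summing, $|E_{\hat{G}}| = n^2 + n\gamma = O(Kn^2)$, so $|E_{\hat{G}}'| = (K+2)|E_{\hat{G}}| = O(K^2 n^2)$. Plugging back in gives $\sum_{v \in V_{\hat{G}}} \E_\pi[T(v,\pi)] = O(K^2 n^2 \log^2 n)$.

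Finally, since $T(v,\pi)\geq 0$, the sum restricted to $V_1$ is at most the total sum. Dividing by $|V_1|=n$ yields
\begin{align*}
\E_{v\sim V_1,\pi}[T(v,\pi)] \;=\; \frac{1}{n}\sum_{v\in V_1}\E_\pi[T(v,\pi)] \;\leq\; \frac{1}{n}\cdot O(K^2 n^2 \log^2 n) \;=\; O(K^2 n\log^2 n) \;=\; \tilde{O}(n),
\end{align*}
where the last equality uses that $K=O(1/\epsilon)$ is a constant. The main subtlety (rather than a real obstacle) is that restricting the average from $V_{\hat{G}}$ to $V_1$ inflates the bound by a factor of roughly $|V_{\hat{G}}|/|V_1| = 2+\gamma = \Theta(Kn)$; this is tolerable precisely because $\hat{G}$ was designed to be extremely sparse (the $U_i$ gadgets contribute $\Theta(Kn^2)$ degree-$1$ vertices), so the per-vertex expected query time in \Cref{thm: query-complexity} is only $\tilde{O}(1)$ on $\hat{G}$.
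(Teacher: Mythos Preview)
Your proposal is correct and follows essentially the same approach as the paper: apply \Cref{thm: query-complexity} to $\hat{G}$ to bound $\sum_{v\in V_{\hat G}}\E_\pi[T(v,\pi)]$ by $\tilde O(n^2)$, then use nonnegativity to restrict the sum to $V_1$ and divide by $|V_1|=n$. Your write-up is in fact slightly more explicit than the paper's in tracking the $K$-dependence and computing $|E_{\hat G}|$, but the argument is identical.
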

\begin{myproof}
By \Cref{thm: query-complexity}, we have that
\begin{align*}
\E_{v\sim V_{\hat{G}}, \pi}[T(v, \pi)] = O(\frac{K\cdot |E_{\hat{G}}|}{|V_{\hat{G}}|}\cdot \log^2 |V_{\hat{G}}|).
\end{align*}
Summing over all vertices in $V_{\hat{G}}$, we obtain
\begin{align*}
    \sum_{v \in V_{\hat{G}}}\E_{\pi}[T(v, \pi)] = O(K \cdot |E_{\hat{G}}| \cdot \log^2 |V_{\hat{G}}|) = \tilde{O}(n^2),
\end{align*}
since $|V_{\hat{G}}| = O(n^2)$, $K = O(1/\epsilon)$, and $|E_{\hat{G}}| = O(n^2)$. Therefore, for a random vertex in $V_1$, we get
\begin{align*}
    \E_{v\sim V_1,\pi}[T(v, \pi)] \leq \left( \sum_{v \in V_{\hat{G}}}\E_{\pi}[T(v, \pi)] \right) / |V_1| = \tilde{O}(n). \qedhere
\end{align*}
\end{myproof}

\begin{algorithm}
\caption{Final algorithm for maximum path cover.}
\label{alg: disjoint-paths}
    Let $\hat{G} = (V_{\hat{G}}, E_{\hat{G}})$ as described above.

    $r \gets 192\cdot K^2 \cdot \log n$.
    
    Sample $r$ vertices $u_1, u_2, \ldots, u_r$ uniformly at random from $V_1$ with replacement.
    
    Sample $r$ ports $p_1, p_2, \ldots, p_r$ uniformly at random from $\{0, 1\}$.
    
    Run vertex oracle for each $u_i$ and let $X_i$ be the indicator if port $u_i^{p_i}$ is occupied with an edge in $\hat{G}[V_1]$ in output of \algtwo{}.

Let $X = \sum_{i \in [r]} X_i$ and $f = X/r$.

Let $\tilde{\rho} = \frac{K}{2(K + 2)}\cdot (f \cdot n - \frac{n}{4K})$.
    
\Return $\tilde{\rho}$
\end{algorithm}

\begin{lemma}\label{lem: error-bound}
Let $\tilde{\rho}$ be the output of \Cref{alg: disjoint-paths} on input graph $G$. With high probability, 
$$ \left(\frac{1}{2} - \frac{1}{K}\right)\cdot \rho(G) - \frac{n}{K} \leq \tilde{\rho} \leq \rho(G),$$
where $K$ is the parameter which is defined in \algtwo{}.
\end{lemma}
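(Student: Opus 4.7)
The plan is to prove both inequalities via three main ingredients: a moment computation that relates $\E[fn]$ to $\E|P\cap V_1^2|$, a Hoeffding concentration bound, and the prior observations of this section which together bound $\E|P\cap V_1^2|$ in terms of $\rho(G)$.

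First, I would compute $\E[fn]$. Each sample $(u_i, p_i)$ is drawn uniformly from $V_1 \times \{0,1\}$, so $X_i$ is the indicator that a uniformly random port on a uniformly random $V_1$-vertex is occupied by an edge of $P$ whose both endpoints lie in $V_1$. An averaging argument---each such edge occupies exactly one port at each of its two endpoints, contributing $2$ port-vertex pairs out of the $2n$ possible---gives $\E[X_i \mid \pi] = |P \cap V_1^2|(\pi)/n$, hence $\E[fn] = \E|P\cap V_1^2|$. Since the $X_i$'s are i.i.d.\ Bernoulli, Hoeffding's inequality (\cref{prop:hoeffding}) applied with $r = 192 K^2 \log n$ then yields $|fn - \E[fn]| \leq n/(8K)$ with probability $1 - n^{-\Omega(1)}$.

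Next, I would bound $\E|P\cap V_1^2|$ in terms of $\rho(G)$ by chaining the earlier ingredients of this section. \Cref{clm:abnormal-vertices-count} gives $\E|R| \leq n/(4K)$. \Cref{obs: small-TSP-R} sandwiches $\E|P \cap (V_1 \cup R)^2|$ between $\tfrac{1}{2}\rho(\hat{G}[V_1\cup R])$ and $(1+2/K)\rho(\hat{G}[V_1\cup R])$. Because $\hat{G}[V_1]\cong G$ and attaching $|R|$ extra vertices can grow the path cover by at most $2|R|$ edges, we have $\rho(G)\le \rho(\hat{G}[V_1\cup R])\le \rho(G)+2|R|$. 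Finally, $|P\cap V_1^2|$ and $|P\cap(V_1\cup R)^2|$ differ by at most $2|R|$ (the edges of $P$ incident to $R$, each endpoint of which has $P$-degree at most $2$). Assembling these pieces yields
\[
\tfrac{1}{2}\rho(G) - O(n/K) \;\leq\; \E|P\cap V_1^2| \;\leq\; (1+2/K)\rho(G) + O(n/K).
\]

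The final step is to substitute the concentrated value of $fn$ into $\tilde\rho = \frac{K}{2(K+2)}(fn - n/(4K))$ and simplify. The scale factor $\frac{K}{2(K+2)}$ is precisely calibrated so that multiplying it by the $(1+2/K)$ cycle-overhead from \Cref{clm: approximation-bound} gives the clean coefficient needed in the lemma, and the subtracted offset $n/(4K)$ is tuned to absorb the accumulated $O(n/K)$ additive errors arising from abnormal vertices, $V_1$--$R$ boundary edges of $P$, and the Hoeffding slack. Direct algebraic simplification of both tails then produces the claimed lower bound $(\tfrac{1}{2}-\tfrac{1}{K})\rho(G) - \tfrac{n}{K}$ and upper bound $\rho(G)$. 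The main---and only non-routine---obstacle will be the careful bookkeeping of these additive error terms across their four sources (two contributions of order $\E|R| \le n/(4K)$, the $2/K$ cycle-slack, and the $n/(8K)$ Hoeffding deviation) and verifying that the chosen numerical constants squeeze them into the stated $n/K$ tolerance on the lower tail while collapsing the upper tail cleanly to $\rho(G)$.
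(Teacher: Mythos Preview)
Your overall plan mirrors the paper's: express $\E[fn]$ in terms of $\E|\hat P|$ (with $\hat P := P\cap (V_1\times V_1)$), concentrate, sandwich $\E|\hat P|$ against $\rho(G)$ via the earlier claims of the section, and substitute. The paper uses a Chernoff bound where you use Hoeffding; either suffices, and your chaining of \Cref{clm:abnormal-vertices-count}, \Cref{obs: small-TSP-R}, and the degree-$2$ bound on $P$ is essentially the same route the paper takes to pin down $\E|\hat P|$.

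There is, however, a factor-of-two mismatch that prevents your final step from closing. Your port-counting gives $\E[X_i\mid\pi]=|\hat P|/n$, hence $\E[fn]=\E|\hat P|$. Plugging this into $\tilde\rho=\frac{K}{2(K+2)}\bigl(fn-\tfrac{n}{4K}\bigr)$ yields, after concentration, $\tilde\rho\approx\frac{K}{2(K+2)}\,\E|\hat P|$. Combining with your own lower bound $\E|\hat P|\ge \tfrac12\rho(G)-O(n/K)$ then gives only $\tilde\rho\gtrsim \tfrac14\rho(G)$, a factor of two short of the lemma's $(\tfrac12-\tfrac1K)\rho(G)$. Your remark that the scale factor $\frac{K}{2(K+2)}$ is ``precisely calibrated so that multiplying it by the $(1+2/K)$ cycle-overhead gives the clean coefficient needed'' in fact computes $\frac{K}{2(K+2)}\cdot\frac{K+2}{K}=\tfrac12$; that product pertains to the \emph{upper} tail and there would only yield $\tilde\rho\lesssim\tfrac12\rho(G)+O(n/K)$, not $\rho(G)$. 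The paper instead takes $\E[X_i]=2\E|\hat P|/n$ (so $\E[fn]=2\E|\hat P|$), with the justification that ``each edge in the output of \algtwo{} is counted twice''; that extra factor of two is exactly what converts $\frac{K}{2(K+2)}$ into $\frac{K}{K+2}$ and makes both tails land where the lemma requires. You should revisit this moment computation carefully before asserting that ``direct algebraic simplification'' finishes the proof --- with your stated value of $\E[fn]$, it does not.
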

\begin{myproof}
Let $\hat{P}$ be the set of edges outputted by \algtwo{} on $\hat{G}$ with both endpoints in $V_1$. By \Cref{clm: approximation-bound}, we have that $\E|\hat{P}| \leq (1 + \frac{2}{K})\cdot\rho(G)$. Furthermore, by \Cref{clm:abnormal-vertices-count} and the fact that the degree of each vertex in the output of \algtwo{} is at most two, in the output of \algtwo{} on $\hat{G}[V_1 \cup R]$ we have at most $n/(2K)$ edges with one endpoint in $R$. Hence, combining with \Cref{clm: approximation-bound} and \Cref{obs: small-TSP-R} we get
\begin{align}\label{eq: phat-range}
    \frac{1}{2}\rho(G) - \frac{n}{2K} \leq \E|\hat{P}| \leq (1 + \frac{2}{K})\cdot \rho(G).
\end{align}
Since each edge in the output of \algtwo{} counted twice in \Cref{alg: disjoint-paths}, we have 
\begin{align*}
    \E[X_i] = \Pr[X_i = 1] = \frac{2 \E|\hat{P}|}{n},
\end{align*}
and,
\begin{align}\label{eq: X-expected2}
    \E[X] = \frac{2r \E|\hat{P}|}{n}.
\end{align}
Since $X$ is sum of $r$ independent random variables, by Chernoff bound (\Cref{prop:chernoff}) we get
\begin{align*}
    \Pr[|X - \E[X]| \leq \sqrt{6 \E[X] \log n}] \leq 2 \exp \left(-\frac{6 \E[X] \log n}{3 \E[X]} \right) = \frac{2}{n^2}.
\end{align*}
Combining $fn = Xn/r$ and the above bound, with probability of at least $1 - 2/n^2$ we have
\begin{align*}
    fn \in \frac{n(\E[X] \pm \sqrt{6 \E[X] \log n)}}{r} & = \frac{n\E[X]}{r} \pm \sqrt{\frac{6 n^2\E[X] \log n}{r^2}}\\
    & = 2\E|\hat{P}| \pm \sqrt{\frac{12n\E|\hat{P}| \log n}{r}} & (\text{By }(\ref{eq: X-expected2}))\\
    & = 2\E|\hat{P}| \pm \sqrt{\frac{n\E|\hat{P}|}{16K^2}} & (\text{Since } r = 192 \cdot K^2 \cdot \log n) \\
    & \in 2\E|\hat{P}| \pm \frac{n}{4K} & (\text{Since } \E|\hat{P}| \leq n).
\end{align*}
Since, $\tilde{\rho} = \frac{K}{2(K+2)}\cdot (f \cdot n - \frac{n}{4K})$, hence
\begin{align*}
    \frac{K}{K+2}\left(\E|\hat{P}| - \frac{n}{2K} \right) \leq \tilde{\rho} \leq \frac{K}{K+2} \cdot \E| \hat{P}|.
\end{align*}
Combining with (\ref{eq: phat-range}), implies the claimed bound.
\end{myproof}

\begin{theorem}\label{thm: disjoint-paths-algorithm}
Given an adjacency matrix access for input graph $G$, there exists a randomized algorithm that w.h.p. runs in $\widetilde{O}(n)$ time and  produces an estimate $\tilde{\rho}$, such that
$$
\left(\frac{1}{2} - \epsilon\right)\cdot\rho(G) - \epsilon n \leq \tilde{\rho} \leq \rho(G).
$$
\end{theorem}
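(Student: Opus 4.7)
The plan is to instantiate \Cref{alg: disjoint-paths} with a suitably chosen constant $K$, use \Cref{lem: error-bound} for the approximation guarantee, and combine \Cref{clm: v1query} with \Cref{lem: implementation} for the runtime.

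First I would set $K = \lceil c/\epsilon \rceil$ for a sufficiently large absolute constant $c$, so that $1/K \le \epsilon$ and $K = O(1/\epsilon)$. Plugging this into \Cref{lem: error-bound} immediately yields, with high probability,
\begin{align*}
\bigl(\tfrac{1}{2} - \epsilon\bigr)\rho(G) - \epsilon n \;\le\; \tilde{\rho} \;\le\; \rho(G),
\end{align*}
which is exactly the approximation claimed by the theorem (after re-scaling $\epsilon$ by a constant factor if desired). Correctness is thus a direct consequence of the already-established \Cref{lem: error-bound}.

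Next I would bound the runtime. The algorithm performs $r = 192 K^2 \log n = \Tilde{O}(\poly(1/\epsilon))$ vertex oracle calls on uniformly random vertices of $V_1$. By \Cref{lem: implementation}, each call $\VO(u_i, \pi)$ runs in $\Tilde{O}(T(u_i, \pi))$ time, and by \Cref{clm: v1query} we have $\E_{v \sim V_1, \pi}[T(v, \pi)] = \Tilde{O}(n)$. The reduction to the adjacency matrix model is cost-free up to constant factors: by \Cref{obs: finding-neighbor} each adjacency list query on $\hat{G}$ required by the oracle is simulated by $O(1)$ adjacency matrix queries to $G$. Summing expectations over the $r$ samples gives a total expected runtime of $\Tilde{O}(n \cdot \poly(1/\epsilon))$.

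The main obstacle is promoting this \emph{expected} runtime to a high-probability runtime, since individual calls could in principle cost as much as $O(n^2)$ by \Cref{obs: q-upperbound}. The plan here is a standard truncation argument. Let $B := n \cdot \log^{c'} n \cdot \poly(1/\epsilon)$ for a sufficiently large constant $c'$; by Markov's inequality applied to $T(u_i, \pi)$, the probability that a single sample's oracle call requires more than $B$ recursive steps is at most $1/\log^{c'-O(1)} n$. I would run each oracle call with a budget of $B$ steps, aborting and outputting a default value (say, ``port unoccupied'') when the budget is exceeded. By a Chernoff/union bound over the $r = \poly(1/\epsilon)\log n$ samples, at most an $O(\epsilon)$-fraction of oracle calls are aborted w.h.p., which perturbs the empirical mean $f$ by at most $O(\epsilon)$ and hence $\tilde{\rho}$ by at most $O(\epsilon n)$; this additional error can be folded into the $\epsilon n$ slack of \Cref{lem: error-bound} by slightly re-tuning $K$ and the budget. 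The resulting total runtime is $r \cdot \Tilde{O}(B) = \Tilde{O}(n \cdot \poly(1/\epsilon))$ with high probability, completing the proof.
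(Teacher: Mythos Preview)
Your proposal is essentially the paper's proof: set $K = \Theta(1/\epsilon)$, invoke \Cref{lem: error-bound} for the approximation guarantee, and combine \Cref{clm: v1query} with \Cref{lem: implementation} (and \Cref{obs: finding-neighbor}) for the per-sample cost, then multiply by $r = O(K^2 \log n)$. The paper's own proof is in fact terser than yours and does not spell out the expected-to-high-probability runtime conversion at all.

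One minor caveat in your truncation step: the $r$ oracle calls all share the same implicitly fixed permutation $\pi$, so the indicator events ``call $i$ exceeds budget $B$'' are not independent and a Chernoff bound over the samples is not justified. A union bound still yields that with probability $1 - r/\polylog(n) = 1 - o(1)$ no sample is truncated, but this is weaker than the usual $1 - 1/\poly(n)$ sense of ``w.h.p.'' The standard fix in this literature is to repeat the whole procedure $O(\log n)$ times with fresh randomness (fresh $\pi$ and fresh samples), truncating each run at a constant multiple of its expected total cost; by Markov each run completes with probability at least $1/2$, so w.h.p.\ some run finishes, and that run's output is correct w.h.p.\ by \Cref{lem: error-bound}.
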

\begin{myproof}
Let $K = \frac{1}{\epsilon}$ and $\tilde{\rho}$ be the output of \Cref{alg: disjoint-paths} on $G$. In \Cref{alg: disjoint-paths}, by combining \Cref{lem: implementation} and \Cref{clm: v1query}, the running time for each sample is $\tilde{O}(n)$. Since the number of samples is $r = 192K^2\log n$, and $K$ is a constant, the total running time of the algorithm is $\tilde{O}(n)$. Moreover, by \Cref{lem: error-bound} we get the approximation ratio in the statement.
\end{myproof}

\section{Our Estimator for (1,2)-TSP}
In this section, we use the algorithm for estimating the size of maximum path cover as a black box to estimate the size of (1,2)-TSP. First, note that if there is no Hamiltonian cycle with weight one edges of the graph, then the set of weight-one edges of the graph (1,2)-TSP is a solution for maximum path cover of graph $G$. Also, in the case that there exists a Hamiltonian cycle, then the size of maximum path cover is $n-1$. Moreover, if $P^*$ is the maximum path cover of a graph $G$, then it is possible to create a TSP by connecting these paths using edges with weight two. This intuition helps to formalize the following observation.

\begin{observation}\label{obs: bounding-TSP-with-DP}
Let $\tspsize{V}$ be the cost of (1,2)-TSP of graph $G = (V, E)$. Then, we have 
$$2n - \rho(G) - 1 \leq \tspsize{V} \leq 2n - \rho(G).$$
\end{observation}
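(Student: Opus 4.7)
The plan is to reduce the cost of a $(1,2)$-TSP tour to the number of weight-$1$ edges it uses, and then relate this quantity to $\rho(G)$. Since a Hamiltonian cycle has exactly $n$ edges, and each edge has cost $1$ or $2$, a tour that uses $k$ weight-$1$ edges has cost $k + 2(n - k) = 2n - k$. So minimizing the TSP cost is equivalent to maximizing the number of weight-$1$ edges included in some Hamiltonian cycle, and the double inequality reduces to showing that this maximum $k$ satisfies $\rho(G) \leq k \leq \rho(G) + 1$.

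For the upper bound $\tspsize{V} \leq 2n - \rho(G)$, I would start with a maximum path cover $P^\star$ of $G$ having $\rho(G)$ edges and partitioning the vertices into vertex-disjoint paths (isolated vertices count as trivial length-$0$ paths). Order the endpoints of these paths arbitrarily and close them into a single Hamiltonian cycle by inserting weight-$2$ edges between consecutive endpoints (one such connection per ``gap''). The number of edges added is exactly $n - \rho(G)$, so the total cost of the resulting Hamiltonian cycle is $\rho(G) \cdot 1 + (n - \rho(G)) \cdot 2 = 2n - \rho(G)$, which upper-bounds $\tspsize{V}$.

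For the lower bound $\tspsize{V} \geq 2n - \rho(G) - 1$, I would take an optimum tour $T$ and look at the subgraph $T_1 \subseteq T$ of weight-$1$ edges. Since $T$ has maximum degree $2$ and no short cycles (other than itself), $T_1$ is a subgraph of maximum degree $2$ whose only possible cycle is the full Hamiltonian cycle on $V$. I would split into two cases. In the main case, $T_1$ contains no cycle, so it is a collection of vertex-disjoint paths — that is, a valid path cover — and therefore $|T_1| \leq \rho(G)$, giving $\tspsize{V} = 2n - |T_1| \geq 2n - \rho(G)$. In the degenerate case, $T_1 = T$ is the whole Hamiltonian cycle, so $\tspsize{V} = n$; but deleting any single edge of this cycle yields a Hamiltonian path with $n-1$ weight-$1$ edges, certifying $\rho(G) \geq n - 1$ and hence $2n - \rho(G) - 1 \leq n = \tspsize{V}$.

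The two cases combined yield the lower bound, and together with the upper bound this proves the observation. No step here is really an obstacle; the only subtlety is remembering the degenerate ``all weight-$1$ Hamiltonian cycle'' case, which is exactly what forces the additive $-1$ slack in the lower bound (in that case $\rho(G)$ can only witness a Hamiltonian \emph{path}, short of the cycle by one edge).
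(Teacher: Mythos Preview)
Your proof is correct and follows essentially the same approach as the paper, which gives only an informal justification in the paragraph preceding the observation (connecting the paths of a maximum path cover with weight-$2$ edges for the upper bound, and observing that the weight-$1$ edges of an optimal tour form a path cover unless they constitute a full Hamiltonian cycle for the lower bound). Your write-up is more careful and explicit, especially in handling the degenerate Hamiltonian-cycle case, but the underlying argument is identical.
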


Now we are ready to present the final algorithm for estimating (1,2)-TSP.

\begin{algorithm}
\caption{Final algorithm for (1,2)-TSP.}
\label{alg: 12tsp}

    Construct $\hat{G} = (V_{\hat{G}}, E_{\hat{G}})$ implicitly as desribed in \Cref{sec: disjoint-paths-estimate}.
    
    Let $\tilde{\rho}$ be the output of \Cref{alg: disjoint-paths} on $\hat{G}$.
    
    $\tilde{\tau} = 2n - \tilde{\rho}$
    
    \Return $\tilde{\tau}$
\end{algorithm}

\begin{lemma}\label{lem: 12tsp-approx}
Let $\tilde{\tau}$ be the output of \Cref{alg: 12tsp} and $\tspsize{V}$ be the cost of (1,2)-TSP of graph $G = (V, E)$. With high probability,
$$
\tspsize{V} \leq \tilde{\tau} \leq (\frac{3}{2} + \frac{4}{K})\cdot \tspsize{V},
$$
where $K$ is the parameter which is defined in \algtwo{}.
\end{lemma}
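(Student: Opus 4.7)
The plan is to prove both bounds on $\tilde{\tau}$ by combining the error guarantee of \Cref{alg: disjoint-paths} (given by \Cref{lem: error-bound}) with the structural relationship between $\pathcoversize{G}$ and $\tspsize{V}$ from \Cref{obs: bounding-TSP-with-DP}. The lower bound $\tilde{\tau} \geq \tspsize{V}$ is the easy direction: since \Cref{lem: error-bound} gives $\tilde{\rho} \leq \rho(G)$ w.h.p., we immediately have $\tilde{\tau} = 2n - \tilde{\rho} \geq 2n - \rho(G) \geq \tspsize{V}$, where the last inequality is the upper bound in \Cref{obs: bounding-TSP-with-DP}.

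For the upper bound, I would chain the inequalities as follows. First apply \Cref{lem: error-bound} to write
\[
    \tilde{\tau} \;=\; 2n - \tilde{\rho} \;\leq\; 2n - \left(\tfrac{1}{2} - \tfrac{1}{K}\right)\rho(G) + \tfrac{n}{K}.
\]
Since $\rho(G) \leq n$, the $\frac{1}{K}\rho(G)$ term is at most $\frac{n}{K}$, giving $\tilde{\tau} \leq 2n - \tfrac{1}{2}\rho(G) + \tfrac{2n}{K}$. Next, substitute the \emph{lower} bound on $\rho(G)$ from \Cref{obs: bounding-TSP-with-DP}, namely $\rho(G) \geq 2n - \tspsize{V} - 1$, to obtain
\[
    \tilde{\tau} \;\leq\; 2n - \left(n - \tfrac{1}{2}\tspsize{V} - \tfrac{1}{2}\right) + \tfrac{2n}{K} \;=\; n + \tfrac{1}{2}\tspsize{V} + \tfrac{1}{2} + \tfrac{2n}{K}.
\]

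Finally, I would use the trivial bound $\tspsize{V} \geq n$ (the tour has $n$ unit-or-larger weight edges) to replace the $n$ and $\frac{2n}{K}$ terms by $\tspsize{V}$ and $\frac{2\tspsize{V}}{K}$ respectively, producing $\tilde{\tau} \leq \left(\tfrac{3}{2} + \tfrac{2}{K}\right)\tspsize{V} + \tfrac{1}{2}$. To absorb the residual additive $\tfrac{1}{2}$ into the multiplicative slack, observe that for $n \geq K/4$ (which holds for all nontrivial instances since $K = O(1/\epsilon)$ is a constant and $n$ grows), we have $\tfrac{1}{2} \leq \tfrac{2}{K}\tspsize{V}$, yielding the claimed $\tilde{\tau} \leq \left(\tfrac{3}{2} + \tfrac{4}{K}\right)\tspsize{V}$.

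The proof is essentially an arithmetic chain, so there is no real obstacle; the only point that needs attention is ensuring the additive error $\frac{1}{2}$ (and potentially the $\frac{n}{K}$ terms) can be folded into the multiplicative factor without weakening it beyond $\tfrac{3}{2} + \tfrac{4}{K}$. This relies crucially on the two cheap lower bounds $\rho(G) \leq n$ and $n \leq \tspsize{V}$, both of which are immediate. All high-probability qualifications propagate directly from \Cref{lem: error-bound}.
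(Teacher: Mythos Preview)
Your proof is correct and follows essentially the same approach as the paper: both combine the error guarantee of \Cref{lem: error-bound} with \Cref{obs: bounding-TSP-with-DP}, then use $\rho(G)\leq n$ and $\tspsize{V}\geq n$ (the paper phrases the latter as ``$K\ll n$'') to absorb the lower-order terms into the multiplicative factor. The only difference is cosmetic: the paper manipulates the expression to reach $(\tfrac{3}{2}+\tfrac{4}{K})(2n-\rho(G)-1)$ directly and then applies $\tspsize{V}\geq 2n-\rho(G)-1$, whereas you substitute earlier and use $\tspsize{V}\geq n$ explicitly; both routes are straightforward arithmetic.
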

\begin{myproof}
By \Cref{obs: bounding-TSP-with-DP}, we have $2n - \rho(G) - 1 \leq \tspsize{V} \leq 2n - \rho(G)$. \Cref{alg: 12tsp} outputs $\tilde{\tau} = 2n - \tilde{\rho}$ as the estimate, where $\tilde{\rho}$ is the output of \Cref{alg: disjoint-paths}. Hence, by \Cref{lem: error-bound}, we have $2n - \tilde{\rho} \geq 2n - \rho(G)$. Also, by \Cref{lem: error-bound}, we have 
\begin{align*}
    2n - \tilde{\rho} & \leq 2n - (\frac{1}{2} - \frac{1}{K})\cdot \rho(G) + \frac{n}{K} \\
    & \leq 3n - \frac{3\rho(G)}{2} + \frac{4n}{K} - \frac{2\rho(G)}{K} - 1 & (\text{Since } \rho(G) < n)\\
    & \leq (\frac{3}{2} + \frac{4}{K})(2n - \rho(G) - 1) & (\text{Since } K \ll n) \\
    & \leq (\frac{3}{2} + \frac{4}{K})\cdot\tspsize{V} & (\text{Since } \tspsize{V} = 2n - \rho(G)),\\
\end{align*}
which completes the proof.
\end{myproof}

\begin{theorem}\label{thm: 12tsp}
Let $\tspsize{V}$ be the cost of (1,2)-TSP of graph $G = (V, E)$. For any $\epsilon > 0$, there exists an algorithm that estimate the cost of (1,2)-TSP, $\tilde{\tau}$, such that
$$
\tspsize{V} \leq \tilde{\tau} \leq (\frac{3}{2} + \epsilon)\cdot \tspsize{V},
$$
w.h.p in $\tilde{O}(n)$ running time.
\end{theorem}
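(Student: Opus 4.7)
The plan is to derive this theorem directly from \Cref{lem: 12tsp-approx} and the running time analysis behind \Cref{thm: disjoint-paths-algorithm}, by picking the parameter $K$ in \algtwo{} appropriately in terms of $\epsilon$. Specifically, I would set $K = \lceil 4/\epsilon \rceil$ so that the factor $3/2 + 4/K$ appearing in \Cref{lem: 12tsp-approx} is at most $3/2 + \epsilon$, and then run \Cref{alg: 12tsp} with this choice of $K$.

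For the approximation guarantee, I would simply invoke \Cref{lem: 12tsp-approx}: with high probability the output $\tilde{\tau}$ of \Cref{alg: 12tsp} satisfies $\tspsize{V} \leq \tilde{\tau} \leq (3/2 + 4/K)\cdot \tspsize{V} \leq (3/2 + \epsilon)\cdot \tspsize{V}$. No new argument is needed here beyond plugging in the chosen value of $K$.

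For the running time, I would observe that \Cref{alg: 12tsp} does essentially no work beyond implicitly constructing $\hat{G}$ (which does not require any queries, since adjacency-list access to $\hat{G}$ can be simulated with $O(1)$ queries to the adjacency matrix of $G$ by \Cref{obs: finding-neighbor}) and then calling \Cref{alg: disjoint-paths} once. By \Cref{thm: disjoint-paths-algorithm}, this call runs in $\widetilde{O}(n)$ time with high probability. Since $K = O(1/\epsilon)$ is treated as a constant, the $\poly(1/\epsilon)$ factor is absorbed into the $\widetilde{O}(\cdot)$ notation, and the final subtraction $\tilde{\tau} = 2n - \tilde{\rho}$ takes constant time.

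There is no real obstacle in this step; it is a clean corollary of the two prior results. The only thing to be a bit careful about is that the $\epsilon$ in the statement of this theorem controls both the multiplicative and (through \Cref{lem: error-bound}) an additive error in the path cover estimate, but as shown in the proof of \Cref{lem: 12tsp-approx}, the additive $n/K$ term is absorbed into the $(3/2 + 4/K)$ multiplicative factor using the trivial bound $\rho(G) < n$ and $\tspsize{V} \geq n$, so both sources of error are simultaneously controlled by taking $K$ sufficiently large in terms of $\epsilon$.
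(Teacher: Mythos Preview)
Your proposal is correct and takes essentially the same approach as the paper: choose $K = 4/\epsilon$ (the paper does not bother with the ceiling), invoke \Cref{lem: 12tsp-approx} for the approximation guarantee, and cite \Cref{thm: disjoint-paths-algorithm} for the $\widetilde{O}(n)$ running time of \Cref{alg: disjoint-paths}, which is the only nontrivial step in \Cref{alg: 12tsp}. Your extra remarks about \Cref{obs: finding-neighbor} and the absorption of the additive error are accurate but already handled inside the cited results, so they are not needed here.
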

\begin{myproof}
We choose $K = \frac{4}{\epsilon}$. By \Cref{lem: 12tsp-approx}, if $\tilde{\tau}$ is the output of \Cref{alg: 12tsp}, we get
\begin{align*}
    \tspsize{V} \leq \tilde{\tau} \leq (\frac{3}{2} + \epsilon)\cdot \tspsize{V}.
\end{align*}
Also, since the running time of \Cref{alg: 12tsp} is the same as the running time of \Cref{alg: disjoint-paths}, by \Cref{thm: disjoint-paths-algorithm}, the total running time is $\tilde{O}(n)$, which completes the proof.
\end{myproof}

\section{Our Estimator for Graphic TSP}\label{sec:graphic-TSP-minor}
In this section, we use our algorithm for estimating the size of maximum path cover to estimate the size of graphic TSP. In a recent work, Chen et al. \cite{chen2020} showed that it is possible to obtain a $(27/14)$-approximate algorithm for graphic TSP by estimating the matching size and the number of biconnected components in the graph. Since the size of graphic TSP is at most $2n$ (the cost of MST is $n-1$), they proved that if a graph has large matching and a few biconnected components, the cost of graphic TSP is significantly lower than $2n$. Since estimating the number of biconnected components is not an easy task in sublinear time, they use a proxy quantity that can be estimated in sublinear time.

We show that if we use our estimator for maximum path cover as a black-box instead of matching estimator in algorithm of \cite{chen2020}, we can improve the approximation ratio to $19/10$. Moreover, we show that we can estimate the number of bridges in $\tilde{O}(n)$. We exploit this estimation for further improvement to get a $11/6$-approximate algorithm for graphic TSP. 

Chen et al. \cite{chen2020} introduced the following definition of {\em bad vertex} as a proxy for estimating the number of biconnected components.

\begin{definition}[Bad Vertex]
We say a vertex $v \in V$ is a bad vertex, if one of the following holds:
\begin{itemize}
    \item degree of $v$ is 1,
    \item $v$ is a cut vertex with degree 2.
\end{itemize}
\end{definition}

In the following series of lemmas, we bound the cost of graphic TSP based on the size of maximum path cover and number of bad vertices. Almost all the steps of this part are similar to the algorithm for graphic TSP of \cite{chen2020} --- except the path cover subroutine that we use instead of maximal matching subroutine. We restate some of the useful lemmas to achieve the approximation bound that the black-box algorithm can get, and in the next subsection we improve this bound. First, we prove that if the size of the maximum path cover is small, the cost of graphic TSP is bounded away from $n$.

\begin{claim}\label{lem: lower-bound-path}
If the size of maximum path cover of graph $G$ is at most $\rho$, then the cost of graphic TSP is at least $2n - \rho$.  
\end{claim}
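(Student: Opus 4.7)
The plan is to extract a path cover of $G$ directly from an optimal graphic TSP tour. Fix an optimal tour that visits the vertices in cyclic order $v_1, v_2, \ldots, v_n, v_1$, so that $\tau = \sum_{i=1}^{n} d_G(v_i, v_{i+1})$ with indices taken mod $n$. Call leg $i$ \emph{short} if $d_G(v_i, v_{i+1}) = 1$ --- equivalently, if $v_i v_{i+1}$ is an actual edge of $G$ --- and let $\ell$ be the number of short legs.

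The key structural observation is that, provided $\ell < n$, the short-leg edges $\{v_i v_{i+1} : d_G(v_i, v_{i+1}) = 1\}$ constitute a valid path cover of $G$ of size $\ell$. Indeed, as a subset of the cyclic sequence $v_1 \to v_2 \to \cdots \to v_n \to v_1$, every vertex is incident to at most two of these edges (max degree at most $2$), and whenever at least one leg is long the subset is acyclic. Hence $\pathcoversize{G} \geq \ell$.

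Next, every long leg contributes at least $2$ to the tour cost, so
$$
\tau \;=\; \sum_{i=1}^{n} d_G(v_i, v_{i+1}) \;\geq\; \ell + 2(n - \ell) \;=\; 2n - \ell,
$$
which rearranges to $\ell \geq 2n - \tau$. Combined with the previous paragraph this gives $\pathcoversize{G} \geq 2n - \tau$, and together with the hypothesis $\pathcoversize{G} \leq \rho$ we conclude $\tau \geq 2n - \pathcoversize{G} \geq 2n - \rho$, as claimed.

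The main point where care is required is the degenerate case $\ell = n$, in which every leg is an edge of $G$ and the tour is itself a Hamiltonian cycle of $G$; the short-leg edges then form a cycle rather than a path cover. Here $\tau = n$, and deleting any single edge of the Hamiltonian cycle yields a Hamiltonian path, so we only obtain $\pathcoversize{G} \geq n-1$ and the slightly weaker bound $\tau \geq 2n - \pathcoversize{G} - 1$ --- exactly the off-by-one appearing in the $(1,2)$-TSP analogue \Cref{obs: bounding-TSP-with-DP}. I expect this corner case to be the only subtle point in the proof and to be either absorbed into an $O(1)$ additive slack of the downstream estimator or handled by a short separate remark; the main argument above covers all non-Hamiltonian $G$ directly.
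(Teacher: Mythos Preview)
Your proposal is correct and follows essentially the same approach as the paper: take an optimal tour, observe that its weight-one legs form a path cover of $G$ (hence there are at most $\rho$ of them), and lower-bound each remaining leg by $2$. You are in fact more careful than the paper's own proof, which silently asserts that the weight-one edges form a path cover without addressing the Hamiltonian-cycle corner case you flag; the resulting off-by-one there is harmless downstream, as it is absorbed into the additive constants in the approximation analysis (cf.\ the slack in the proof of \Cref{lem: graphic-tsp-approx1}).
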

\begin{myproof}
Let $(v_0, v_2, \ldots, v_n = v_0)$ be the optimal graphic TSP of graph $G$. Note that the subgraph induced by weight-one edges of this cycle is a solution for path cover. Hence, at most $\rho$ edges in cycle $(v_0, v_2, \ldots, v_n = v_0)$ have weight one. All the remaining edges have a weight of at least two which implies the claimed bound.
\end{myproof}

Furthermore, the following lemma from \cite{chen2020}, provides a lower bound for a graphic TSP of graph in terms of number of bad vertices.

\begin{lemma}[{\cite[Lemma~2.8]{chen2020}}]\label{lem: lower-bound-bad-vertex}
If the number of bad vertices of graph $G$ is at least $\beta$, then the cost of graphic TSP is at least $n + \beta - 2$.
\end{lemma}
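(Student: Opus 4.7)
The plan is to reduce the claim to two cleaner inequalities that chain together: first, a structural lower bound $\tspsize{V} \geq n - 1 + b(G)$ where $b(G)$ denotes the number of bridges of $G$; second, a combinatorial inequality $b(G) \geq \beta - 1$ relating bridges to bad vertices. Chaining them immediately yields $\tspsize{V} \geq n - 1 + (\beta - 1) = n + \beta - 2$. When $\beta \leq 1$ the claim is subsumed by the trivial bound $\tspsize{V} \geq n$, so we may assume $\beta \geq 2$ in what follows.

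For the first inequality, I would view an optimal graphic TSP tour as a multiset $M$ of edges forming a connected, spanning, Eulerian multigraph on $V$, so that $|M| = \tspsize{V}$. Writing $H$ for the underlying simple subgraph (the support of $M$), connectivity and spanning-ness force $|H| \geq n-1$. Every bridge of $G$ must lie in $H$ (otherwise $H$ itself would be disconnected), and since $H \subseteq G$ any bridge of $G$ remains a bridge of $H$. Because $M$ is Eulerian and must visit vertices on both sides of the bridge, the multiplicity $m(e)$ of each bridge must be even and strictly positive, hence $m(e) \geq 2$. Splitting the total cost between bridge and non-bridge edges then gives
\begin{align*}
\tspsize{V} \;=\; \sum_{e} m(e) \;\geq\; (|H| - b(G)) + 2 b(G) \;=\; |H| + b(G) \;\geq\; n - 1 + b(G).
\end{align*}

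For the second inequality I would look at the bridge-block tree $T$ of $G$, whose nodes are the $2$-edge-connected components of $G$ and whose edges are precisely the bridges of $G$; if $T$ has $k$ nodes then $b(G) = k - 1$. The key observation is that every bad vertex is its own trivial (single-vertex) block in $T$: a degree-$1$ vertex has its unique incident edge as a bridge, while for a degree-$2$ cut vertex one checks directly that both of its incident edges are bridges (removing either one separates the same side that is cut off when the vertex itself is removed). Distinct bad vertices therefore correspond to distinct singleton blocks, giving $\beta \leq k = b(G) + 1$, i.e.\ $b(G) \geq \beta - 1$. The main obstacle I anticipate is the careful justification that each bridge must appear with multiplicity at least $2$ in $M$; everything else, including the identification of bad vertices with trivial blocks and the counting of singleton blocks inside the bridge-block tree, is structural and should follow cleanly from the definitions in the preliminaries.
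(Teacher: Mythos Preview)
The paper does not prove this lemma; it is quoted verbatim as \cite[Lemma~2.8]{chen2020} and used as a black box. So there is no in-paper argument to compare against.

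Your proposed argument is correct and self-contained. The two-step chain works: (i) viewing an optimal tour as a connected spanning Eulerian multigraph and observing that every bridge of $G$ must appear in its support and must have even (hence $\geq 2$) multiplicity gives $\tspsize{V}\ge n-1+b(G)$; (ii) each bad vertex is a singleton node of the bridge-block tree (for a degree-$2$ cut vertex $v$ with neighbors $a,b$, the fact that $a,b$ lie in different components of $G-v$ indeed forces both $(v,a)$ and $(v,b)$ to be bridges), so with $G$ connected the tree has $b(G)+1$ nodes and $\beta\le b(G)+1$. Note that the paper itself later proves the slightly stronger inequality $\tspsize{V}\ge n+b(G)$ (\Cref{lem:brdige-lower-bound}) via essentially the same Eulerian-walk reasoning you outline, so your step (i) is fully in line with the paper's toolkit; step (ii) supplies the missing link from bad vertices to bridges that the cited lemma encapsulates.
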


Chen et al. \cite{chen2020} showed that in a biconnected graph, if there exists a matching of large size, the cost of graphic TSP is significantly smaller than $2n$. 

\begin{lemma}[{\cite[Lemma~2.7]{chen2020}}]\label{lem: tsp-matching-bound}
Let $G$ be a graph and $M$ be a matching of $G$. Then the cost of graphic TSP is at most $2n - |M|$.
\end{lemma}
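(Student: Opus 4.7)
The plan is to bound $\tau(G)$ by exhibiting a connected Eulerian spanning multi-subgraph $H$ of $G$ with $|H|\le 2n-|M|$; then an Eulerian circuit of $H$, short-cut in the graphic metric (which can only decrease length), gives a TSP tour of cost at most $|H|$.

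The construction proceeds by contracting the matching. Let $G' = G/M$: this is a connected graph on $n-|M|$ vertices in which each matched pair $(u_i,v_i)\in M$ becomes a single super-vertex $w_i$. Fix any spanning tree $T'$ of $G'$; the doubled tree $2T'$ is a connected Eulerian multigraph on $G'$ with $2(n-|M|-1)$ edges, and any Eulerian circuit of $2T'$ visits each super-vertex $w_i$ exactly $\deg_{T'}(w_i)$ times. I will lift this circuit to a closed walk in $G$ that visits every vertex, paying a controlled number of extra matching-edge traversals in the process.

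Lifting step: each half-edge of $w_i$ used by the Euler circuit corresponds to an actual edge of $G$ incident to exactly one of $u_i$ or $v_i$. For a given visit at $w_i$ (an in-edge/out-edge pair), there are two cases. In a \emph{crossing} visit the two half-edges lift to different sides; we then insert $(u_i,v_i)$ once to bridge them, paying $+1$ edge and visiting both $u_i$ and $v_i$. In a \emph{same-side} visit both half-edges lift to the same endpoint; if every visit at $w_i$ is same-side on the same endpoint, we must add a detour $u_i\to v_i\to u_i$ at cost $+2$ to reach the missing vertex. The key flexibility is that the Eulerian pairing of half-edges at each $w_i$ can be chosen freely, and we choose it to maximize the number of crossings. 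The combinatorial claim at the heart of the argument is that, by choosing $T'$ appropriately (e.g., a DFS tree that, when both options are available, alternates which side of a matched pair its edges enter/leave) and then choosing the Eulerian pairing, at most two super-vertices end up requiring a detour.

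Granting this, the final edge count is
\[
2(n-|M|-1) \;+\; (|M|-2)\cdot 1 \;+\; 2\cdot 2 \;=\; 2n-|M|,
\]
where the three terms come from the doubled spanning tree on $G'$, the $\geq |M|-2$ crossings (one matching-edge traversal each), and the $\le 2$ detours (two matching-edge traversals each). Short-cutting the resulting closed spanning walk yields a TSP tour of cost $\le 2n-|M|$. The main obstacle is the routing claim — showing the spanning tree $T'$ and the Euler pairings can always be arranged so that at most two super-vertices are bad — which relies on the connectedness and additional structural properties of $G'$ (in the paper this lemma is applied where these properties are available, e.g.\ inside the biconnected structure studied in the graphic TSP argument).
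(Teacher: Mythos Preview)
This lemma is not proved in the paper; it is quoted from \cite{chen2020}. So there is no paper proof to compare against, and I will assess your proposal on its own.

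Your accounting $2(n-|M|-1)+(|M|-2)\cdot 1 + 2\cdot 2$ requires $|M|-2$ super-vertices to contribute exactly one crossing each. This is impossible by parity. If $T'$ has $a_i$ edges at $w_i$ lifting to the $u_i$-side and $b_i$ to the $v_i$-side, then $2T'$ has $2a_i$ half-edges on the $u_i$-side and $2b_i$ on the $v_i$-side; writing $c$ for the number of crossing visits and $s_u$ for the number of $u_i$-same-side visits at $w_i$, the half-edge count on the $u_i$-side gives $c+2s_u=2a_i$, so $c$ is even. Hence each super-vertex contributes either $0$ or at least $2$ matching-edge insertions via crossings, and (together with the $+2$ detours where one side has no incident tree edge) the contract--double--lift construction recovers at best the ordinary doubled-tree bound $2(n-1)$, not $2n-|M|$. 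The ``routing claim'' you flag as the main obstacle is therefore not merely unproved---it is false in the form your arithmetic needs.

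There is also a more basic issue: the lemma as printed here (with no hypothesis on $G$ beyond connectedness) is actually false. For the path $P_n$ with $n\ge 6$ and a perfect matching of size $n/2$, any spanning closed walk must cross each of the $n-1$ edge-cuts at least twice, so $\tau(P_n)=2(n-1)>2n-n/2$. In \cite{chen2020} the corresponding statement carries a biconnectedness hypothesis (the sentence immediately preceding the lemma in this paper alludes to it), and the proof there exploits that structure in a way your contraction skeleton does not.
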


\begin{lemma}[{\cite[Lemma~2.11]{chen2020}}]\label{lem: non-bridge-matching}
Let $G$ be a graph and $M'$ be a matching that none of its edges is a bridge. Then the cost of graphic TSP is at most $2n - \frac{2}{3}|M'|$.
\end{lemma}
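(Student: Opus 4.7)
The plan is to build a connected Eulerian spanning multigraph $H$ on $V(G)$ whose total multi-edge count is at most $2n - \tfrac{2}{3}|M'|$; an Euler traversal of $H$ then gives a graphic tour of the required cost. The intuition is that each non-bridge matching edge can be ``used once instead of twice'' via a local rerouting, but because such rerouting must go through the rest of the graph, the per-edge saving is only $2/3$ rather than $1$.

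First, I would reduce to the $2$-edge-connected case using the bridge-block tree of $G$. Any tour must traverse each bridge at least twice (to leave and return), contributing $2b$ to the cost, where $b$ is the number of bridges. Because no edge of $M'$ is a bridge, every matching edge sits entirely inside some $2$-edge-connected block $B$. So it suffices, for each block $B$ with local matching $M'_B \coloneqq M' \cap E(B)$, to produce a closed walk on $V(B)$ of cost at most $2|V(B)| - \tfrac{2}{3}|M'_B|$; concatenating these walks along the bridge-block tree and paying $2$ per bridge traversal yields the global bound $2n - \tfrac{2}{3}|M'|$.

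Inside each $2$-edge-connected block, the plan is to use an open ear decomposition that is chosen to be compatible with the matching. Starting from an initial cycle containing as many matching edges as possible, I would grow ears so that every matching edge lies near the ``tip'' of some short ear: a $2$-ear contributes savings of $1$ per matching edge, a $3$-ear contributes $2/3$, and longer ears carrying several matching edges are averaged. The Eulerian multigraph $H_B$ is then obtained by doubling exactly the non-tip edges of each ear, so the matching edges themselves are used with multiplicity $1$. A direct count shows that the total multiplicity is at most $2(|V(B)|-1) + (\text{correction for odd parities}) - \tfrac{2}{3}|M'_B|$, which is within the desired bound.

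The main obstacle is the amortized accounting when the ear decomposition cannot be chosen so that every matching edge lies in a short ear — for instance, when several matching edges share a long ear. The delicate step is showing that in every such configuration the aggregate saving is still at least $\tfrac{2}{3}$ per matching edge. A clean way to formalize this, which I expect \cite{chen2020} follows, is via a $T$-join / fractional LP argument on the odd-degree vertices of the tree-plus-matching subgraph: each non-bridge matching edge contributes a $2/3$ credit in the LP dual that can be discharged against the length of whichever ear it lies in, and the ``no bridge'' hypothesis is exactly what guarantees a feasible dual of this form exists.
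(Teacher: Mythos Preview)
This paper does not prove the lemma; it is imported as a black box from \cite[Lemma~2.11]{chen2020}. So there is nothing in the present paper to compare your proposal against, and I can only assess the proposal on its own merits.

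The high-level architecture (build a connected Eulerian multigraph of bounded size; reduce via the bridge-block tree to the $2$-edge-connected case; then extract a $2/3$-per-edge saving inside each block) is the natural one. But two things are off. First, the per-block target you state is too weak: if you only ask for a closed walk of cost $\le 2|V(B)| - \tfrac{2}{3}|M'_B|$ on each $2$-edge-connected component $B$, then since $\sum_B |V(B)| = n$ and a connected $G$ with $k$ components in its bridge-block tree has $k-1$ bridges, gluing the walks along the tree costs $2n - \tfrac{2}{3}|M'| + 2(k-1)$, not $2n - \tfrac{2}{3}|M'|$. The correct per-block target is $2(|V(B)|-1) - \tfrac{2}{3}|M'_B|$, which is also the right benchmark since doubling a spanning tree of $B$ already gives $2(|V(B)|-1)$.

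Second, and more importantly, the core content of the lemma --- why exactly $2/3$ per non-bridge matching edge is always achievable --- is not actually established. You sketch an ear-decomposition idea where matching edges sit near ``tips,'' but then immediately concede that the amortization is ``the main obstacle'' and defer to an unspecified $T$-join / fractional LP dual that you ``expect \cite{chen2020} follows.'' That hoped-for argument \emph{is} the lemma; without it, what you have is an outline rather than a proof. Nothing you wrote rules out configurations where the per-edge saving degrades below $2/3$, nor do you exhibit the dual certificate you allude to.
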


We now upper bound the cost of graphic TSP in terms of size of maximum path cover and number of bad vertices.

\begin{lemma}\label{lem:bound-path-cover-bad-vertices-tsp}
If the size of maximum path cover of graph $G$ is $\rho(G)$ and it has $\beta$ bad vertices, then the cost of graphic TSP is at most $2n - \frac{1}{5}(\rho(G) - 2\beta)$.
\end{lemma}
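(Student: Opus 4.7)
Proof plan. The plan is to extract a large matching from the path cover (after stripping bad-vertex-incident edges) and apply Lemma~\ref{lem: tsp-matching-bound}. Let $P$ be a maximum path cover of $G$ with $|P|=\rho(G)$. Since every vertex has degree at most $2$ in $P$, the $\beta$ bad vertices are collectively incident to at most $2\beta$ edges of $P$. Deleting all these edges yields a sub-collection $P' \subseteq P$ with $|P'| \geq \rho(G) - 2\beta$. Note that $P'$ is still a vertex-disjoint union of paths, since removing edges from a max-degree-$2$ acyclic subgraph cannot introduce a cycle.

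Next, I would extract a matching $M$ from $P'$ by taking alternating edges along each path: a path of length $\ell$ contributes $\lceil \ell/2 \rceil \geq \ell/2$ edges, so summing over the paths gives $|M| \geq |P'|/2 \geq (\rho(G) - 2\beta)/2$. Applying Lemma~\ref{lem: tsp-matching-bound} to $M$ yields
\[
   \tspsize{V} \;\leq\; 2n - |M| \;\leq\; 2n - \tfrac{1}{2}(\rho(G) - 2\beta) \;\leq\; 2n - \tfrac{1}{5}(\rho(G) - 2\beta),
\]
where the last inequality uses $\tfrac{1}{2} \geq \tfrac{1}{5}$. The edge case $\rho(G) < 2\beta$ is trivial, since graphic TSP is always at most $2(n-1)$ via doubling any spanning tree.

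The main subtlety is that this direct argument actually produces the sharper constant $\tfrac{1}{2}$ in place of $\tfrac{1}{5}$; the stated constant $\tfrac{1}{5}$ is presumably chosen because it combines cleanly with Lemmas~\ref{lem: lower-bound-path} and~\ref{lem: lower-bound-bad-vertex} to deliver the $\tfrac{11}{6}$-approximation in the subsequent section. An alternative, potentially tighter proof route would be to apply Lemma~\ref{lem: non-bridge-matching} in place of Lemma~\ref{lem: tsp-matching-bound}, which requires showing that from $P'$ one can extract a matching of non-bridge edges of size at least $\tfrac{3}{10}(\rho(G)-2\beta)$, so that $\tfrac{2}{3}\cdot \tfrac{3}{10} = \tfrac{1}{5}$ emerges naturally. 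The hard part of that variant would be bounding how many edges of $P'$ can be bridges of $G$ (bridges need not have bad-vertex endpoints, e.g., in a dumbbell of two cliques joined by a single edge, so one would need a more delicate structural argument rather than simply removing bad-vertex-incident edges). For the claimed looser bound $\tfrac{1}{5}$, however, no such structural claim is necessary and the matching extraction plus Lemma~\ref{lem: tsp-matching-bound} suffices.
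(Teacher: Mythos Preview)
Your argument has a genuine gap: the step ``Applying Lemma~\ref{lem: tsp-matching-bound} to $M$'' is not valid. Although the paper's restatement of that lemma unfortunately drops the hypothesis, Lemma~2.7 of Chen--Kannan--Khanna holds only for \emph{biconnected} graphs (this is exactly what the sentence immediately preceding the lemma says). Without biconnectivity the inequality $\tspsize{V}\le 2n-|M|$ is simply false: already on the path $P_n$ one has $\tspsize{V}=2n-2$ while $2n-\mu(P_n)=2n-\lfloor n/2\rfloor$. Your matching $M$ lives in $G$, which need not be biconnected, and stripping edges incident to bad vertices does nothing to repair this (as you yourself observe with the dumbbell, bridges can have both endpoints of large degree).

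In fact your stronger conclusion $\tspsize{V}\le 2n-\tfrac12(\rho(G)-2\beta)$ is false, which is why the paper cannot simply ``choose'' the constant $\tfrac15$. Take $k$ triangles $a_ib_ic_i$ joined in a chain by bridges $c_ia_{i+1}$. Here $n=3k$, $\beta=0$ (every degree-$2$ vertex lies on a triangle and is not a cut vertex), and $\rho(G)=3k-1$ (there is a Hamiltonian path). A short parity argument shows every closed spanning walk uses each bridge an even number of times and at least three edges inside each triangle, so $\tspsize{V}=5k-2$. Your bound would give $\tfrac{9k+1}{2}$, which is smaller than $5k-2$ for all $k\ge 6$.

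The paper's proof takes a genuinely different route. It does not touch bad vertices directly; instead it introduces the number $B$ of bridges as an auxiliary parameter and balances two upper bounds: (i) from Lemma~\ref{lem: non-bridge-matching}, the non-bridge edges of the path cover give a non-bridge matching of size $\ge\tfrac12(\rho(G)-B)$, hence $\tspsize{V}\le 2n-\tfrac13(\rho(G)-B)$; and (ii) from the block structure, the number $l$ of non-trivial blocks satisfies $l\ge B/2-\beta$, giving $\tspsize{V}\le 2n-B/2+\beta$. A case split at $B=\tfrac25\rho(G)+\tfrac65\beta$ then yields exactly $2n-\tfrac15(\rho(G)-2\beta)$.
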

\begin{myproof}
Let $l$ be the number of non-trivial biconnected components and $M'$ be a maximum matching in graph $G$ that none of its edges is a bridge. Also, let $B$ be the number of bridges in $G$. By the proof of Lemma 2.9 of \cite{chen2020}, the cost of graphic TSP is at most $\min\{2n - \frac{2}{3}|M'|, 2n - l\}$. Note that there are at least $\rho(G) - B$ edges of the maximum path cover that are not bridge. Since all non-bridge edges of the maximum path cover are still union of several disjoint paths, there exists a matching with size of at least half of the edges of these paths. Hence, there exist a matching of size at least $\frac{1}{2}(\rho(G) - B)$. On the other hand, in the proof of the same lemma, they showed that $l \geq \frac{B}{2} - \beta$ which implies that the cost of graphic TSP is at most
\begin{align*}
\min\left\{2n - \frac{2}{3}|M'|, 2n - l\right\} \leq \min\left\{2n - \frac{1}{3}(\rho(G) - B), 2n - \frac{B}{2} + \beta\right\}.
\end{align*}
There are two possible cases:
\begin{itemize}
    \item If $B \leq \frac{2}{5}\rho(G) + \frac{6}{5}\beta$, then we have 
    \begin{align*}
        2n - \frac{1}{3}(\rho(G) - B) \leq 2n - \frac{1}{3}(\rho(G) - \frac{2}{5}\rho(G) - \frac{6}{5}\beta) = 2n - \frac{1}{5}(\rho(G) - 2\beta).
    \end{align*}
    \item If $B > \frac{2}{5}\rho(G) + \frac{6}{5}\beta$, then we have
    \begin{align*}
        2n - \frac{B}{2} + \beta \leq 2n - \frac{1}{5}\rho(G) - \frac{3}{5}\beta + \beta = 2n - \frac{1}{5}(\rho(G) - 2\beta).
    \end{align*}
\end{itemize}
Therefore, the cost of graphic TSP is at most
\begin{align*}
    \min\left\{2n - \frac{1}{3}(\rho(G) - B), 2n - \frac{B}{2} + \beta\right\} \leq 2n - \frac{1}{5}(\rho(G) - 2\beta). \qquad \qedhere
\end{align*}
\end{myproof}

Now we are ready to introduce the first algorithm for estimating the cost of graphic TSP, which uses our maximum path cover subroutine instead of the matching subroutine as a black-box. In \Cref{alg: graphic-tsp}, we first estimate the size of the maximum path cover and the number of bad vertices of the graph and report the graphic TSP cost in terms of the two estimations. The subroutine used for counting number of bad vertices is similar to the one in section 2.2 of \cite{chen2020}.

\begin{lemma}[\cite{chen2020}]\label{lem: bad-vertices-estimator}
Let $\beta$ be the number of bad vertices. For any constant $\epsilon > 0$, there exists an algorithm that w.h.p estimates the number of bad vertices $\tilde{\beta}$, such that $\beta \leq \tilde{\beta} \leq \beta + \epsilon n $, in $\tilde{O}(n)$ running time.
\end{lemma}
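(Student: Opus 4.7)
The plan is to apply the classical ``sample and test'' framework: draw $r=\Theta(\log n/\epsilon^2)$ uniformly random vertices with replacement, compute a $\{0,1\}$-valued tester $\hat b(v)$ on each sample, and output $\tilde\beta=(n/r)\sum_{v\in S}\hat b(v)$. By Hoeffding's inequality (\Cref{prop:hoeffding}) the empirical mean concentrates within $\epsilon/2$ of $\E_v[\hat b(v)]$ with probability $1-1/\mathrm{poly}(n)$, so it suffices to design a tester that is one-sided (always reports ``bad'' on truly bad vertices, so $\tilde\beta\ge\beta$) and whose extra ``false bad'' reports amount to at most $\epsilon n/2$ vertices of $G$ in total (so $\tilde\beta\le\beta+\epsilon n$).

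For each sampled vertex $v$, I would first query all $n$ entries of $v$'s adjacency-matrix row, which costs $O(n)$ per sample and $\tilde O(n)$ across the whole sample, and reveals both $\deg(v)$ and the identities of its neighbors. Set $\hat b(v)=1$ if $\deg(v)=1$ and $\hat b(v)=0$ if $\deg(v)\ge 3$. The remaining case is $\deg(v)=2$ with neighbors $u,w$, in which $v$ is bad precisely when $u$ and $w$ are disconnected in $G-v$. I would attempt to certify connectivity with a bidirectional BFS from $u$ and $w$ inside $G-v$ that alternately expands whichever frontier is smaller; if the two frontiers collide, $v$ is declared not bad, otherwise $v$ is declared bad. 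This guarantees the required one-sided property by construction.

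The main obstacle is capping the BFS work while keeping the false-positive count at most $\epsilon n/2$. A naive truncation at a fixed budget $B=\Theta(1/\epsilon)$ expansions fails, since a long cycle of degree-$2$ non-cut vertices would yield many false positives at once. My plan is to draw a per-sample exploration budget from a carefully chosen heavy-tailed distribution (with the truncation level matched so that the per-sample expected budget is $\tilde O(n/r)=\tilde O(1)$) and to charge each false positive $v$ to an $\Omega(B_v)$-length witness subgraph of the $u$-to-$w$ detour inside $G-v$; an edge-disjointness argument on these witness subgraphs, together with the fact that a degree-$2$ vertex belongs to essentially one such detour, would bound the expected total number of false positives by $\epsilon n /2$. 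Combined with the $\tilde O(n)$ cost of the adjacency-row reads and the Hoeffding concentration, this yields $\beta\le\tilde\beta\le\beta+\epsilon n$ w.h.p.\ in $\tilde O(n)$ total time. The concrete sub-routine I would instantiate is the one sketched in \cite{chen2020}, Section~2.2, and the lemma statement simply appeals to that construction as a black box.
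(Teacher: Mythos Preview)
The paper does not supply its own proof of this lemma; it is quoted from \cite{chen2020} as a black box (see the sentence immediately preceding the lemma), and your final sentence correctly acknowledges this. So there is no in-paper argument to compare against.

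Your sketched argument, however, has a real gap. You attack the question ``is the degree-$2$ vertex $v$ with neighbors $u,w$ a cut vertex?'' by truncated bidirectional BFS in $G-v$, with a heavy-tailed random exploration budget and an edge-disjoint charging scheme for the resulting false positives. This fails already on a single $n$-cycle: every vertex has degree $2$, none is a cut vertex, yet the $u$--$w$ detour in $G-v$ has length $n-2$ for \emph{every} $v$, and all these detours use the same edge set, so no disjoint charging is possible. Any budget distribution with expectation $\tilde O(1)$ will, with probability $1-\tilde O(1/n)$, fail to certify connectivity for each sampled $v$; the empirical bad-fraction is then close to $1$, giving additive error of order $n$ rather than $\epsilon n$.

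The feature of the graphic-TSP model you are not exploiting is that the oracle returns \emph{shortest-path distances} $d(\cdot,\cdot)$, not merely adjacency. With distances, whether a degree-$2$ vertex $v$ with neighbors $u,w$ is a cut vertex can be decided \emph{exactly} in $O(n)$ time: for instance, $v$ is a cut vertex iff $|d(x,u)-d(x,w)|=2$ for every $x\neq v$. (Equivalently, for a degree-$2$ vertex, being a cut vertex is the same as either incident edge being a bridge, so one can also reuse the idea behind \Cref{lem:test-bridge}.) With an exact per-sample tester there are no false positives at all, and your sampling-plus-Hoeffding wrapper then yields the lemma directly. This distance-query test is the mechanism behind the \cite{chen2020} subroutine; the BFS machinery, heavy-tailed budgets, and charging argument are all unnecessary.
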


\begin{algorithm}
\caption{First algorithm for graphic TSP.}
\label{alg: graphic-tsp}
    Construct $\hat{G} = (V_{\hat{G}}, E_{\hat{G}})$ implicitly as desribed in \Cref{sec: disjoint-paths-estimate}.
    
    Let $\tilde{\rho}$ be the output of \Cref{alg: disjoint-paths} on $\hat{G}$.
    
    Let $\tilde{\beta}$ be the estimate of number of bad vertices.
    
    $\tilde{\tau} = 2n - \frac{1}{5}(\tilde{\rho} - 2\tilde{\beta})$
    
    \Return $\tilde{\tau}$
\end{algorithm}

\begin{lemma}\label{lem: graphic-tsp-approx1}
Let $\tilde{\tau}$ be the output of \Cref{alg: graphic-tsp} and $\tspsize{V}$ be the cost of graphic TSP of graph $G = (V, E)$. With high probability,
$$
\tspsize{V} \leq \tilde{\tau} \leq (\frac{19}{10} + \frac{1}{K})\cdot \tspsize{V},
$$
where $K$ is the parameter which is defined in \algtwo{}.
\end{lemma}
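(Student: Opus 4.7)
The plan is to prove the two inequalities of the lemma by substituting the guarantees from \Cref{lem: error-bound} (for $\tilde{\rho}$) and \Cref{lem: bad-vertices-estimator} (for $\tilde{\beta}$) into the closed form $\tilde{\tau} = 2n - (\tilde{\rho} - 2\tilde{\beta})/5$, and then invoking the three structural inequalities already established: the graphic-TSP upper bound $\tau(V) \le 2n - (\rho(G) - 2\beta)/5$ from \Cref{lem:bound-path-cover-bad-vertices-tsp}, together with the two lower bounds $\tau(V) \ge 2n - \rho(G)$ from \Cref{lem: lower-bound-path} and $\tau(V) \ge n + \beta - 2$ from \Cref{lem: lower-bound-bad-vertex}. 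Running the bad-vertex estimator with accuracy parameter $\epsilon' = 1/(4K)$ will make its additive error match the $1/K$ budget in the final statement.

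The lower bound $\tau(V) \le \tilde{\tau}$ is essentially immediate. By \Cref{lem: error-bound} we have $\tilde{\rho} \le \rho(G)$, and by \Cref{lem: bad-vertices-estimator} (with the above $\epsilon'$) we have $\tilde{\beta} \ge \beta$, both with high probability. Hence $\tilde{\tau} = 2n - \tilde{\rho}/5 + 2\tilde{\beta}/5 \ge 2n - \rho(G)/5 + 2\beta/5$, which is at least $\tau(V)$ by \Cref{lem:bound-path-cover-bad-vertices-tsp}.

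For the upper bound, I would plug the lower bound $\tilde{\rho} \ge (1/2 - 1/K)\rho(G) - n/K$ from \Cref{lem: error-bound} and $\tilde{\beta} \le \beta + n/(4K)$ into the formula for $\tilde{\tau}$ and isolate the error terms. Using $\rho(G) \le n$, the aggregated slack is at most $n/(2K)$, leaving a main term of $2n - \rho(G)/10 + 2\beta/5$. The key step is then to bound this main term by $(19/10)\,\tau(V)$. Substituting $\rho(G) \ge 2n - \tau(V)$ (from \Cref{lem: lower-bound-path}) yields $9n/5 + \tau(V)/10 + 2\beta/5$; substituting $\beta \le \tau(V) - n + 2$ (from \Cref{lem: lower-bound-bad-vertex}) yields $7n/5 + \tau(V)/2 + 4/5$; and finally the trivial bound $n \le \tau(V)$ gives $(14/10 + 5/10)\,\tau(V) + O(1) = (19/10)\,\tau(V) + O(1)$. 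Combining with the earlier $n/(2K)$ slack and absorbing the $O(1)$ additive term into $\tau(V)/K$ (again using $\tau(V) \ge n$) delivers $\tilde{\tau} \le (19/10 + 1/K)\,\tau(V)$.

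I do not anticipate any real obstacle here; the argument is a routine chaining of the already-proved lemmas. The only point worth flagging is that the constant $19/10$ is not obtained from either lower bound on $\tau(V)$ alone: it emerges only by using \Cref{lem: lower-bound-path} to convert the $-\rho(G)/10$ term and then \Cref{lem: lower-bound-bad-vertex} to convert the $+2\beta/5$ term, so the order of the two substitutions matters for the bookkeeping.
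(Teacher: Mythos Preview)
Your proof is correct and follows the same overall structure as the paper's: both combine \Cref{lem: error-bound}, \Cref{lem: bad-vertices-estimator}, \Cref{lem:bound-path-cover-bad-vertices-tsp}, \Cref{lem: lower-bound-path}, and \Cref{lem: lower-bound-bad-vertex} in the natural way.

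The one notable difference is in how the constant $19/10$ is extracted. The paper, after reducing to the inequality $2n - \tfrac{1}{5}(\tfrac{\rho(G)}{2} - 2\beta) \le \alpha \cdot \max\{2n-\rho(G),\, n+\beta\}$, parametrizes $\rho(G)=xn$, $\beta=yn$ and solves the resulting two-variable maximization (appealing to WolframAlpha) to get $\alpha=19/10$. You instead obtain $19/10$ by an explicit chain of three substitutions: first $\rho(G)\ge 2n-\tau(V)$, then $\beta\le \tau(V)-n+2$, then $n\le \tau(V)$. Your route is more self-contained and makes transparent exactly where each lower bound is consumed; the paper's optimization framing, on the other hand, is more reusable (the same template is repeated verbatim in the later $11/6$ and $5/3$ lemmas). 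Your choice of $\epsilon'=1/(4K)$ rather than the paper's $1/K$ for the bad-vertex estimator is a harmless tightening that buys slightly cleaner arithmetic.
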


\begin{myproof}
Let $\beta$ be the number of bad vertices. By \Cref{lem: error-bound} and \Cref{lem: bad-vertices-estimator} $\tilde{\rho} \leq \rho(G)$ and $\beta \leq \tilde{\beta}$. Hence, we have $\tspsize{V} \leq \tilde{\tau}$ by \Cref{lem:bound-path-cover-bad-vertices-tsp}.

By \Cref{lem: error-bound} and \Cref{lem: bad-vertices-estimator}, we can estimate $\rho(G)$ and $\beta$ such that $\left(\frac{1}{2} - \frac{1}{K} \right)\rho(G) - \frac{n}{K} \leq \tilde{\rho}$ and $\tilde{\beta} \leq \beta + \frac{n}{K}$, if we choose $\epsilon = \frac{1}{K}$. Thus, we have
\begin{align*}
    \tilde{\tau} & \leq 2n - \frac{1}{5}\left( (\frac{1}{2} - \frac{1}{K})\cdot \rho(G) - \frac{n}{K} - 2(\beta + \frac{n}{K}) \right) \\
    & \leq 2n - \frac{1}{5}(\frac{\rho(G)}{2} - 2\beta) + \frac{4n}{5K}.& (\text{Since } \rho(G) \leq n). \\
\end{align*}
On the other hand, assume that the approximation ratio that the algorithm obtains is $\alpha + 1/K$ for some $\alpha \leq 2$. Thus, we get 
\begin{align*}
    (\alpha+ \frac{1}{K})\cdot \tau(V) & \geq \alpha\cdot\tau(V) + \frac{n}{K} & (\text{Since } \tau(V) \geq n) \\ 
    & \geq \alpha\cdot \max\{2n - \rho(G), n + \beta - 2\} + \frac{n}{K} & (\text{By  \Cref{lem: lower-bound-path}  and  \Cref{lem: lower-bound-bad-vertex}})\\
    & \geq \alpha\cdot \max\{2n - \rho(G), n + \beta\} + \frac{n}{K} - 4.
\end{align*}
So in order to show that $\tilde{\tau} \leq (\alpha + \frac{1}{K}) \cdot\tspsize{V}$, it is sufficient to show that 
\begin{align*}
2n - \frac{1}{5}(\frac{\rho(G)}{2} - 2\beta) + \frac{4n}{5K} \leq \alpha\cdot \max\{2n - \rho(G), n + \beta\} + \frac{n}{K} - 4.
\end{align*}
If $n$ is large enough, then we have $\frac{4n}{5K} \leq \frac{n}{K} - 4$, which implies that we need to prove
\begin{align*}
    2n - \frac{1}{5}(\frac{\rho(G)}{2} - 2\beta) \leq \alpha\cdot \max\{2n - \rho(G), n + \beta\}.
\end{align*}
Now, let $\rho(G) = xn$ and $\beta = yn$ for $0 \leq x \leq 1$ and $ 0\leq y \leq 1$. To obtain $\alpha$, it suffices to solve the following program
\begin{align*}
\begin{array}{ll@{}ll}
\text{maximize}  & \alpha \\
\vspace{5px}\text{subject to}& \frac{2 - \frac{1}{5}(\frac{x}{2} - 2y)}{\max\{2-x, 1+y\}} \leq \alpha,\\
                & 0\leq x \leq 1,\\
                 & 0\leq y \leq 1.
\end{array}
\end{align*}
This is a constant size program that can be easily solved; the solution is $19/10$.\footnote{See e.g. this WolframAlpha \href{https://www.wolframalpha.com/input?i=Maximize\%5B\%282-1\%2F5*\%28x\%2F2+-+2y\%29\%29+\%2F+Max\%5B2-x\%2C+1\%2By\%5D\%2C+\%7B0\%3C\%3D+x+\%3C\%3D1\%2C+0\%3C\%3Dy\%3C\%3D1\%7D\%2C+\%7Bx\%2Cy\%7D\%5D}{link}.} This completes the proof.
\end{myproof}

\begin{theorem}\label{thm: graphic-tsp}
Let $\tau(V)$ be the cost of graphic TSP of graph $G = (V, E)$. For any $\epsilon > 0$, there exists an algorithm that estimate the cost of graphic TSP, $\tilde{\tau}$, such that
$$
\tau(V) \leq \tilde{\tau} \leq (\frac{19}{10} + \epsilon) \cdot \tau(V),
$$
w.h.p in $\tilde{O}(n)$ running time.
\end{theorem}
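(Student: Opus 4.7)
The plan is to derive \Cref{thm: graphic-tsp} as a direct corollary of the preceding lemmas, by instantiating \Cref{alg: graphic-tsp} with a carefully chosen value of the internal parameter $K$ of \algtwo{} and the additive error parameter of the bad-vertex estimator.

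First I would set $K = 1/\epsilon$ (absorbing universal constants if needed) and run $\tilde{\tau} \gets$ \Cref{alg: graphic-tsp} with this choice, where the subroutine \Cref{alg: disjoint-paths} is instantiated with parameter $K$ and the bad-vertex estimator of \Cref{lem: bad-vertices-estimator} is invoked with error tolerance $\epsilon' = 1/K$. By \Cref{lem: graphic-tsp-approx1}, this already gives, with high probability,
\[
\tspsize{V} \;\leq\; \tilde{\tau} \;\leq\; \Bigl(\tfrac{19}{10} + \tfrac{1}{K}\Bigr)\cdot\tspsize{V} \;=\; \Bigl(\tfrac{19}{10} + \epsilon\Bigr)\cdot\tspsize{V},
\]
which is exactly the guarantee claimed in the theorem. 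So the approximation part is immediate and requires no additional argument beyond plugging in $K = 1/\epsilon$.

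For the running time, I would observe that \Cref{alg: graphic-tsp} performs only two non-trivial computations: (i) a single call to \Cref{alg: disjoint-paths} on the implicit graph $\hat{G}$, which by \Cref{thm: disjoint-paths-algorithm} runs in $\widetilde{O}(n \cdot \poly(1/\epsilon))$ time with high probability, and (ii) one call to the bad-vertex estimator, which by \Cref{lem: bad-vertices-estimator} also runs in $\widetilde{O}(n)$ time with high probability for any constant $\epsilon > 0$. Adding $2n - \tfrac{1}{5}(\tilde{\rho} - 2\tilde{\beta})$ is an $O(1)$ operation, so the total running time is $\widetilde{O}(n)$ as claimed. A union bound over the two high-probability events, each of which can be driven to $1 - 1/\poly(n)$ by standard repetition or adjusting constants in the Chernoff applications inside \Cref{alg: disjoint-paths}, gives overall high-probability correctness.

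Since both the approximation ratio and the running time follow immediately from invoked black boxes once $K$ is fixed, there is no genuine obstacle in this proof; it is essentially a one-line combination. The only minor care needed is verifying that the hidden $\poly(1/\epsilon)$ in the running time of \Cref{alg: disjoint-paths} remains polylogarithmic in $n$ for fixed constant $\epsilon$, which is already captured by the $\widetilde{O}$ notation in \Cref{thm: disjoint-paths-algorithm}.
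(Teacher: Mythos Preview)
Your proposal is correct and follows essentially the same approach as the paper: set $K = 1/\epsilon$, invoke \Cref{lem: graphic-tsp-approx1} for the approximation guarantee, and cite \Cref{thm: disjoint-paths-algorithm} together with \Cref{lem: bad-vertices-estimator} for the $\widetilde{O}(n)$ running time. The paper's proof is even terser than yours, omitting the union-bound remark, but the logical content is identical.
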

\begin{myproof}
Let $\tilde{\tau}$ be the output of \Cref{alg: graphic-tsp}. If we choose $K = \frac{1}{\epsilon}$, then by \Cref{lem: graphic-tsp-approx1}, we have 
$$
\tau(V) \leq \tilde{\tau} \leq (\frac{19}{10} + \epsilon) \cdot \tau(V).
$$
Also, by \Cref{thm: disjoint-paths-algorithm} and \Cref{lem: bad-vertices-estimator}, estimating $\tilde{\rho}$ and $\tilde{\beta}$ can be done in $\tilde{O}(n)$ time.
\end{myproof}

\section{Further Improvement for Graphic TSP}\label{sec:graphic-TSP-major}
In this section, we design an algorithm to estimate the number of bridges in given graph $G$. Equipped with this tool, we are able to estimate the number of non-bridge edges in the path cover which helps to improve the approximation ratio. Before describing the techniques for estimating the number of bridges, we prove the following lemma that provides a lower bound on the cost of graphic TSP based on the number of bridges in the graph.

\begin{claim}\label{lem:brdige-lower-bound}
If the number of bridges in the graph $G$ is at least $B$, then the cost of the graphic TSP is at least $n + B$.
\end{claim}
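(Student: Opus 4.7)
The plan is to express the optimal graphic TSP tour as a closed walk $W$ in $G$ of length $\tau(V)$ that visits every vertex, equivalently as an Eulerian spanning multi-subgraph $H$ with $|H|=\tau(V)$. Two elementary lower bounds on $|H|$ are immediate. First, since $H$ is spanning and Eulerian, $\deg_H(v)\ge 2$ for every vertex $v$, and summing gives $2|H|=\sum_v\deg_H(v)\ge 2n$, i.e.\ $|H|\ge n$. Second, for every bridge $e=(u,v)$ of $G$, deleting $e$ disconnects $G$ into two sides each containing vertices that $W$ must visit; since $W$ is closed, it must cross $e$ an even number of times, and this number is at least $2$. Hence the bridges contribute $|H_{\mathrm{bridge}}|\ge 2B$ to $|H|$.

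The substantive part of the proof is combining these into $|H|\ge n+B$, rather than the weaker $|H|\ge\max(n,2B)$. I would decompose $H$ via the bridge-block tree of $G$: let $C_1,\dots,C_{B+1}$ be the $2$-edge-connected components with $n_i=|V(C_i)|$, let $|H_i|$ denote the multiplicity-weighted count of non-bridge edges of $H$ lying inside $C_i$, and let $b_i$ be the total multiplicity in $H$ of bridges incident to $C_i$, so that $\sum_i b_i = 2\,|H_{\mathrm{bridge}}|$. The tour enters $C_i$ exactly $b_i/2$ times, producing $b_i/2$ internal walks whose vertex union covers $V(C_i)$, and a standard walk-covering bound gives $|H_i|\ge n_i-b_i/2$. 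Summing yields $|H_{\mathrm{non}}|\ge n-|H_{\mathrm{bridge}}|$. To push the total up by $B$, one then exploits the parity constraints that the Eulerian condition imposes on the non-bridge degrees at cut vertices, forcing the internal walks inside certain components to be strictly longer than $n_i-b_i/2$, with the accumulated slack summing to $B$.

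The hard part will be extracting this last $B$ units of slack rigorously. Naively adding the bounds $|H|\ge n$ and $|H_{\mathrm{bridge}}|\ge 2B$ collapses to only $|H|\ge n$, because the bridge multiplicities already contribute to the degree counts at their endpoints, so the two bounds overlap. The key insight needed is that each of the $B$ bridges forces at least one ``unmatched'' parity at a cut vertex of the adjoining $2$-edge-connected component, which in turn forces one extra internal edge inside that component beyond what is strictly required to cover its $n_i$ vertices. Carrying out this charging scheme on the bridge-block tree, so that each bridge is charged to one distinct unit of interior slack in a neighboring component, is the crux of the argument.
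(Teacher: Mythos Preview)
Your instinct that the two naive bounds $|H|\ge n$ and $|H_{\mathrm{bridge}}|\ge 2B$ overlap and do not directly add is correct. The paper's own proof is in fact the terse two-liner you are worried about: it just says the tour corresponds to a closed spanning walk, each bridge is crossed at least twice, ``therefore the cost is at least $n+B$,'' with no further justification for the combination step.

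However, the elaborate parity/charging scheme you outline is unnecessary. A one-line argument does the job: in the Eulerian spanning multigraph $H$, every bridge of $G$ has multiplicity at least $2$; delete exactly one copy of each bridge to obtain $H'$. Since each bridge still has multiplicity at least $1$, $H'$ remains connected and spanning, so $|E(H')|\ge n-1$. Hence $|E(H)|\ge (n-1)+B^\star\ge n+B-1$, where $B^\star\ge B$ is the true number of bridges. This replaces your whole bridge-block decomposition and the parity bookkeeping at cut vertices.

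Note that this yields $n+B-1$, not $n+B$, and a path on $n$ vertices shows the extra unit cannot be recovered: there $B=n-1$ while $\tau=2(n-1)=n+B-1$. So the claim as stated carries an off-by-one (harmless for the $(1+\epsilon)$-scale approximation analysis downstream, where it is only used as $\tau(V)\ge n+B$ up to additive $O(1)$). Your plan to squeeze out ``$B$ units of slack'' via parity would therefore not have closed, because that last unit simply is not there.
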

\begin{myproof}
Since the metric in the graphic TSP is corresponding to the shortest path distances in graph $G$, then a TSP tour is corresponding to a closed walk that contains all vertices. Thus, each bridge should be crossed at least two times in this walk in order for the walk to be closed and cover all vertices. Therefore, the cost of graphic TSP is at least $n + B$.
\end{myproof}

In the following series of lemmas, first, we prove that there are a few bridges that both of their endpoints have a high degree and then we show an efficient way to estimate the number of bridges that have at least one endpoint with a low degree. Combining the above arguments is the main idea to estimate the number of bridges.

\begin{lemma}\label{lem:few-high-deg-bridges}
For any integer $c \geq 2$, there exists at most $\frac{2n}{c}$ bridges that both of their endpoints have a degree larger than $c$. 
\end{lemma}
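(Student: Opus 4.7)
The plan is a charging argument on the bridge-block tree $T$ of $G$. Since every bridge of $G$ is an edge of $T$, the set $B^*$ of heavy bridges (those with both endpoints of $G$-degree $> c$) can be viewed as a subset of the edges of $T$. Root $T$ at an arbitrary node. For each $e \in B^*$, let $v_e$ denote the endpoint of $e$ lying in the child block and $u_e$ the endpoint in the parent block. Because $\deg_G(v_e) \geq c+1$ and the only $G$-edge from $v_e$ that leaves the subtree of $T$ rooted at $v_e$'s block is $e$ itself, at least $c$ distinct $G$-neighbors of $v_e$ lie inside that subtree; pick any $c$ of them to form $N_e$ and define $A_e := \{v_e\} \cup N_e$, so $|A_e| = c+1$.

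The crux is to show that each vertex $w \in V$ lies in $A_e$ for at most two heavy bridges $e$. Let $Z$ be the block of $T$ containing $w$, and let $P(Z)$ and $P(\mathrm{par}(Z))$ denote the parent and grandparent bridges of $Z$ (when they exist). Suppose $w \in A_e$ for some $e \in B^*$. Either $w = v_e$, which forces $v_e \in Z$ and hence $Z$ is the child block of $e$, giving $e = P(Z)$; or $w$ is a $G$-neighbor of $v_e$ on the child side of $e$, and the edge $(w,v_e)$ is either (i) a non-bridge, in which case $v_e \in Z$ and again $e = P(Z)$, or (ii) a bridge, which must then be the unique bridge between the adjacent blocks $Z$ and $Y := v_e$'s block; here the fact that $w$ lies in the subtree of $T$ below $v_e$'s block forces $Y = \mathrm{par}(Z)$, so $e = P(\mathrm{par}(Z))$. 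In every case $e \in \{P(Z),\, P(\mathrm{par}(Z))\}$, so at most two heavy bridges have $w$ in their set $A_e$.

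Double counting then closes the proof:
\[
(c+1)\,|B^*| \;\leq\; \sum_{e \in B^*} |A_e| \;=\; \sum_{w \in V} \big|\{e \in B^* : w \in A_e\}\big| \;\leq\; 2n,
\]
yielding $|B^*| \leq \tfrac{2n}{c+1} \leq \tfrac{2n}{c}$. The main obstacle will be the case analysis in the second step, in particular verifying that in subcase (ii) the block of $v_e$ must coincide with $\mathrm{par}(Z)$: this relies on the bridge-block tree having at most one bridge between any two blocks, combined with the subtree-containment constraint imposed by $w$ lying on the child side of $e$. Once this is settled the counting is immediate.
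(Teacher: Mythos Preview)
Your proof is correct and takes a genuinely different route from the paper's. The paper first contracts the bridge-block tree across every bridge with a low-degree endpoint, producing a tree $T_{\mathcal{B}}$ whose edge set is exactly the set $\mathcal{B}$ of heavy bridges; it then peels $T_{\mathcal{B}}$ leaf by leaf, and for each removed leaf uses the fact that the $G$-endpoint of its one remaining heavy bridge has $G$-degree $> c$ to deduce $x_v + y_v \geq c$ (where $x_v$ is the number of $G$-vertices contracted into that node and $y_v$ the number of already-removed incident tree edges); summing gives $n \geq (|\mathcal{B}|+1)c - |\mathcal{B}|$ and hence $|\mathcal{B}| \leq (n-c)/(c-1)$. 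You instead stay with the original (uncontracted) bridge-block tree, root it, and charge each heavy bridge to $v_e$ together with $c$ of its neighbors on the child side; your case analysis is sound (the point you flag in subcase~(ii) does go through: since $w$ lies in the subtree below $v_e$'s block and that block is adjacent to $Z$ via the bridge $(w,v_e)$, it must be $\mathrm{par}(Z)$ rather than a child of $Z$), and the double count gives $|\mathcal{B}| \leq 2n/(c+1)$. Your approach is more direct and avoids the auxiliary contraction-and-peeling; the paper's method yields a somewhat sharper constant for large $c$ (about $n/c$ versus your $2n/c$), but both comfortably meet the stated bound $2n/c$.
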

\begin{myproof}
Let $\mathcal{B}$ be the set of bridges that both of their endpoints have a degree larger than $c$. We construct a tree, $T_\mathcal{B}$, with edge set equal to $\mathcal{B}$ such that each vertex of $T_\mathcal{B}$ corresponds to a component of vertices that are compressed to a single vertex. We construct $T_\mathcal{B}$ iteratively. In the beginning, we consider the bridge-block tree of the original graph. In each step, if there exists a bridge $e = (u,v)$ (note that $u$ and $v$ are vertices of the tree and corresponding to a set of vertices of the original graph) such that at least one of its endpoints has a degree less than or equal to $c$, we merge $u$ with $v$ and add all edges of $u$ to $v$. We continue this process until there is no bridge with an endpoint of degree less than or equal to $c$.

Now, we prove that $|\mathcal{B}| \leq \frac{2n}{c}$. Let $x_v$ denote the number of vertices in the original graph that are compressed to vertex $v \in T_\mathcal{B}$. We remove vertices of $T_\mathcal{B}$ one by one until there is no vertex in the tree. At each step, we remove a leaf $v \in T_\mathcal{B}$ and at the end when only one vertex is remaining, we remove that vertex. Let $y_v$ be the number of incident edges to $v$ in $T_\mathcal{B}$ that are removed before removing $v$. At the time that we are removing leaf $v$, we have $x_v + y_v + 1\geq c + 1$, since the endpoint of the leaf that is the component of $v$ has at most $x_v$ incident edges in the same component in the original graph, $y_v$ incident edges to the other components that are removed before, and there is only one remaining incident edge to other components (the other endpoint of the leaf). Thus, 
\begin{align}\label{eq:x-bound}
    \sum_{v \in T_\mathcal{B}} x_v \geq \sum_{v \in T_\mathcal{B}} (c - y_v) = (|\mathcal{B}| + 1)c - \sum_{v \in T_\mathcal{B}}y_v.
\end{align}

Since vertices of each component are disjoint, we have $\sum_{v \in T_\mathcal{B}} x_v = n$. Moreover, we have $\sum_{v \in T_\mathcal{B}} y_v = |\mathcal{B}|$ since each edge of $\mathcal{B}$ counted when one of its endpoints is deleted from the tree. Combining above bounds and inequality (\ref{eq:x-bound}), we have
\begin{align*}
    n = \sum_{v \in T_\mathcal{B}} x_v \geq  (|\mathcal{B}| + 1)c - |\mathcal{B}|
\end{align*}
Therefore,
\begin{align*}
    |\mathcal{B}| \leq \frac{n - c}{c - 1} \leq \frac{2n}{c},
\end{align*}
where the last inequality holds for sufficiently large $n$.
\end{myproof}

\begin{lemma}\label{lem:test-bridge}
Let $c \geq 2$ be a constant and $u$ is a vertex such that $\deg(u) \leq c$. Then we can test if each of incident edges of $u$ is a bridge in  $O(n)$ total running time.
\end{lemma}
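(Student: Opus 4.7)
The plan is to reduce bridge-testing for the edges incident to $u$ to a multi-source reachability computation in $G \setminus \{u\}$. Let $v_1, \ldots, v_k$ with $k \leq c$ denote the neighbors of $u$; these can be enumerated using $O(n)$ adjacency-oracle queries (or $O(c)$ in the adjacency-list model). The key structural observation is that $(u, v_i)$ is a bridge iff $v_i$ lies in a different connected component of $G \setminus \{u\}$ from every other $v_j$. Indeed, removing the single edge $(u, v_i)$ still leaves $u$ connected to all the other $v_j$'s; thus $v_i$ can reconnect to $u$ without using $(u, v_i)$ iff there is a path from $v_i$ to some $v_j$ ($j \neq i$) in $G \setminus \{u\}$. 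So testing the bridgeness of all $k$ incident edges reduces to computing the partition of $\{v_1, \ldots, v_k\}$ induced by the connected components of $G \setminus \{u\}$.

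To compute this partition, I would run a multi-source BFS in $G \setminus \{u\}$, seeded simultaneously at $v_1, \ldots, v_k$, while maintaining a union-find data structure on $\{v_1, \ldots, v_k\}$. Each newly discovered vertex is tagged with the source that reached it first; whenever the BFS examines an edge whose endpoints carry two distinct source tags, the corresponding source indices are merged in the union-find, certifying that those two $v_i$'s lie in a common component. After the BFS terminates, $(u, v_i)$ is declared a bridge iff $v_i$ remains a singleton in its union-find class. Correctness is immediate from the structural observation above, since the union-find exactly records which subsets of $\{v_1, \ldots, v_k\}$ lie in a common component of $G \setminus \{u\}$.

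The main obstacle is showing that the BFS can be carried out in $O(n)$ total work. Using adjacency-list access (available either directly, or via the adjacency-matrix-to-list reduction of \Cref{sec: disjoint-paths-estimate}), each neighbor query costs $O(1)$. Vertex-visit cost is at most $n - 1$ because the components containing the $v_i$'s are pairwise disjoint subsets of $V \setminus \{u\}$. Bounding the edge examinations to $O(n)$ is the subtle part: a naive analysis gives $O(V + E)$ inside the explored region, which is potentially super-linear in $n$. My plan for this is to combine (i) early termination of the BFS once the partition of $\{v_1, \ldots, v_k\}$ is fully certified---noting that at most $k - 1 = O(1)$ merges can ever happen, so "non-productive" cross-tag edges that could possibly trigger merges are only $O(1)$-many---with (ii) an amortization that charges the remaining edge scans to vertex discoveries, yielding $O(c \cdot n) = O(n)$ overall when $c$ is a constant. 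Carefully setting up this charging against vertex discoveries (so that no single high-degree vertex's adjacency list has to be scanned in full) is where the main technical delicacy lies.
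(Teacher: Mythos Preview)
Your proposal has a genuine gap rooted in a misunderstanding of the access model. In the graphic TSP setting we only have a \emph{distance oracle}: for any pair $(x,y)$ we may query the shortest-path distance $d(x,y)$ in $G$, which is equivalent to adjacency-\emph{matrix} access. We do \emph{not} have adjacency-list access to $G$, and the construction of $\hat G$ in \Cref{sec: disjoint-paths-estimate} does not provide it: in $\hat G$ the $i$-th neighbor of $v\in V_1$ is the $i$-th vertex of $V_1$ when $(v,i)\in E$ and the $i$-th vertex of $V_2$ otherwise, so enumerating the \emph{true} $G$-neighbors of $v$ still requires scanning all $n$ indices. A BFS in $G\setminus\{u\}$ therefore costs $\Theta(n)$ per explored vertex, i.e.\ $\Theta(n^2)$ in the worst case.

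Even granting adjacency-list access, your early-termination and charging scheme cannot yield $O(n)$. Take $k=2$ and let $G\setminus\{u\}$ be two vertex-disjoint cliques on $n/2$ vertices with $v_1$ in one and $v_2$ in the other. To \emph{certify} that no merge ever occurs (hence both incident edges are bridges), any BFS must exhaust at least one of these components, examining $\Theta(n^2)$ edges; there is no way to charge these scans to the $O(n)$ vertex discoveries, and no high-degree vertex's list can be left partially scanned without risking a missed cross-edge.

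The paper sidesteps BFS entirely by exploiting the distance oracle. It queries $d(w,v_i)$ for every $w\neq u$ and every neighbor $v_i$ of $u$, which is $O(cn)=O(n)$ queries since $c$ is constant, and then characterizes bridgeness of $(u,v_j)$ purely via these distances: $(u,v_j)$ is a bridge iff every $w$ in $v_j$'s ``cluster'' satisfies $d(w,v_i)-d(w,v_j)=2$ for all $i\neq j$, and symmetrically for $w$ outside $v_j$'s cluster. The key point you are missing is that shortest-path distances encode global connectivity information in a single query, whereas adjacency queries do not.
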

\begin{myproof}
We can query all neighbors of $u$ in $O(n)$. Assume that $\{v_1, v_2, \ldots, v_r\}$ are neighbors of $u$ for $r \leq c$. Now we divide the vertices of the graph except $u$ into $r$ sets $V_1, V_2, \ldots, V_r$. For each vertex $w \neq u$, we query the distance of $w$ to all $\{v_1, v_2, \ldots, v_r\}$. Let $v_i$ be the closest one to $w$ (if there is a tie, choose the one with the lowest index). Then we put $w$ in $V_i$. Note that since $c$ is a constant and $r \leq c$, this step can be done in $O(n)$.

Now we claim that $(u, v_j)$ is a bridge iff the following conditions hold:
\begin{itemize}
    \item For each $w \in V_j$ and $i \neq j$, $d(w, v_i) - d(w, v_j) = 2.$
    \item For each $w \in V_i$ such that $i \neq j$, $d(w, v_j) - d(w, v_i) = 2.$
\end{itemize}

Suppose that $e = (u, v_j)$ is a bridge. Since removing $e$ creates two connected components $C_u$ and $C_{v_j}$, all vertices in $C_{v_j}$ (resp. $C_u$) have a closer distance to $v_j$ (resp. $u$). In other words, all shortest paths between $w \in V_j$ to $v_i$ for $i \neq j$, cross edges $(v_j,u)$ and $(u, v_i)$. In addition, all the shortest paths between $w \in V_i$ and $v_j$ for $i \neq j$, cross edges $(v_j,u)$ and $(u, v_i)$. Therefore, both conditions hold.

Now suppose that $e = (u, v_j)$ is not a bridge. In this case, there must be an edge between $V_j$ and at least one of $V_i$ as otherwise, $V_j$ will be disconnected from the rest of the graph by removing $e$. Without loss of generality, assume that this edge is $(w, w')$ such that $w \in V_j$, $w' \in V_i$, and $i \neq j$. Also, w.l.o.g., we assume $d(w,v_j) \leq d(w', v_i)$. Since there is an edge between $w$ and $w'$, we have $d(w', v_j) \leq 1 + d(w, v_j) \leq 1 + d(w', v_i)$, which contradicts the conditions.

To test whether the conditions hold, we need to query the distance of each vertex to all $\{v_0, v_1, \ldots, v_r\}$ which can be done in $O(n)$ in total since $r$ is a constant.
\end{myproof}

\begin{lemma}\label{lem:bridge-estimator}
Let $B$ be the number of bridges in graph $G$. For any $\epsilon > 0$, there exists an algorithm that outputs an estimate $\tilde{B}$ in $\tilde{O}(n)$ such that $B \leq \tilde{B} \leq B + \epsilon n$.
\end{lemma}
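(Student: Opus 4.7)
The plan is to combine \Cref{lem:few-high-deg-bridges} and \Cref{lem:test-bridge} via a sampling scheme. Fix a constant $c = \lceil 4/\epsilon \rceil$ and call a vertex \emph{low-degree} if $\deg(v) \le c$ and \emph{high-degree} otherwise, with $L, H$ the corresponding vertex sets. Partition the bridges of $G$ as $B = B_{LL} \cup B_{LH} \cup B_{HH}$ according to whether both/one/none of the endpoints lie in $L$. By \Cref{lem:few-high-deg-bridges}, $|B_{HH}| \le 2n/c \le \epsilon n/2$, so it suffices to estimate $|B_{LL}| + |B_{LH}|$ up to additive error $\epsilon n / 2$.

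For each vertex $v$, define a local quantity $S_v$ as follows. If $v \in H$, set $S_v = 0$. If $v \in L$, use \Cref{lem:test-bridge} to determine which of its (at most $c$) incident edges are bridges; for every bridge $(v,w)$, query all distances from $w$ to determine $\deg(w)$, and add $1$ to $S_v$ if $w \in H$ or $1/2$ if $w \in L$. The weighting ensures that each bridge in $B_{LL}$ is counted with total weight $1/2 + 1/2 = 1$ (once from each endpoint) and each bridge in $B_{LH}$ with weight $1$ (from its low-degree endpoint), so
\[
    \sum_{v \in V} S_v \;=\; |B_{LL}| + |B_{LH}|.
\]
Computing $S_v$ costs $O(n)$ distance queries for the neighborhood of $v$, plus $O(n)$ for the bridge tests (by \Cref{lem:test-bridge}), plus $O(cn)$ for determining the degrees of the at-most-$c$ endpoints $w$, i.e.\ $O(n)$ time for constant $c$.

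The algorithm samples $k = \Theta(c^2 \log n / \epsilon^2) = \Theta(\log n / \epsilon^4)$ vertices $v_1,\dots,v_k$ uniformly at random, computes each $S_{v_i}$, and returns $\tilde{B} = \tfrac{n}{k} \sum_{i=1}^{k} S_{v_i} + \tfrac{3\epsilon n}{4}$. Each $S_{v_i}$ lies in $[0, c]$ and has expectation $(|B_{LL}|+|B_{LH}|)/n$, so Hoeffding's inequality (\Cref{prop:hoeffding}) gives $\big| \tfrac{n}{k} \sum_i S_{v_i} - (|B_{LL}|+|B_{LH}|) \big| \le \epsilon n / 4$ with high probability. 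Combined with $|B_{HH}| \in [0, \epsilon n/2]$ and the $3\epsilon n/4$ shift, both $\tilde{B} \ge B$ and $\tilde{B} \le B + \epsilon n$ hold w.h.p. The total running time is $O(kn) = \tilde{O}(n)$.

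The only delicate step is avoiding double counting in the definition of $S_v$; once the $1/2$ weighting is in place the analysis is standard. The hypothesis $\deg(v) \le c$ in \Cref{lem:test-bridge} is precisely what lets us afford the per-sample work of $O(n)$, while \Cref{lem:few-high-deg-bridges} is what guarantees that ignoring $B_{HH}$ only costs us $\epsilon n/2$ error.
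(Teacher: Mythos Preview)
Your proof is correct and follows essentially the same approach as the paper: use \Cref{lem:few-high-deg-bridges} to bound $|B_{HH}|$, sample random vertices, use \Cref{lem:test-bridge} on low-degree samples, apply Hoeffding, and add the $3\epsilon n/4$ shift. The only difference is how double counting is handled: you weight $LL$-bridges by $1/2$ from each endpoint, whereas the paper assigns each such bridge to the endpoint of smaller degree (breaking ties by index); both cost an extra $O(cn)$ per sample to determine $\deg(w)$ for the $\le c$ neighbors and yield the same estimator in expectation.
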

\begin{myproof}
By \Cref{lem:few-high-deg-bridges}, there are at most $\frac{\epsilon n}{2}$ bridges with both endpoints have degree larger than $\frac{4}{\epsilon}$. Let $\hat{B}$ be the number of bridges that at least one of their endpoint has degree of at most $\frac{4}{\epsilon}$. Thus,
\begin{align}\label{eq:b-range}
    B - \frac{\epsilon n}{2} \leq \hat{B} \leq B.
\end{align}

We sample $r = 256\cdot \epsilon^{-4} \cdot \log n$ vertices uniformly at random with replacement. Let $u$ be the $i$-th sampled vertex. If the degree of the vertex is larger than $\frac{4}{\epsilon}$, we let $X_i = 0$. Otherwise, let $\{v_1, v_2, \ldots, v_k\}$ be the neighbors of $u$ where $k \leq \frac{4}{\epsilon}$. By \Cref{lem:test-bridge}, we can test if each of the incident edges of $u$ is a bridge in $O(n)$ total running time. For each edge $(u, v_j)$ if $\deg(u) < \deg(v_j)$ or $\deg(u) = \deg(v_j)$ and index of $u$ is smaller than $v_j$, we test if the edge is a bridge or not. Let $X_i$ show the number of successful tests for incident edges of $u$. Note that in the above algorithm, each bridge with low-degree endpoints only counted once. 

Let $\bar{X} = (\sum_{i}^{r} X_i) / r$ and $n\bar{X} + \frac{3\epsilon n}{4}$ be our final estimate of the number of bridges. Hence, $\E[\bar{X}] = \hat{B}/n$. Since $\bar{X}$ is the average of $r$ independent random variables such that $0 \leq X_i \leq 4/\epsilon$, by Hoeffding's inequality (\Cref{prop:hoeffding}) we obtain
\begin{align*}
    \Pr\left[\left|\bar{X} - \E[\bar{X}]\right| \geq \frac{\epsilon}{4}\right] \leq 2\exp\left(-\frac{r\epsilon^4}{128}\right) = \frac{2}{n^2},
\end{align*}
where the last inequality follows from $r = 256\cdot \epsilon^{-4} \cdot \log n$. Therefore, with probability of $1 - \frac{2}{n^2}$,
\begin{align*}
    n\bar{X} & \in n\E[\bar{X}] \pm \frac{n\epsilon}{4}\\
    & = \hat{B} \pm \frac{n\epsilon}{4} & (\text{Since } \E[\bar{X}] = \hat{B}/n).
\end{align*}
Combining above range and inequality (\ref{eq:b-range}), we get
\begin{align*}
    B\leq n\bar{X} + \frac{3\epsilon n}{4} \leq B + \epsilon n.
\end{align*}
Since the number of sampled vertices is $r = 256\cdot \epsilon^{-4} \cdot \log n$, the total running time is $\tilde{O}(n)$.
\end{myproof}

Now we are ready to introduce the improved algorithm for graphic TSP.

\begin{algorithm}
\caption{Improved algorithm for graphic TSP.}
\label{alg: graphic-tsp2}
    Construct $\hat{G} = (V_{\hat{G}}, E_{\hat{G}})$ implicitly as described in \Cref{sec: disjoint-paths-estimate}.
    
    Let $\tilde{\rho}$ be the output of \Cref{alg: disjoint-paths} on $\hat{G}$.
    
    Let $\tilde{B}$ be the estimate of the number of bridges. 
    
    $\tilde{\tau} = 2n - \frac{1}{3}(\tilde{\rho} - \tilde{B})$
    
    \Return $\tilde{\tau}$
\end{algorithm}

\begin{lemma}\label{lem: graphic-tsp-approx2}
Let $\tilde{\tau}$ be the output of \Cref{alg: graphic-tsp2} and $\tspsize{V}$ be the cost of graphic TSP of graph $G = (V, E)$. With high probability,
$$
\tspsize{V} \leq \tilde{\tau} \leq (\frac{11}{6} + \frac{1}{K}) \cdot \tspsize{V},
$$
where $K$ is the parameter which is defined in \algtwo{}.
\end{lemma}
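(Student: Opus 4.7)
The plan is to follow the template of the proof of \Cref{lem: graphic-tsp-approx1}, replacing the bad-vertex count by the sharper bridge count enabled by \Cref{lem:bridge-estimator}. The first step is to establish the clean structural upper bound
\[
    \tspsize{V} \;\leq\; 2n - \tfrac{1}{3}\bigl(\rho(G) - B\bigr),
\]
where $B$ is the number of bridges in $G$. At most $B$ edges of any maximum path cover can be bridges, so the remaining $\rho(G)-B$ edges form a subcollection of vertex-disjoint paths consisting entirely of non-bridge edges. Taking alternating edges yields a non-bridge matching of size at least $\tfrac{1}{2}(\rho(G)-B)$, and plugging this into \Cref{lem: non-bridge-matching} gives the claimed inequality.

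The lower bound $\tspsize{V}\le \tilde{\tau}$ is then immediate. By \Cref{lem: error-bound} we have $\tilde{\rho}\le \rho(G)$ and by \Cref{lem:bridge-estimator} we have $\tilde{B}\ge B$, so $\tilde{\rho}-\tilde{B}\le \rho(G)-B$ and $\tilde{\tau} = 2n - \tfrac{1}{3}(\tilde{\rho}-\tilde{B}) \ge 2n - \tfrac{1}{3}(\rho(G)-B) \ge \tspsize{V}$.

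For the upper bound, set the accuracy in \Cref{lem:bridge-estimator} to $\epsilon = 1/K$ and substitute the other sides of both guarantees to get
\[
    \tilde{\tau} \;\le\; 2n - \tfrac{1}{6}\rho(G) + \tfrac{B}{3} + \tfrac{c\,n}{K}
\]
for an absolute constant $c$. Combined with the two TSP lower bounds $\tspsize{V} \ge 2n-\rho(G)$ from \Cref{lem: lower-bound-path} and $\tspsize{V}\ge n+B$ from \Cref{lem:brdige-lower-bound}, the approximation ratio (modulo the $O(n/K)$ slack, which is absorbed into the $1/K$ term since $\tspsize{V}=\Omega(n)$) reduces to the constant-size optimization
\[
    \alpha \;=\; \max_{x,y\in [0,1]}\; \frac{2 - x/6 + y/3}{\max(2-x,\; 1+y)}.
\]
A two-case analysis resolves this: when $y\le 1-x$, the inequality $2-x/6+y/3 \le (11/6)(2-x)$ clears to $2y \le 10(1-x)$, which holds since $y\le 1-x$; when $y>1-x$, the inequality for $(11/6)(1+y)$ clears to $1-x\le 9y$, which also holds. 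Thus $\alpha = 11/6$, attained in the limit at $(x,y)=(1,0)$.

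The main obstacle is really only the first step---finding the right lower bound on $\tspsize{V}$ to exploit together with the path-cover estimator. The choice to bound non-bridge edges via $\rho(G)-B$ (rather than through the bad-vertex proxy used in \Cref{lem:bound-path-cover-bad-vertices-tsp}) is what permits a $\tfrac{1}{3}$-coefficient reduction from $2n$; any looser separation of bridge and non-bridge contributions would degrade the ratio. Once the estimator $\tilde{B}$ from \Cref{lem:bridge-estimator} is available, the remainder of the argument is a direct linear calculation that mirrors the proof of \Cref{lem: graphic-tsp-approx1}.
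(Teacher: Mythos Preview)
Your proof is correct and follows essentially the same route as the paper: the same structural bound $\tspsize{V}\le 2n-\tfrac{1}{3}(\rho(G)-B)$ via \Cref{lem: non-bridge-matching}, the same error substitutions from \Cref{lem: error-bound} and \Cref{lem:bridge-estimator}, and the same pair of lower bounds $\tspsize{V}\ge 2n-\rho(G)$ and $\tspsize{V}\ge n+B$ reducing to the identical two-variable program. The only cosmetic difference is that you resolve the optimization by a clean two-case split at $y\lessgtr 1-x$, whereas the paper defers to a computer-algebra check; your verification is correct.
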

\begin{myproof}
Let $\rho(G)$ be the size of maximum path cover and $B$ be the number of bridges in the graph. There are at least $\rho(G) - B$ edges of maximum path cover that are not bridge. These edges construct disjoint paths which implies there exists a matching of size $\frac{1}{2}(\rho(G) - B)$ that none of its edges is a bridge. Hence, by \Cref{lem: non-bridge-matching}, the cost of graphic TSP is at most $2n - \frac{1}{3}(\rho(G) - B)$. Therefore, since $\tilde{\rho} \leq \rho(G)$ and $B \leq \tilde{B}$, we get $\tspsize{V} \leq \tilde{\tau}$.

By \Cref{lem: error-bound} and \Cref{lem:bridge-estimator}, we have $\left(\frac{1}{2} - \frac{1}{K} \right)\cdot\rho(G) - \frac{n}{K}\leq \tilde{\rho}$ and $\tilde{B} \leq B + \frac{n}{K}$ which implies
\begin{align*}
    \tilde{\tau} & \leq 2n - \frac{1}{3}\left( (\frac{1}{2} - \frac{1}{K})\cdot\rho(G) - (B + \frac{n}{K}) \right) \\
    & \leq 2n - \frac{1}{3}(\frac{\rho(G)}{2} - B) + \frac{2n}{K} & (\text{Since } \rho(G) \leq n).
\end{align*}
Also, assume that the approximation ratio that the algorithm obtains is $\alpha + 2/K$ for some $\alpha \leq 2$. Thus,
\begin{align*}
    (\alpha + \frac{2}{K}) \cdot \tspsize{V} & \geq \alpha \cdot \tspsize{V} + \frac{2n}{K} & (\text{Since } \tspsize{V} \geq n) \\ 
    & \geq \alpha\cdot \max\{2n - \rho(G), n + B\} + \frac{2n}{K} & (\text{By \Cref{lem: lower-bound-path}  and  \Cref{lem:brdige-lower-bound}}).
\end{align*}
Therefore, to show that $(\alpha + \frac{1}{K})\cdot\tspsize{V} \geq \tilde{\tau}$, it is sufficient to show 
\begin{align*}
    2n - \frac{1}{3}(\frac{\rho(G)}{2} - B) \leq \alpha\cdot \max\{2n - \rho(G), n + B\}.
\end{align*}

Now, let $\rho(G) = xn$ and $B = yn$ for $0 \leq x \leq 1$ and $ 0\leq y \leq 1$. To obtain $\alpha$, we write the following maximization problem,
\begin{align*}
\begin{array}{ll@{}ll}
\text{maximize}  & \alpha \\
\vspace{5px}\text{subject to}& \frac{2 - \frac{1}{3}(\frac{x}{2} - y)}{\max\{2-x, 1+y\}} \leq \alpha,\\
                & 0\leq x \leq 1,\\
                 & 0\leq y \leq 1.
\end{array}
\end{align*}
The solution to this problem is $11/6$.\footnote{See e.g. this WolframAlpha  \href{https://www.wolframalpha.com/input?i=Maximize\%5B\%282-1\%2F3*\%28x\%2F2+-+y\%29\%29+\%2F+Max\%5B2-x\%2C+1\%2By\%5D\%2C+\%7B0\%3C\%3D+x+\%3C\%3D1\%2C+0\%3C\%3Dy\%3C\%3D1\%7D\%2C+\%7Bx\%2Cy\%7D\%5D}{link}.} This completes the proof.
\end{myproof}

\begin{theorem}\label{thm: graphic-tsp2}
Let $\tspsize{V}$ be the cost of graphic TSP of graph $G = (V, E)$. For any $\epsilon > 0$, there exists an algorithm that estimate the cost of graphic TSP, $\tilde{\tau}$, such that
$$
\tspsize{V} \leq \tilde{\tau} \leq (\frac{11}{6} + \epsilon) \cdot \tspsize{V},
$$
w.h.p in $\tilde{O}(n)$ running time. \vspace{3px}
\end{theorem}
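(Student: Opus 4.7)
The plan is to derive this theorem as a direct consequence of the groundwork already laid out, namely \Cref{lem: graphic-tsp-approx2} together with the running-time guarantees of \Cref{thm: disjoint-paths-algorithm} and \Cref{lem:bridge-estimator}. The target algorithm is precisely \Cref{alg: graphic-tsp2}, and the only work remaining is to choose the internal parameter $K$ so that the $1/K$ slack from \Cref{lem: graphic-tsp-approx2} is absorbed into the desired $\epsilon$, and to collate the individual running-time bounds of the subroutines.

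Concretely, I would first set $K = \lceil 1/\epsilon \rceil$ (or any integer with $1/K \le \epsilon$) and instantiate \Cref{alg: graphic-tsp2} with this value of $K$. Applying \Cref{lem: graphic-tsp-approx2} verbatim then yields, with high probability,
\[
\tspsize{V} \;\leq\; \tilde{\tau} \;\leq\; \left(\tfrac{11}{6}+\tfrac{1}{K}\right)\tspsize{V} \;\leq\; \left(\tfrac{11}{6}+\epsilon\right)\tspsize{V},
\]
which is the desired approximation guarantee. No further probabilistic argument is needed here because \Cref{lem: graphic-tsp-approx2} is already a high-probability statement, and the constantly many invocations of subroutines preserve the w.h.p.\ qualifier via a union bound.

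For the running time, \Cref{alg: graphic-tsp2} performs three tasks. The implicit construction of $\hat{G}$ is free (it is only specified through query-access reductions). The estimate $\tilde{\rho}$ of the path-cover size is produced by \Cref{alg: disjoint-paths} in time $\tilde{O}(n)$ by \Cref{thm: disjoint-paths-algorithm}, which is invoked with accuracy parameter $\epsilon$ (hidden $\poly(1/\epsilon)$ factors). The estimate $\tilde{B}$ of the number of bridges is produced in time $\tilde{O}(n)$ by \Cref{lem:bridge-estimator}, again with the appropriate accuracy. Summing these and the $O(1)$ arithmetic step that forms $\tilde{\tau} = 2n - \tfrac{1}{3}(\tilde{\rho} - \tilde{B})$ gives a total of $\tilde{O}(n)$.

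I do not expect a genuine obstacle in this proof: the technical heavy lifting (the $1/2$-approximate path cover estimator, the sublinear bridge counter, and the LP-style extremal bound relating $\rho(G)$ and the number of bridges to $\tspsize{V}$) has all been completed earlier. The only minor care-point is to make sure the $n/K$ additive errors from both the path-cover estimator and the bridge estimator are dominated by $\epsilon \cdot \tspsize{V} \geq \epsilon n$ (using $\tspsize{V} \geq n$), exactly as exploited in the proof of \Cref{lem: graphic-tsp-approx2}; this is what justifies folding $2/K$ into $\epsilon$ without losing any constant factors.
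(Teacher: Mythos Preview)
Your proposal is correct and follows essentially the same route as the paper: set $K=\lceil 1/\epsilon\rceil$, invoke \Cref{lem: graphic-tsp-approx2} for the approximation guarantee, and cite \Cref{thm: disjoint-paths-algorithm} and \Cref{lem:bridge-estimator} for the $\tilde O(n)$ running time. The only (inconsequential) slip is in your last sentence, where you write ``folding $2/K$ into $\epsilon$'' even though the slack in \Cref{lem: graphic-tsp-approx2} is stated as $1/K$.
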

\begin{myproof}
Let $\tilde{\tau}$ be the output of \Cref{alg: graphic-tsp2}. If we choose $K = \frac{1}{\epsilon}$, then by \Cref{lem: graphic-tsp-approx2}, we have 
$$
\tspsize{V} \leq \tilde{\tau} \leq (\frac{11}{6} + \epsilon) \cdot \tspsize{V}.
$$
Also, by \Cref{thm: disjoint-paths-algorithm} and \Cref{lem:bridge-estimator}, estimating $\tilde{\rho}$ and $\tilde{B}$ can be done in $\tilde{O}(n)$ time.
\end{myproof}

\section{A Slightly Subquadratic Algorithm for Graphic TSP}\label{sec:graphic-TSP-slightly-sub}

With the recent advances in designing sublinear algorithms for maximum matching \cite{ behnezhadroghanirubinstein2022, BehnezhadRRS-ArXiv22, sayanSublinear2022, bhattacharya2023dynamic}, we can now achieve a more precise estimation of the size of graphic TSP at the cost of increased running time. In this section, we present an algorithm that approximates the graphic TSP with an accuracy within a factor of $5/3 + \epsilon$ in $O(n^{2-\Omega_\epsilon(1)})$. We use the following result by Bhattacharya, Kiss, and Saranurak \cite{bhattacharya2023dynamic} to design our slightly subquadratic graphic TSP estimator.

\begin{proposition}\label{prop:matching-estimator}
    Suppose that we have access to the adjacency matrix of graph $G$. Then, There exists an algorithm that estimates the size of the maximum matching of graph $G$, $\tilde \mu$, such that
    \begin{align*}
        \mu(G) - \epsilon n \leq \tilde \mu \leq \mu(G),
    \end{align*}
    w.h.p in $n^{2-\Omega_\epsilon(1)}$ time.
\end{proposition}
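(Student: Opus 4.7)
The plan is to use an iterative augmentation framework: start with a cheaply computable approximately maximum matching, and then repeatedly boost it by eliminating short augmenting paths until the additive error drops below $\epsilon n$.

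First, I would invoke a nearly linear time $(1/2-\epsilon')$-approximate matching algorithm---for instance the $\widetilde{O}(n)$-time maximal matching estimator of \citet*{behnezhad2021}---to produce an initial matching $M_0$ of size at least $\mu(G)/2 - \epsilon' n$ in $\widetilde{O}(n)$ time, which already reads $\ll n^{2-\Omega(1)}$ queries. I would then iteratively construct $M_1,M_2,\ldots,M_T$, where $M_{i+1}$ is obtained from $M_i$ by augmenting along a maximal vertex-disjoint collection of augmenting paths of length at most $2k+1$ for $k = \lceil 1/\epsilon\rceil$. By the classical Hopcroft--Karp inequality, once $M_T$ admits no augmenting path of length $\leq 2k+1$, we have $|M_T| \geq \tfrac{k}{k+1}\mu(G) \geq \mu(G) - \epsilon n$, and moreover $T = O(1/\epsilon)$ phases suffice since each phase that does not terminate the process strictly increases the matching. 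The final estimator simply returns $|M_T|$, whose upper bound $|M_T| \leq \mu(G)$ is automatic.

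The heart of the proof---and the main obstacle---is implementing each augmentation phase in $n^{2-\Omega_\epsilon(1)}$ time in the adjacency matrix model. Concretely, one must either find many vertex-disjoint augmenting paths of length $\leq 2k+1$ or certify that few exist, without examining all $\Theta(n^2)$ potential edges. The approach I would follow builds on the recent line of sublinear matching work \cite{behnezhadroghanirubinstein2022, BehnezhadRRS-ArXiv22, bhattacharya2023dynamic}: sample $s$ free vertices uniformly at random, grow bounded-depth alternating BFS trees of depth $O(1/\epsilon)$ from each using adjacency queries, and either stitch together augmenting paths when two trees meet at an unmatched vertex, or randomly permute the trees and pick a maximal disjoint subcollection. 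If at least $\epsilon n$ vertex-disjoint short augmenting paths exist, then a sample of $s = n^{1-\Omega_\epsilon(1)}$ unmatched vertices hits many of them whp.

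The technical crux is bounding the per-sample exploration cost. A naive BFS of depth $k$ could touch $n^{\Theta(k)}$ vertices; instead one would rely on a random-permutation/ correlation-decay argument (analogous to the MIS-style analysis of \Cref{sec:query-process}, and to the dynamic matching data structure of \cite{bhattacharya2023dynamic}) to show that the expected exploration cost, in the average over sampled starting vertices and a random edge permutation, is at most $n^{\Omega_\epsilon(1)}$ per sample. Balancing the sample size $s$ against this per-sample cost, and summing over the $O(1/\epsilon)$ phases, yields the overall $n^{2-\Omega_\epsilon(1)}$ running time and concentration via Hoeffding (\Cref{prop:hoeffding}) for the final estimator.
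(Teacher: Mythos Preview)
The paper does not prove this proposition at all: it is quoted as a black-box result of Bhattacharya, Kiss, and Saranurak \cite{bhattacharya2023dynamic} (with related results in \cite{behnezhadroghanirubinstein2022,BehnezhadRRS-ArXiv22,sayanSublinear2022}), and is used only as an off-the-shelf subroutine in \Cref{alg: graphic-tsp3}. So there is no ``paper's own proof'' to compare against; the relevant question is whether your sketch stands on its own.

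It does not, for two concrete reasons. First, you conflate \emph{estimating} the size of a matching with \emph{constructing} one. The $\widetilde{O}(n)$-time algorithm of \cite{behnezhad2021} returns a number, not a matching $M_0$ you can augment; and in the adjacency-matrix model even building a maximal matching in $o(n^2)$ time is nontrivial, so your ``initial matching'' step already needs justification. Second, and more seriously, the entire content of the proposition lives in the step you label ``the heart of the proof'' and then wave away. Finding (or ruling out) $\epsilon n$ vertex-disjoint augmenting paths of length $O(1/\epsilon)$ in $n^{2-\Omega_\epsilon(1)}$ adjacency-matrix queries is exactly the hard theorem of \cite{bhattacharya2023dynamic,behnezhadroghanirubinstein2022}; invoking a ``random-permutation / correlation-decay argument analogous to \Cref{sec:query-process}'' does not work here, because the RGMIS analysis bounds query trails for a \emph{greedy} process on a fixed conflict graph, whereas alternating BFS trees of depth $2k{+}1$ with respect to a maintained matching have no such locality structure to exploit. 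The actual proofs in the cited papers go through substantially different machinery (e.g., structural sparsifiers such as EDCS, or reductions to dynamic matching data structures), none of which is hinted at in your outline. As written, the proposal is a restatement of the Hopcroft--Karp framework with the genuinely new algorithmic idea left as an assertion.
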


Combining this algorithm with the framework of Chen, Kannan, and Khanna \cite{chen2020} implies a $(13/7+\epsilon)$-approximation for graphic TSP in $n^{2-\Omega_\epsilon(1)}$ time. Our algorithm in \Cref{sec:graphic-TSP-major} makes an improvement on both the running time and approximation ratio for the graphic TSP over this recent result. Furthermore, using \Cref{prop:matching-estimator} and our estimator for counting the number of bridges, we are able to obtain a 5/3-approximation ratio. Now we are ready to propose our algorithm.

\begin{algorithm}
\caption{Slightly subquadratic algorithm for graphic TSP.}
\label{alg: graphic-tsp3}
    
    Let $\tilde{\mu}$ be the output of \Cref{prop:matching-estimator} on $G$.
    
    Let $\tilde{B}$ be the estimate of the number of bridges. 
    
    $\tilde{\tau} = 2n - \frac{1}{3}(\tilde{\mu} - \tilde{B})$
    
    \Return $\tilde{\tau}$
\end{algorithm}

\begin{lemma}\label{lem: graphic-tsp-approx3}
Let $\tilde{\tau}$ be the output of \Cref{alg: graphic-tsp3} and $\tspsize{V}$ be the cost of graphic TSP of graph $G = (V, E)$. With high probability,
$$
\tspsize{V} \leq \tilde{\tau} \leq (\frac{5}{3} + \epsilon) \cdot \tspsize{V}.
$$
\end{lemma}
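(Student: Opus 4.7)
My plan is to mirror the structure of Lemma \ref{lem: graphic-tsp-approx2}, substituting the matching estimator of Proposition \ref{prop:matching-estimator} for the path-cover estimator, while keeping the bridge estimator of Lemma \ref{lem:bridge-estimator}. I would first establish the upper bound $\tspsize{V} \leq \tilde{\tau}$. Fix a maximum matching $M$ in $G$, so $|M| = \mu(G)$. Since each bridge of $G$ can account for at most one matching edge of $M$, removing the (at most $B$) bridge edges from $M$ leaves a sub-matching $M'$ with $|M'| \geq \mu(G) - B$ whose edges are all non-bridges. Applying Lemma \ref{lem: non-bridge-matching} to $M'$ gives $\tspsize{V} \leq 2n - \tfrac{2}{3}(\mu(G) - B)$. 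Since w.h.p.\ $\tilde{\mu} \leq \mu(G)$ and $\tilde{B} \geq B$ by Proposition \ref{prop:matching-estimator} and Lemma \ref{lem:bridge-estimator}, we have $\tilde{\mu} - \tilde{B} \leq \mu(G) - B$, and so $\tilde{\tau} \geq 2n - \tfrac{2}{3}(\mu(G) - B) \geq \tspsize{V}$.

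For the multiplicative bound, I would combine two lower bounds on $\tspsize{V}$. From Lemma \ref{lem:brdige-lower-bound} we have $\tspsize{V} \geq n + B$. From Lemma \ref{lem: lower-bound-path} together with the elementary inequality $\rho(G) \leq 2\mu(G)$ (any path with $k$ edges contains a matching of size $\lceil k/2 \rceil$), we also have $\tspsize{V} \geq 2n - 2\mu(G)$. Choosing $\epsilon' = \Theta(\epsilon)$ in the two estimators guarantees $\tilde{\mu} \geq \mu(G) - \epsilon' n$ and $\tilde{B} \leq B + \epsilon' n$ w.h.p., so $\tilde{\tau} \leq 2n - \tfrac{2}{3}(\mu(G) - B) + O(\epsilon n)$. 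Writing $\mu(G) = xn$ and $B = yn$ with $x \in [0, \tfrac{1}{2}]$ and $y \in [0, 1]$, the target ratio reduces to
\[
\sup_{x, y} \;\; \frac{2 - \tfrac{2}{3}(x - y)}{\max\{2 - 2x,\; 1 + y\}}.
\]
A short case split on which lower bound dominates (the dichotomy being $y \lessgtr 1 - 2x$) shows the supremum is attained at the corner $(x, y) = (\tfrac{1}{2}, 0)$ with value exactly $\tfrac{5}{3}$. Since $\tspsize{V} \geq n$, the additive $O(\epsilon n)$ slack translates into $O(\epsilon)$ multiplicative slack, delivering the claimed $(\tfrac{5}{3} + \epsilon)$-approximation.

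The main obstacle is not conceptual but rather bookkeeping: the two additive $\epsilon n$ errors coming from the matching and bridge estimators have to be absorbed simultaneously into the final $\epsilon$ slack, and the slack from using the weaker bound $\tspsize{V} \geq 2n - 2\mu(G)$ in place of $\tspsize{V} \geq 2n - \rho(G)$ has to be shown not to weaken the constant below $\tfrac{5}{3}$. Verifying that the maximum of the two-variable optimization really sits at $(x, y) = (\tfrac{1}{2}, 0)$ and not in the interior of the feasible region, and choosing the estimator accuracy $\epsilon'$ as a suitably small constant multiple of $\epsilon$, are the only delicate parts of the argument.
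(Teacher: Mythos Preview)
Your plan follows the same skeleton as the paper's proof --- bound $\tspsize{V}$ from above via a non-bridge matching and \Cref{lem: non-bridge-matching}, bound $\tspsize{V}$ from below by two inequalities, and optimize the resulting ratio --- but it diverges from the paper in two coupled places, and in fact your version is the sound one.

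First, you carry the coefficient $\tfrac{2}{3}$ throughout, which is exactly what \Cref{lem: non-bridge-matching} yields for a matching of size $\mu(G)-B$; but \Cref{alg: graphic-tsp3} as written outputs $\tilde\tau = 2n - \tfrac{1}{3}(\tilde\mu - \tilde B)$. Your chain ``$\tilde\mu - \tilde B \le \mu(G)-B$ hence $\tilde\tau \ge 2n - \tfrac{2}{3}(\mu(G)-B)$'' and your upper bound ``$\tilde\tau \le 2n - \tfrac{2}{3}(\mu(G)-B) + O(\epsilon n)$'' are therefore proofs about the $\tfrac{2}{3}$ variant of the estimator, not about the algorithm literally stated. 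This matters: on a Hamiltonian cycle ($\mu=n/2$, $B=0$, $\tspsize{V}=n$) the stated $\tfrac{1}{3}$ estimator outputs roughly $11n/6$, so no argument can push its ratio below $11/6$.

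Second, for the lower bound on $\tspsize{V}$ you use $\tspsize{V}\ge 2n-2\mu(G)$, derived from \Cref{lem: lower-bound-path} together with $\rho(G)\le 2\mu(G)$. The paper instead invokes $\tspsize{V}\ge 2n-\mu(G)$, citing \Cref{lem: tsp-matching-bound}; but that lemma is an \emph{upper} bound on $\tspsize{V}$, and the inequality $\tspsize{V}\ge 2n-\mu(G)$ is false in general (the Hamiltonian cycle again). Your lower bound is the correct one, and with the $\tfrac{2}{3}$ estimator your two-variable optimization does indeed peak at $\tfrac{5}{3}$ at $(x,y)=(\tfrac12,0)$, as you claim. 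In short: your argument is correct for the estimator with coefficient $\tfrac{2}{3}$, and it repairs an error in the paper's lower-bound step; you should just flag explicitly that the $\tfrac{1}{3}$ in \Cref{alg: graphic-tsp3} must be $\tfrac{2}{3}$ for the $5/3$ guarantee to hold.
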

\begin{proof}
    Let $B$ be the number of bridges in $G$. There exists a matching with a size of at least $\mu(G) - B$ that none of its edges is a bridge. Thus, by \Cref{lem: non-bridge-matching}, it holds $\tau(V) \leq 2n - \frac{1}{3}(\mu(G) - B)$. Combining with $\tilde\mu \leq \mu(G)$ and $B \leq \tilde B$, we get $\tau(V) \leq \tilde\tau$.

    Additionally, by \Cref{prop:matching-estimator} and \Cref{lem:bridge-estimator}, we have $\tilde \mu \geq \mu(G) - \epsilon n$ and $\tilde B \leq B + \epsilon n$. Thus,
    \begin{align*}
    \tilde{\tau} & \leq 2n - \frac{1}{3}\left( (\mu(G) - \epsilon n) - (B + \epsilon n) \right) \\
    & \leq 2n - \frac{1}{3}(\mu(G) - B) + 2\epsilon n & (\text{Since } \mu(G) \leq n).
    \end{align*}
    Assume the approximation ratio of the algorithm is $\alpha + 2\epsilon$ for some $\alpha \leq 2$, we must have
    \begin{align*}
    (\alpha + 2\epsilon) \cdot \tspsize{V} & \geq \alpha \cdot \tspsize{V} + 2\epsilon n & (\text{Since } \tspsize{V} \geq n) \\ 
    & \geq \alpha\cdot \max\{2n - \mu(G), n + B\} + 2\epsilon n. & (\text{By \Cref{lem: tsp-matching-bound}  and  \Cref{lem:brdige-lower-bound}})
\end{align*}
In order to show $(\alpha + 2\epsilon)\cdot\tspsize{V} \geq \tilde{\tau}$, it is sufficient to show 
\begin{align*}
    2n - \frac{1}{3}(\mu(G) - B) \leq \alpha\cdot \max\{2n - \mu(G), n + B\}.
\end{align*}

Let $\mu(G) = xn$ and $B = yn$ for $0 \leq x \leq 1$ and $ 0\leq y \leq 1$. To obtain $\alpha$, we write the following maximization problem,
\begin{align*}
\begin{array}{ll@{}ll}
\text{maximize}  & \alpha \\
\vspace{5px}\text{subject to}& \frac{2 - \frac{1}{3}(x - y)}{\max\{2-x, 1+y\}} \leq \alpha,\\
                & 0\leq x \leq 1,\\
                 & 0\leq y \leq 1.
\end{array}
\end{align*}
The solution to this problem is $5/3$.\footnote{See e.g. this WolframAlpha  \href{https://www.wolframalpha.com/input?i=Maximize\%5B\%282-1\%2F3*\%28x+-+y\%29\%29+\%2F+Max\%5B2-x\%2C+1\%2By\%5D\%2C+\%7B0\%3C\%3D+x+\%3C\%3D1\%2C+0\%3C\%3Dy\%3C\%3D1\%7D\%2C+\%7Bx\%2Cy\%7D\%5D}{link}.}
\end{proof}

\begin{theorem}\label{thm: graphic-tsp3}
Let $\tspsize{V}$ be the cost of graphic TSP of graph $G = (V, E)$. For any $\epsilon > 0$, there exists an algorithm that estimate the cost of graphic TSP, $\tilde{\tau}$, such that
$$
\tspsize{V} \leq \tilde{\tau} \leq (\frac{5}{3} + \epsilon) \cdot \tspsize{V},
$$
w.h.p in $n^{2-\Omega_\epsilon(1)}$ running time. \vspace{3px}
\end{theorem}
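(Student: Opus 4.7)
The plan is to prove Theorem 9.2 by simply combining the approximation guarantee from \Cref{lem: graphic-tsp-approx3} with the running time budgets of the two subroutines used inside \Cref{alg: graphic-tsp3}. Since \Cref{lem: graphic-tsp-approx3} already established the approximation ratio $\tspsize{V} \leq \tilde\tau \leq (5/3 + \epsilon')\tspsize{V}$ whenever the matching estimator has additive error $\epsilon' n$ and the bridge estimator has additive error $\epsilon' n$, there is essentially nothing more to prove about correctness. The argument will just be bookkeeping: rescale $\epsilon'$ so that the final approximation factor is $5/3 + \epsilon$, and verify that each subroutine still runs in $n^{2 - \Omega_\epsilon(1)}$ time for this rescaled parameter.

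Concretely, first I would set the internal precision parameter inside \Cref{alg: graphic-tsp3} to $\epsilon' := \epsilon / 2$ (any constant-factor rescaling works). Then I would invoke \Cref{lem: graphic-tsp-approx3} with this $\epsilon'$ to conclude
\[
\tspsize{V} \;\leq\; \tilde\tau \;\leq\; \left(\tfrac{5}{3} + 2\epsilon'\right)\tspsize{V} \;=\; \left(\tfrac{5}{3} + \epsilon\right)\tspsize{V}
\]
with high probability. Next I would justify the running time: by \Cref{prop:matching-estimator}, computing $\tilde\mu$ with additive error $\epsilon' n$ takes $n^{2 - \Omega_{\epsilon'}(1)} = n^{2 - \Omega_\epsilon(1)}$ time, and by \Cref{lem:bridge-estimator}, computing $\tilde B$ with additive error $\epsilon' n$ takes $\widetilde O(n)$ time. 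The remaining arithmetic in \Cref{alg: graphic-tsp3} to form $\tilde\tau = 2n - \frac{1}{3}(\tilde\mu - \tilde B)$ is $O(1)$. Therefore the total running time is dominated by the matching estimator and is $n^{2 - \Omega_\epsilon(1)}$, as claimed.

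There is no real obstacle here: the theorem is a direct corollary of \Cref{lem: graphic-tsp-approx3}, \Cref{prop:matching-estimator}, and \Cref{lem:bridge-estimator}, and the only thing to watch is that the $\Omega_\epsilon(1)$ in the running time exponent is preserved under the constant-factor rescaling of $\epsilon$. Since the dependence on $\epsilon$ is inherited from \Cref{prop:matching-estimator}, and a constant-factor change in $\epsilon$ only changes the hidden constant in $\Omega_\epsilon(1)$, this causes no issue.
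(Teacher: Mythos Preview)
Your proposal is correct and follows essentially the same approach as the paper: invoke \Cref{lem: graphic-tsp-approx3} for the approximation guarantee and \Cref{prop:matching-estimator} together with \Cref{lem:bridge-estimator} for the running time. Your explicit rescaling $\epsilon' = \epsilon/2$ is a bit more careful than the paper's version (which simply cites the lemma and the two subroutines), but the argument is the same.
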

\begin{myproof}
Let $\tilde{\tau}$ be the output of \Cref{alg: graphic-tsp3}. By \Cref{lem: graphic-tsp-approx2}, we have 
$$
\tspsize{V} \leq \tilde{\tau} \leq (\frac{5}{3} + \epsilon) \cdot \tspsize{V}.
$$
Moreover, by \Cref{prop:matching-estimator} and \Cref{lem:bridge-estimator}, estimating $\tilde{\mu}$ and $\tilde{B}$ can be done in $n^{2-\Omega_\epsilon(1)}$ time.
\end{myproof}

\section{Lower Bound for Approximating Maximum Path Cover}\label{sec:lowerbound}

\subsection{``Conditional'' Hardness for the Approximation Ratio}
In this section, we prove that if there exists a constant $\alpha > 0$ and an algorithm that returns a $(\frac{1}{2} + \alpha)$-approximate estimate for the size of maximum path cover in $\tilde{O}(n)$ time in a bipartite graph, then there is a $(\frac{1}{2} + \alpha)$-approximate algorithm for estimating the maximum matching size in $\tilde{O}(n)$ time. This remains an important open problem in the study of sublinear time maximum matching algorithms. See in particular \cite{BehnezhadRRS-ArXiv22}. This implies that short of a major result in the study maximum matchings in the sublinear time model, which have received significant attention in the literature (see \cite{YoshidaYI09,behnezhad2021,BehnezhadRRS-ArXiv22, behnezhadroghanirubinstein2022, sayanSublinear2022} and references therein), our path cover algorithm has an optimal approximation ratio.

Let $G = (V, U, E)$ be a bipartite graph.  We construct a graph $G' = (V', U', E')$ such that a better than $\frac{1}{2}$-approximate estimate of maximum path cover on $G'$ leads to a better than $\frac{1}{2}$-approximate estimate of maximum matching in $G$. Let $r$ be a large constant. We create $r$ copies of $G$, showing the $i$-th copy with $G_i=(V_i, U_i, E)$. Also, we create another $r-1$ copies $H_{1}, \ldots, H_{r-1}$ of $G$ with $H_i = (V_i, U_{i+1}, E)$. Now we let the $G' = (\bigcup_{i=1}^r G_i) \cup (\bigcup_{i=1}^{r-1}H_i)$. Now we claim that the size of maximum path cover of the graph $G'$ is roughly $2r\cdot \mu(G)$ which can be used as an estimator for the maximum matching of $G$.

\begin{figure}[htbp]
\begin{center}
  \includegraphics[scale=0.55]{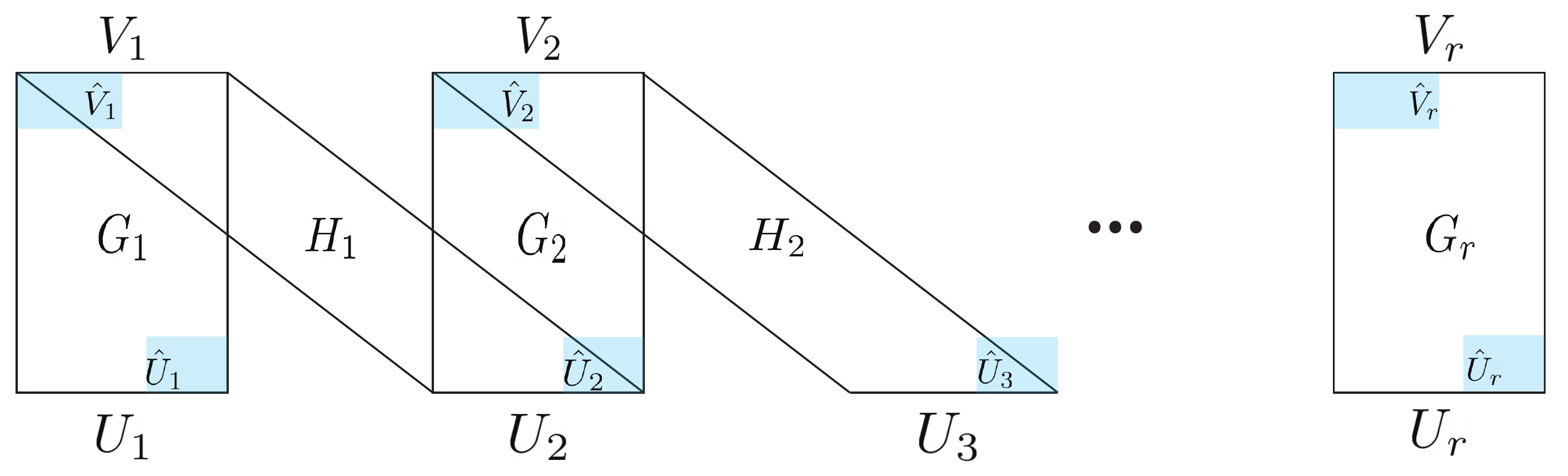}
  \caption{Illustration of graph $G' = (V', U', E')$. Each $G_i$ is shown by a rectangle and each $H_i$ is shown by a parallelogram. Top and bottom horizontal lines illustrate $V_i$ and $U_i$. Blue highlighted parts represent the vertex cover of the graph.}\label{fig:hardness}
  \end{center}
\end{figure}

Before proving the main result of this section, we characterize some properties of the constructed graph $G'$.

\begin{claim}\label{clm:matching-new-graph}
$\mu(G') = r\cdot \mu(G)$.
\end{claim}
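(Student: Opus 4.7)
The graph $G'$ is bipartite with parts $V' = \bigcup_{i=1}^r V_i$ and $U' = \bigcup_{i=1}^r U_i$, so by K\"onig's theorem (\Cref{prop:konig}) it suffices to show that both $\mu(G')$ and the minimum vertex cover $\nu(G')$ equal $r\cdot \mu(G)$. The plan is therefore to give matching lower and upper bounds via a matching construction and a vertex-cover construction, respectively.

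For the lower bound, I would take any maximum matching $M$ of $G$ and place a copy $M_i$ of it inside each $G_i$. Since the edge sets of the $G_i$'s use disjoint vertex sets $V_i \cup U_i$, the union $\bigcup_i M_i$ is a valid matching in $G'$ of size $r\cdot |M| = r\cdot \mu(G)$, giving $\mu(G') \geq r\cdot\mu(G)$.

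For the upper bound, apply K\"onig's theorem to $G$ to obtain a minimum vertex cover $C \subseteq V \cup U$ of size $\mu(G)$; split it as $C = C_V \sqcup C_U$ with $C_V \subseteq V$ and $C_U \subseteq U$. Define $C'$ as the union, over $i \in [r]$, of the copy of $C_V$ sitting inside $V_i$ and the copy of $C_U$ sitting inside $U_i$, so $|C'| = r(|C_V|+|C_U|) = r\cdot\mu(G)$. The verification that $C'$ is a vertex cover of $G'$ then splits into two cases by edge type: edges in $G_i = (V_i, U_i, E)$ are covered by $C_V^{(i)} \cup C_U^{(i)}$ because that is a copy of $C$ acting on a copy of $E$; edges in $H_i = (V_i, U_{i+1}, E)$ are covered by $C_V^{(i)} \cup C_U^{(i+1)}$, again because $(C_V, C_U)$ covers $E$ and we have placed the correct side of $C$ on each side of $H_i$. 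Thus $\nu(G') \leq r\cdot \mu(G)$, and combining with K\"onig gives $\mu(G') \leq r\cdot \mu(G)$.

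No real obstacle is expected; the only subtlety is to make sure the $H_i$ bridging copies are covered, and this is exactly why the construction alternates $C_V$ on the $V$-side and $C_U$ on the $U$-side uniformly across all indices rather than, say, using different covers per copy.
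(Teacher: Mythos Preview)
Your proposal is correct and follows essentially the same argument as the paper: the lower bound by placing a copy of a maximum matching of $G$ into each vertex-disjoint $G_i$, and the upper bound by lifting a minimum vertex cover $C_V \cup C_U$ of $G$ (obtained via K\"onig) into each $V_i$ and $U_i$ and checking that both the $G_i$- and $H_i$-edges are covered.
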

\begin{myproof}
First, since all graphs $\{G_i\}_{i=1}^r$ are the same as $G$ and are vertex-disjoint, if we consider the maximum matching of $G$ in each of the $r$ graphs, we will have a matching of size $r\cdot \mu(G)$. Thus, $\mu(G') \geq r\cdot \mu(G)$.

Let $\hat{V} \cup \hat{U}$ be the minimum vertex cover of $G$ such that $\hat{V} \in V$ and $\hat{U} \in U$. By K\"{o}nig's Theorem (\Cref{prop:konig}), we have $|\hat{V} \cup \hat{U}| = \mu(G)$. Now we show there exists a vertex cover of size $r\cdot \mu(G)$ for graph $G'$. Let $\hat{V}_i \in V_i$ (resp. $\hat{U}_i \in U_i$) be the copy of vertices $V$ (resp. $U$) in graph $G_i$. We claim $(\bigcup_{i=1}^r \hat{V}_i \cup \hat{U}_i)$ is a vertex cover for $G'$. If an edge is in $G_i$, then at least one of its endpoints is in $\hat{V}_i \cup \hat{U}_i$ since $\hat{V}_i \cup \hat{U}_i$ is a vertex cover of $G_i$. Moreover, by the construction, $\hat{V}_i \cup \hat{U}_{i+1}$ is a vertex cover of $H_i$. Hence, each edge of $H_i$ is also covered by the vertex cover. Therefore, since there exists a vertex cover of size $|(\bigcup_{i=1}^r \hat{V}_i \cup \hat{U}_i)| = r\cdot |\hat{V} \cup \hat{U}| = r\cdot \mu(G)$, then we have $\mu(G') \leq r\cdot \mu(G)$ which completes the proof.
\end{myproof}

\begin{observation}\label{obs:path-cover-to-matching}
It holds $(2r-1)\cdot\mu(G) \leq \rho(G') \leq 2r\cdot \mu(G)$.
\end{observation}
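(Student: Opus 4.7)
The plan is to prove the two directions separately, using the structure of $G'$ as a chain $U_1 - V_1 - U_2 - V_2 - \cdots - U_r - V_r$ where consecutive ``layers'' are connected by copies of $E$ (via the $G_i$'s and $H_i$'s).

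\textbf{Lower bound $(2r-1)\mu(G) \leq \rho(G')$.} I would explicitly construct a path cover of size $(2r-1)\mu(G)$ from a maximum matching of $G$. Fix a maximum matching $M \subseteq E$ of $G$ with $|M|=\mu(G)$, and for each matched pair $(v,u) \in M$ (with $v \in V$, $u \in U$) form the path
\[
u^{(1)} \,-\, v^{(1)} \,-\, u^{(2)} \,-\, v^{(2)} \,-\, \cdots \,-\, u^{(r)} \,-\, v^{(r)},
\]
where $v^{(i)}$ and $u^{(i)}$ denote the copies of $v$ and $u$ in $V_i$ and $U_i$ respectively. The edge $u^{(i)} v^{(i)}$ lies in $G_i$ since $(v,u) \in E$, and the edge $v^{(i)} u^{(i+1)}$ lies in $H_i$ since $(v,u) \in E$, so this is a valid path in $G'$ with exactly $2r-1$ edges. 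Because distinct pairs in $M$ are vertex-disjoint in $G$, the resulting $\mu(G)$ paths are vertex-disjoint in $G'$, yielding a path cover of total size $(2r-1)\mu(G)$.

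\textbf{Upper bound $\rho(G') \leq 2r\mu(G)$.} I would use the standard fact that a path cover of any graph $H$ has at most $2\mu(H)$ edges: every path (or cycle) on $\ell$ edges contains a matching of size $\lceil \ell/2 \rceil \geq \ell/2$, so taking such a matching in each component of an optimal path cover yields a matching of size at least $\rho(H)/2$. Applying this to $H = G'$ gives $\rho(G') \leq 2\mu(G')$, and then invoking \Cref{clm:matching-new-graph} (which has already shown $\mu(G') = r\mu(G)$) gives $\rho(G') \leq 2r\mu(G)$.

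Neither direction presents a real obstacle; the key insight is simply the chain structure of $G'$ that allows a single matching edge of $G$ to be ``threaded'' through all $2r-1$ layers without creating conflicts, and the bookkeeping is immediate once one picks the right explicit path.
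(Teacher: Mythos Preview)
Your proof is correct and follows essentially the same approach as the paper: for the lower bound, the paper likewise takes the maximum matching of $G$ in each of the $2r-1$ copies $G_1,\dots,G_r,H_1,\dots,H_{r-1}$ and observes their union is a path cover (you just make the resulting paths explicit); for the upper bound, the paper also argues that a path cover of size exceeding $2r\mu(G)$ would yield a matching in $G'$ of size exceeding $r\mu(G)$, contradicting \Cref{clm:matching-new-graph}.
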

\begin{myproof}
Since the union of maximum matching of all graphs $\{G_i\}_{i=1}^r$ and $\{H_i\}_{i=1}^{r-1}$ creates a path cover, we get $(2r-1)\cdot\mu(G) \leq \rho(G')$. Futhermore, if there exists a path cover of size larger than $2r\cdot \mu(G)$, then the maximum matching of these paths will be larger than $r\cdot \mu(G)$ which contradicts \Cref{clm:matching-new-graph}. Thus, $\rho(G') \leq 2r\cdot \mu(G)$.
\end{myproof}

Now we are ready to show the reduction.

\begin{lemma}
For any constant $\alpha > 0$, if there exists an algorithm that can estimate the maximum path cover within a $(\frac{1}{2} + \alpha)$-factor in $O(T(n))$ time, then the same algorithm can be used to estimate the maximum matching of bipartite graph $G$ within a $(1 - \epsilon)\cdot (\frac{1}{2} + \alpha)$-factor in $O(T(n/\epsilon))$ time.
\end{lemma}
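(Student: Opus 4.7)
The plan is to invoke the hypothesized path cover estimator on the auxiliary graph $G'$ constructed from $G$, and then rescale its output to obtain an estimator for $\mu(G)$. Concretely, I would fix $r = \lceil 1/(2\epsilon) \rceil$, construct $G'$ \emph{implicitly} (never materializing it), run the assumed $(\frac{1}{2}+\alpha)$-approximate path cover algorithm on $G'$ to obtain an estimate $\tilde{\rho}$, and return $\tilde{\mu} := \tilde{\rho}/(2r)$ as the estimate of $\mu(G)$.

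For the simulation of $G'$ in the adjacency matrix model, observe that each vertex of $G'$ is naturally indexed by a pair $(i, v)$ where $i \in [r]$ names a copy and $v$ names a vertex of $G$. Given two such vertices, one checks which block ($G_i$ or $H_i$) could contain the edge between them, maps them back to the corresponding pair in $G$, and queries the adjacency matrix of $G$ once. Hence each adjacency query on $G'$ costs $O(1)$ queries to $G$, and $|V(G')| = r \cdot |V(G)| = \Theta(n/\epsilon)$, so the total running time is $O(T(n/\epsilon))$ as claimed.

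For correctness, the hypothesis applied to $G'$ gives $(\frac{1}{2}+\alpha)\rho(G') \leq \tilde{\rho} \leq \rho(G')$, and \Cref{obs:path-cover-to-matching} gives $(2r-1)\mu(G) \leq \rho(G') \leq 2r\cdot\mu(G)$. Chaining these yields
$$\left(\tfrac{1}{2}+\alpha\right)(2r-1)\mu(G) \;\leq\; \tilde{\rho} \;\leq\; 2r\cdot\mu(G),$$
and dividing by $2r$ gives
$$\left(\tfrac{1}{2}+\alpha\right)\cdot\tfrac{2r-1}{2r}\cdot\mu(G) \;\leq\; \tilde{\mu} \;\leq\; \mu(G).$$
By the choice $r = \lceil 1/(2\epsilon)\rceil$, we have $(2r-1)/(2r) \geq 1-\epsilon$, so the lower bound becomes $(1-\epsilon)(\frac{1}{2}+\alpha)\mu(G)$, which is the desired approximation guarantee.

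There is no real obstacle here: the entire argument is a scaling trick that leverages \Cref{clm:matching-new-graph} and \Cref{obs:path-cover-to-matching}, together with the routine observation that implicit simulation of $G'$ preserves the sublinear-time guarantee up to the mild blowup $n \mapsto n/\epsilon$. The only care needed is the bookkeeping: ensuring the additive slack in \Cref{obs:path-cover-to-matching} (the $-\mu(G)$ gap between $(2r-1)\mu(G)$ and $2r\mu(G)$) translates cleanly into a $(1-\epsilon)$ multiplicative loss, which is exactly what the choice of $r$ accomplishes.
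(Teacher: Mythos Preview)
Your proof is correct and follows essentially the same approach as the paper: fix $r = \Theta(1/\epsilon)$, run the path cover estimator on $G'$, and rescale by $1/(2r)$ (equivalently, by $\epsilon$), using \Cref{obs:path-cover-to-matching} to bound $\rho(G')$ in terms of $\mu(G)$. Your write-up in fact adds useful detail the paper omits, namely the explicit description of how adjacency queries to $G'$ are simulated by $O(1)$ queries to $G$.
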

\begin{myproof}
We construct graph $G'$ as described at the beginning of the section with $r = \frac{1}{2\epsilon}$. By \Cref{obs:path-cover-to-matching}, $(\frac{1}{\epsilon}-1)\cdot\mu(G) \leq \rho(G') \leq \frac{1}{\epsilon}\cdot \mu(G)$. Let $\tilde{\rho}$ be the estimate of the algorithm for the maximum path cover of $G'$. Hence, we have 
\begin{align*}
    (\frac{1}{2} + \alpha)(\frac{1}{\epsilon}-1)\cdot\mu(G) \leq \tilde{\rho} \leq \frac{1}{\epsilon}\cdot \mu(G).
\end{align*}
Now let $\tilde{\mu} = \epsilon\cdot\tilde{\rho}$ be the estimate for the maximum matching of $G$. Hence, 
\begin{align*}
    (1-\epsilon)(\frac{1}{2} + \alpha)\cdot\mu(G) \leq \tilde{\mu} \leq \mu(G).
\end{align*}
Since the number of vertices and number of edges of $G'$ is $r=\frac{1}{2\epsilon}$ times more than $G$, then the running time will be $O(T(n/\epsilon))$.
\end{myproof}

A reduction to matchings can also be proved for $(1,2)$-TSP, albeit with an extra promise for the matching instance that the matching is either perfect or half-perfect. This problem, formalized below, also remains open in the study matchings. We show that a better than $1.5$-approximation for $(1,2)$-TSP in $\widetilde{O}(n)$ time would resolve this question.

\begin{problem}\label{problem:matching-promised}
Suppose that we are given a bipartite graph $G=(L, R, E)$ with $|L| = |R| = n$ and are promised that either $\mu(G) = n$ or $\mu(G) = (\frac{1}{2}+\epsilon)n/2$ for any desirably small constant $\epsilon > 0$. Provided adjacency matrix access to the graph, does there exist an $n^{1+o(1)}$ time algorithm that distinguishes the two?
\end{problem}

\begin{theorem}\label{cor:matching-via-12tsp}
    If there is an algorithm that estimates the size of $(1,2)$-TSP within a $(\frac{3}{2} - \epsilon_0)$-factor for some fixed constant $\epsilon_0 \in (0, \frac{1}{4}]$ in $n^{1+o(1)}$, then \cref{problem:matching-promised} can indeed be solved in $n^{1+o(1)}$ time.
\end{theorem}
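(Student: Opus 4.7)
The plan is to reduce \Cref{problem:matching-promised} to $(1,2)$-TSP via the same amplification construction already developed in \Cref{sec:lowerbound}. Given an input instance $G = (L, R, E)$ with $|L| = |R| = n$, I would build the graph $G' = (V', U', E')$ exactly as in the reduction at the beginning of this section, using $r$ copies (where $r$ is a large constant to be pinned down at the end). The resulting $G'$ has $N = 2rn$ vertices, which I view as a $(1,2)$-TSP instance by assigning distance $1$ to pairs in $E'$ and distance $2$ to all other pairs. Since each vertex of $G'$ carries a (group, index) label built from $L \cup R$, any such distance query can be answered in $O(1)$ time using a single adjacency-matrix lookup in the original graph $G$.

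The analysis comes from combining two facts already in the paper. By \Cref{obs: bounding-TSP-with-DP}, $\tau(G') \in [2N - \rho(G') - 1,\, 2N - \rho(G')]$, and by \Cref{obs:path-cover-to-matching}, $(2r-1)\mu(G) \leq \rho(G') \leq 2r\mu(G)$. In the ``yes'' case $\mu(G) = n$, these combine to give $\tau(G') \leq 4rn - (2r-1)n = (2r+1)n$. In the ``no'' case $\mu(G) = (1/2 + \epsilon)n/2$, we instead get $\rho(G') \leq (1/2+\epsilon)rn$ and hence $\tau(G') \geq (7/2 - \epsilon)rn - 1$.

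I would then run the hypothesized $(3/2 - \epsilon_0)$-approximation algorithm on $G'$ to obtain an estimate $\tilde\tau$. In the yes case this yields $\tilde\tau \leq (3/2 - \epsilon_0)(2r+1)n = (3 - 2\epsilon_0)rn + (3/2 - \epsilon_0)n$, while in the no case $\tilde\tau \geq (7/2 - \epsilon)rn - 1$. The two ranges become separable once $(1/2 + 2\epsilon_0 - \epsilon)rn$ dominates the additive $O(n)$ slack, so I would set $\epsilon := \epsilon_0$ (permissible since $\epsilon$ is only required to be a ``desirably small'' constant of our choice) and pick any constant $r$ exceeding $(3/2 - \epsilon_0)/(1/2 + \epsilon_0) \leq 3$. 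Thresholding $\tilde\tau$ at the midpoint of the two regimes then correctly decides \Cref{problem:matching-promised}.

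The running time is immediate: $r$ is a constant, so $N = O(n)$; the simulated TSP oracle costs $O(1)$ time per query; and the assumed approximation algorithm runs in $N^{1+o(1)} = n^{1+o(1)}$ time. The only real ``step'' beyond invoking the two earlier observations is choosing $\epsilon$ and $r$ so that the two output ranges are strictly disjoint, and as computed above this is a routine arithmetic check; I do not anticipate any genuine technical obstacle.
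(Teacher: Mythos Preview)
Your proposal is correct and follows essentially the same route as the paper: build the $r$-fold amplification $G'$, invoke \Cref{obs:path-cover-to-matching} and \Cref{obs: bounding-TSP-with-DP} to bound $\tau(G')$ in the two promise cases, and observe that a $(3/2-\epsilon_0)$-approximation separates them. The only differences are parametric: the paper takes $r=1/\epsilon_0$ and $\epsilon=\epsilon_0/16$ and argues via the ratio $\tau(V_2')/\tau(V_1')\ge 3/2-\epsilon_0$, whereas you take a constant $r\approx 3$ and $\epsilon=\epsilon_0$ and separate by an additive threshold; both choices work.
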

\begin{myproof}
Let $G_1$ and $G_2$ be two graphs with $n$ vertices such that $\mu(G_1) = n$ and $\mu(G_2) = (\frac{1}{2} + \frac{\epsilon_0}{16})$. We construct graph $G_1' = (V_1', E_1')$ and $G_2' = (V_2', E_2')$ as described at the beginning of the section with $r = \frac{1}{\epsilon_0}$. By \Cref{obs:path-cover-to-matching}, we have $\rho(G_1') \geq (\frac{2}{\epsilon_0} - 1)n$ and $\rho(G_2') \leq (\frac{1}{\epsilon_0}  + \frac{1}{8})n$. Thus, by \Cref{obs: bounding-TSP-with-DP}, we get
\begin{align*}
    \tspsize{V_1'} \leq \frac{4}{\epsilon_0}n - (\frac{2}{\epsilon_0} - 1)n = (\frac{2}{\epsilon_0} + 1)n,\\
    \tspsize{V_2'} \geq \frac{4}{\epsilon_0}n - (\frac{1}{\epsilon_0} + \frac{1}{8})n - 1 \geq (\frac{3}{\epsilon_0} - \frac{1}{4})n,
\end{align*}
for sufficiently large $n$, which implies
\begin{align*}
    \frac{\tspsize{V_2'}}{\tspsize{V_1}} = \frac{3 - \epsilon_0/4}{2 + \epsilon_0} \geq \frac{3}{2} - \epsilon_0.
\end{align*}
Therefore, the algorithm for (1,2)-TSP can distinguish between $G_1'$ and $G_2'$ which implies \cref{problem:matching-promised} can be solved in $n^{1+o(1)}$ time for $\epsilon = \epsilon_0/16$.
\end{myproof}

\paragraph{Lower Bound for Graphic TSP:} note that we can reduce an instance of (1,2)-TSP to graphic TSP by adding a new vertex and connecting the newly added vertex to all vertices of the graph. Therefore, the $n^{1+o(1)}$ time lower bound also holds for graphic TSP.

\subsection{Information-Theoretic Lower Bounds on the Running Time}\label{sec:hardness}
Since any constant approximation algorithm for estimating maximum path cover can be used to estimate the size of matching within a constant factor, then all of the lower bounds for $O(1)$-approximating maximum matching in sublinear time also hold for $(1)$-approximating maximum path cover in sublinear time. We restate some of these lower bounds along with a short proof (see \cite{behnezhad2021} for a detailed discussion).

\begin{lemma}\label{lem:hardness-adjacency-list}
Any algorithm that estimates maximum path cover within a constant multiplicative factor requires $\Omega(n)$ queries in the adjacency list model.
\end{lemma}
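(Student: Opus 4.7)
The plan is to reduce constant-factor approximation of maximum matching to constant-factor approximation of maximum path cover, and then invoke known $\Omega(n)$ adjacency-list lower bounds for matching.

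The first step is a simple structural fact: for every graph $G$, one has $\mu(G) \leq \rho(G) \leq 2\mu(G)$. The left inequality is immediate since any matching is a (single-edge) path cover. The right inequality follows because the edges of any path cover form a graph of maximum degree at most $2$, hence decompose into two matchings (alternate along each path); the larger one has at least $\rho(G)/2$ edges. Equivalently, a path of length $\ell$ contains a matching of size $\lceil \ell/2 \rceil$, and summing over all paths in the cover yields a matching of size at least $\rho(G)/2$.

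Next, suppose for contradiction that some algorithm $\mathcal{A}$ produces, with probability at least $2/3$, an estimate $\widetilde{\rho}$ with $\rho(G)/c \leq \widetilde{\rho} \leq \rho(G)$ for some fixed constant $c$, using $o(n)$ adjacency-list queries. Define $\widetilde{\mu} \coloneqq \widetilde{\rho}/2$. Using the inequalities above, $\widetilde{\mu} \leq \rho(G)/2 \leq \mu(G)$ and $\widetilde{\mu} \geq \rho(G)/(2c) \geq \mu(G)/(2c)$, so $\widetilde{\mu}$ is a $(2c)$-multiplicative approximation of $\mu(G)$, produced with the same number of adjacency-list queries as $\mathcal{A}$ uses. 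This yields an $o(n)$-query algorithm that estimates $\mu(G)$ within a constant multiplicative factor in the adjacency-list model.

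Finally, I would invoke the known information-theoretic lower bound stating that any algorithm which estimates $\mu(G)$ within a constant multiplicative factor requires $\Omega(n)$ queries in the adjacency-list model (see \cite{behnezhad2021} and references therein; the standard construction uses two indistinguishable distributions over graphs on $n$ vertices whose matching sizes differ by more than a constant factor, where one distribution embeds a hidden matching on a small random subset of vertices). Combined with the reduction above, this contradicts the assumed query bound on $\mathcal{A}$ and proves the lemma. The only non-routine point is the structural inequality $\rho(G) \leq 2\mu(G)$, which is the key observation enabling the reduction; everything else is black-box from the matching lower bound.
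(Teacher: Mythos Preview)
Your reduction is correct: the inequalities $\mu(G)\le\rho(G)\le 2\mu(G)$ are exactly right, and they do convert any constant-factor estimator for $\rho$ into one for $\mu$, so the matching lower bound transfers. However, the paper takes a much more direct route. It simply considers two graphs on $n$ vertices: the empty graph (with $\rho=0$) and a graph with a single edge placed between a random pair of vertices (with $\rho=1$). Any multiplicative approximation must distinguish these two cases, and locating a single hidden edge in the adjacency-list model requires $\Omega(n)$ degree or neighbor queries.

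Comparing the two: your argument is a clean black-box reduction and has the pleasant feature that it explains \emph{why} the path-cover and matching lower bounds coincide here. The paper's argument is self-contained and avoids invoking an external lower bound; in fact, the very construction you allude to for matching (``hidden matching on a small random subset'') is more elaborate than what is needed---the zero-versus-one-edge instance already forces $\Omega(n)$ queries for purely multiplicative guarantees, for both matching and path cover. So your detour through matching is sound but unnecessary: the same two-graph instance proves the lemma directly.
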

\begin{myproof}
Consider two graphs that the first one does not have any edge and the second one has only a single edge. In order to give any multiplicative approximation for maximum path cover, the algorithm needs to find the edge which requires $\Omega(n)$ queries in the adjacency list model. 
\end{myproof}

\begin{lemma}
Any algorithm that estimates maximum path cover within a constant multiplicative factor require $\Omega(n^2)$ queries in the adjacency matrix model.
\end{lemma}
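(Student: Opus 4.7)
The plan is to use a trivial two-instance indistinguishability argument analogous to the one in \Cref{lem:hardness-adjacency-list}, but adapted to the adjacency matrix model where detecting a single hidden edge is harder. Specifically, I would consider the two instances $G_1 = (V, \emptyset)$ (the empty graph on $n$ vertices) and the distribution $\mathcal{D}_2$ over graphs $G_2$ on $n$ vertices consisting of exactly one edge placed uniformly at random among the $\binom{n}{2}$ possible pairs. Clearly $\rho(G_1) = 0$ and $\rho(G_2) = 1$ for every $G_2$ in the support of $\mathcal{D}_2$.

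The next step is to observe that any algorithm whose output provides a constant multiplicative approximation $\widetilde{\rho}$ (with $\alpha \cdot \rho(G) \leq \widetilde{\rho} \leq \rho(G)$ for some constant $\alpha > 0$) must distinguish the two cases: on $G_1$ it must report $\widetilde{\rho} = 0$, while on any $G_2$ it must report $\widetilde{\rho} > 0$. Hence, with success probability at least $2/3$, the algorithm must correctly identify whether the input is $G_1$ or a sample from $\mathcal{D}_2$.

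Then I would apply Yao's minimax principle: it suffices to show that any deterministic algorithm that succeeds with probability at least $2/3$ on the uniform mixture of $G_1$ and $\mathcal{D}_2$ must make $\Omega(n^2)$ adjacency matrix queries. For a fixed deterministic algorithm $A$ making $q$ queries, condition on the input being $G_1$; all queries return $0$ and $A$ outputs some fixed answer. For the algorithm to distinguish with probability at least $2/3$ when the input is drawn from $\mathcal{D}_2$, at least one of its $q$ queries (following the same transcript as on $G_1$, until a $1$ is encountered) must hit the planted edge with probability at least $2/3 - o(1)$. Since the edge is uniform over $\binom{n}{2}$ pairs, the probability of hitting it in $q$ queries is at most $q / \binom{n}{2}$, forcing $q = \Omega(n^2)$.

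The argument is genuinely routine and there is no real obstacle; the only mild subtlety is the standard issue of formalizing that a deterministic adaptive algorithm against the random planted-edge distribution reduces to a non-adaptive search, which follows because the algorithm's query pattern is determined by its transcript and the transcript on $\mathcal{D}_2$ coincides with that on $G_1$ until the edge is queried. Combining this with Yao's principle yields the claimed $\Omega(n^2)$ lower bound for any randomized algorithm in the adjacency matrix model.
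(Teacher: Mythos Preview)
Your proposal is correct and takes essentially the same approach as the paper: both use the empty graph versus the single-edge graph and argue that any constant multiplicative approximator must detect the edge, which requires $\Omega(n^2)$ adjacency-matrix queries. Your version is more detailed (explicit Yao's principle, adaptive-to-nonadaptive reduction), whereas the paper simply asserts that finding the edge needs $\Omega(n^2)$ queries.
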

\begin{myproof}
Consider the same construction as \Cref{lem:hardness-adjacency-list}. To give any multiplicative approximation for maximum path cover, the algorithm needs to find the edge which requires $\Omega(n^2)$ queries in the adjacency matrix model. 
\end{myproof}

\begin{lemma}
Any algorithm that estimates maximum path cover within a multiplicative-additive  requires $\Omega(n)$ queries in the adjacency matrix model.
\end{lemma}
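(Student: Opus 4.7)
The plan is to exhibit two distributions over $n$-vertex graphs whose maximum path-cover sizes differ by $\Omega(n)$ but which are indistinguishable to any sublinear-query algorithm in the adjacency-matrix model. The natural choice is to let $G_0$ be the empty graph, for which $\rho(G_0) = 0$, and let $G_1$ be a uniformly random perfect matching on $n$ vertices, for which $\rho(G_1) = n/2$. For any $(\alpha, \epsilon)$-approximation with $\epsilon < \alpha/2$, the feasible output intervals on the two inputs, namely $[-\epsilon n, 0]$ and $[\alpha n/2 - \epsilon n,\, n/2]$, are disjoint. Consequently, any correct estimator must distinguish $G_0$ from $G_1$ with constant probability, so it suffices to lower-bound the number of adjacency-matrix queries required for this distinguishing task.

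Next, I would argue that distinguishing $G_0$ from $G_1$ requires $\Omega(n)$ queries. On $G_0$ every adjacency-matrix entry is $0$, so any algorithm whose output depends on the input must, when run on $G_1$, observe at least one $1$-entry with constant probability. For a uniformly random perfect matching on $n$ vertices, any specific pair $(u, v)$ is an edge with probability exactly $1/(n-1)$, obtained from the ratio of matching counts $(n-3)!!/(n-1)!!$. Hence a non-adaptive algorithm making $q$ queries sees an edge with probability at most $q/(n-1)$ by a union bound, which forces $q = \Omega(n)$ for the probability to be bounded below by a constant.

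The only technical twist is handling adaptive algorithms, whose queries can depend on prior answers. I would handle this through the standard principle of deferred decisions: do not commit to the realization of the random matching up front, and instead, conditioned on a history of all-zero answers, sample the remaining pair assignments as a uniformly random perfect matching on the still-unpaired vertices. Because each all-zero answer rules out only one additional pair, the posterior probability that the next queried pair is an edge stays at most $1/(n - O(q))$, and a union bound over the $q$ queries again gives $O(q/n)$ total probability of ever hitting a $1$. Yao's minimax principle then converts this distributional lower bound into a worst-case lower bound against randomized algorithms, yielding the claimed $\Omega(n)$ bound.

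The main obstacle, as usual in such arguments, is setting up the deferred-decisions coupling cleanly so that the adaptive case reduces to the non-adaptive union bound; once that coupling is in place, the remaining calculations are routine.
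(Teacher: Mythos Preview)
Your proposal is correct and follows essentially the same strategy as the paper: compare the empty graph against a graph with $\Theta(n)$ hidden edges forming a large path cover, and argue that any correct estimator must locate an edge, which costs $\Omega(n)$ adjacency-matrix probes. The paper takes a (random) Hamiltonian cycle as the non-empty instance, giving $\rho = n-1$, whereas you use a random perfect matching with $\rho = n/2$; both choices give the same $\Omega(n)$ bound by the same ``find an edge'' argument. Your write-up is also considerably more explicit about the adaptive case (via the all-zero trajectory / deferred decisions coupling and Yao's principle), details that the paper's two-line sketch leaves implicit.
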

\begin{myproof}
Consider a graph with no edge and a graph with one Hamiltonian cycle and no other edges. In order for the algorithm to distinguish between these two graphs, it must find at least one edge of the second graph which requires $\Omega(n)$ queries in the adjacency matrix model. 
\end{myproof}

There is also a lower bound for multiplicative-additive estimation of matching in adjacency list model \cite{PARNAS2007183} that also holds for maximum path cover.

\begin{lemma}
Any algorithm that estimates maximum path cover within a constant multiplicative-additive factor requires $\Omega(\bar{d})$ queries in the adjacency list model.
\end{lemma}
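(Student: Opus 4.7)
The plan is to reduce from the classical $\Omega(\bar{d})$-query lower bound of Parnas and Ron for $(\alpha,\epsilon)$-approximating the maximum matching size in the adjacency list model. Their lower bound is proved by exhibiting two graph distributions $\mathcal{D}_{\text{small}}$ and $\mathcal{D}_{\text{large}}$ on $n$ vertices with the same average degree $\bar{d}$ such that $(i)$ any algorithm making $o(\bar{d})$ adjacency-list queries has total variation distance $o(1)$ between its query transcripts on the two distributions, and $(ii)$ $\mu(G) = \Omega(n)$ for $G \sim \mathcal{D}_{\text{large}}$ while $\mu(G) = o(n)$ for $G \sim \mathcal{D}_{\text{small}}$. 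My goal is to argue that on these same distributions, $\rho(G)$ is also $\Omega(n)$ versus $o(n)$, so any multiplicative-additive approximation for path cover must likewise distinguish the two distributions and thus requires $\Omega(\bar{d})$ queries.

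The lower bound $\rho(G) \geq \mu(G)$ is trivial (any matching is a valid path cover), so on the ``large'' distribution we immediately get $\rho(G) = \Omega(n)$. The task reduces to upper-bounding $\rho(G) = o(n)$ on the ``small'' distribution. The first step is to recall (or slightly modify) the Parnas--Ron construction so that the ``small'' instance has the structural property that \emph{most} vertices are either isolated or lie in a small connected component of bounded size --- for example, by taking a bipartite graph with a small ``bottleneck'' set on one side, so that both endpoints of essentially every edge must be incident to the bottleneck. Under such a structural constraint, the maximum path cover is bounded by a constant times the bottleneck size, giving $\rho(G) = o(n)$ for the same reason $\mu(G) = o(n)$.

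With the two distributions established, the proof concludes by a standard indistinguishability argument: suppose for contradiction that there is an algorithm $\mathcal{A}$ using $o(\bar{d})$ queries that outputs $\widetilde{\rho}$ satisfying $\alpha\rho(G) - \epsilon n \leq \widetilde{\rho} \leq \rho(G)$ with high probability. Choose parameters so that $\alpha \cdot \Omega(n) - \epsilon n$ is still larger than the ``small'' path cover value by a clear margin; then $\mathcal{A}$'s output would separate the two distributions with constant probability, contradicting the $o(1)$ transcript distance guaranteed by Parnas--Ron.

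The main obstacle is the second step: verifying that the Parnas--Ron hard instances, or a direct adaptation of them, satisfy the stronger property that $\rho(G)$ (not just $\mu(G)$) is small on the hard ``small'' instance. A matching being small does not in general imply a small path cover, since long paths use few matching edges but many total edges. I would overcome this either by tweaking the construction so that $|E(G)| = o(n)$ on the small side (which forces $\rho(G) \leq |E(G)| = o(n)$), or by using a bipartite construction with a tiny bottleneck vertex cover $C$ of size $o(n)$: since every edge of $G$ touches $C$, every path in a path cover alternates through $C$, and so $\rho(G) \leq 2|C| = o(n)$ by K\"onig's theorem (\Cref{prop:konig}) applied in spirit to bound path length. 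This gives the required separation and completes the $\Omega(\bar{d})$ lower bound.
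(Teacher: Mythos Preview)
Your reduction from the Parnas--Ron matching lower bound is the right idea and matches what the paper (implicitly) does: the paper gives no proof for this lemma at all, merely citing \cite{PARNAS2007183} and asserting that the matching lower bound ``also holds for maximum path cover.''

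However, you manufacture an obstacle that does not exist. You write that ``a matching being small does not in general imply a small path cover.'' In fact it does, up to a factor of two, and this is all you need. For \emph{any} graph $G$ we have
\[
\mu(G) \;\leq\; \rho(G) \;\leq\; 2\,\mu(G),
\]
since taking alternate edges along each path of a path cover with $k$ edges yields a matching of size at least $\lceil k/2\rceil$. Thus if $\mu(G)=o(n)$ on the Parnas--Ron ``small'' distribution, then automatically $\rho(G)\le 2\mu(G)=o(n)$; and $\rho(G)\ge\mu(G)=\Omega(n)$ on the ``large'' distribution. No modification of the construction, no bottleneck argument, and no appeal to K\"onig's theorem is needed---the sandwich above transfers the lower bound immediately. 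Equivalently, any $(\alpha,\epsilon)$-estimator for $\rho(G)$ is an $(\alpha/2,\epsilon)$-estimator for $\mu(G)$, so the $\Omega(\bar d)$ lower bound carries over verbatim.

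Your second workaround (a tiny vertex cover $C$ with $\rho(G)\le 2|C|$) is in fact just a special case of $\rho(G)\le 2\mu(G)$ via K\"onig, so you had the right inequality in hand but did not recognize that it resolves the issue in full generality.
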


\paragraph{Acknowledgements.} Mohammad Roghani and Amin Saberi were supported by NSF award 1812919 and ONR award 141912550. Soheil Behnezhad and Aviad Rubinstein were supported by NSF CCF-1954927, and a David and Lucile Packard Fellowship. Soheil Behnezhad was additionally supported by NSF Awards 1942123, 1812919 and by Moses Charikar's Simons Investigator Award.


\bibliographystyle{plainnat}
\bibliography{references}

\appendix

\section{Implementation Details}\label{sec:implementation}

In this section, we discuss why \cref{lem: implementation}, restated below, holds.

\oracleImplementation*

The proof of \cref{lem: implementation} uses standard ideas from the literature \cite{OnakRRR12,behnezhad2021}. The only modification, essentially, is to show that these algorithms also work for multi-graphs. Let us focus on the specific algorithm proposed in \cite[Appendix~A]{behnezhad2021}. Given the adjacency list of a graph $G=(V, E)$, it defines gives a procedure $\lowest(v,i)$ that first draws a random rank $E \to [0, 1]$ on each edge (implicitly), then for any input vertex $v$ and an integer $i \leq \deg_G(v)$, returns a vertex $u$ such that $(v, u)$ is the $i$-th lowest rank edge incident to $v$. It is proved in \cite{behnezhad2021} that if the procedure is called for a fix vertex $v$ and all indices $i$ with $1 \leq i \leq r$, then the total running time is $\tilde{O}(r)$. The only difference between the implementation of our algorithm and the one in \cite{behnezhad2021} is that we have multiple copies of a single edge in the original graph. First, we observe that the procedure $\lowest(v,i)$, in addition to returning the neighbor $u$, can also return the rank of the edge $(v, u)$. (This is explicitly computed by \lowest(v, i) in \cite{behnezhad2021}.) Now let $G'$ be the multigraph with $K$ copies of each edge of $G$. Instead of a multigraph, we can assume that we have $K$ copies of same graph $G$ called $G_1, G_2, \ldots, G_K$. Also, for each $i$, let $\lowest_{G_i}$ be the \lowest{} procedure corresponding to graph $G_i$. For each vertex $v$, we use a balanced binary search tree (BST) that stores the ranks of the lowest incident edge to $v$ in each graph. So at any point during the course of the algorithm, there are at most $K$ different values in the BST of vertex $v$. Now for the next \lowest{} query to the multigraph graph $G'$ for vertex $v$, we can return the minimum edge in the BST of vertex $v$. Since $K$ is a constant and the any query to a BST is answered in $O(\log n)$ time, the total running time will be the same as \cite[Appendix~A]{behnezhad2021} within a $O(\log n)$-factor.

\end{document}